\newcommand{\aBL}{\textsc{Balance-Load}}
\newcommand{\aRP}{\textsc{Randomized-Per\-mu\-ta\-tions}}
\newcommand{\aDP}{\textsc{De\-ter\-min\-is\-tic-Per\-mu\-ta\-tions}}
\newcommand{\aMDP}{\textsc{Modified-Det-Perm}}
\newcommand{\aEP}{\textsc{Effort-Priority}}
\newcommand{\aDMY}{\textsc{DMY}}
\newcommand{\PartOne}{\textsc{Part-One}}
\newcommand{\PartTwo}{\textsc{Part-Two}}
\newcommand{\PartThree}{\textsc{Part-Three}}
\newcommand{\DA}{\textit{Do-All}}
\newcommand{\Item}{\B\item}
\newcommand{\qed}{\hfill $\square$ \smallbreak}
\newcommand{\Paragraph}[1]{\BBB\paragraph{#1}}
\newcommand{\cC}{{\mathcal C}}
\newcommand{\cD}{{\mathcal D}}
\newcommand{\cE}{{\mathcal E}}
\newcommand{\cI}{{\mathcal I}}
\newcommand{\cJ}{{\mathcal J}}
\newcommand{\cM}{{\mathcal M}}
\newcommand{\cO}{{\mathcal O}}
\newcommand{\cS}{{\mathcal S}}
\newcommand{\cT}{{\mathcal T}}
\newcommand{\cW}{{\mathcal W}}
\newcommand{\FF}{\vspace*{\medskipamount}}
\newcommand{\FFF}{\vspace*{\bigskipamount}}
\newcommand{\B}{\vspace*{-\smallskipamount}}
\newcommand{\BBB}{\vspace*{-\bigskipamount}}
\newenvironment{proof}{\noindent{\bf Proof:}}{\qed}
\newcounter{innerlistcounter}
\newenvironment{innerlist}{
\begin{list}{
	{\bf \alph{innerlistcounter}.}}{
  	\usecounter{innerlistcounter}
   	\itemindent0em 
   	\leftmargin0em 
   	\setlength{\rightmargin}{\leftmargin}
   	\itemsep1ex }}{
\end{list}}
\newtheorem{theorem}{Theorem}
\newtheorem{corollary}{Corollary}
\newtheorem{definition}{Definition}
\newtheorem{lemma}{Lemma}
\newtheorem{fact}{Fact}
\newtheorem{proposition}{Proposition}
\newlength{\pagewidth}
\newlength{\captionwidth}
\begin{document}

\baselineskip 	3ex
\parskip 		1ex

\title{		Doing-it-All with Bounded Work and Communication 
					\footnotemark[1]\vfill}

\author{	Bogdan S. Chlebus \footnotemark[2]  \and
		Leszek G\k asieniec \footnotemark[3] \and
		Dariusz R. Kowalski \footnotemark[3] \and
		Alexander A. Schwarzmann \footnotemark[4]}

\footnotetext[1]{The results of this paper were announced in a preliminary form in~\cite{ChlebusGKS-DISC02} and published in their final version as~\cite{ChlebusGKS-IC2017}.
The work of the first author was supported by NSF Grants 0310503 and 1016847.
The work of the third author was supported by Polish National Science Center UMO-2015/17/B/ST6/01897.
The work of the fourth author was supported by NSF Grants 0311368 and 1017232.
}

\footnotetext[2]{Department of Computer Science and Engineering, University of Colorado Denver, Denver, Colorado 80217, USA.
}

\footnotetext[3]{Department of Computer Science, University of Liverpool, Liverpool L69~3BX, UK.}

\footnotetext[4]{	Department of Computer Science and Engineering, University of Connecticut, Storrs, Connecticut 06269, USA.
}

\date{}

\maketitle

\vfill


\begin{abstract}
We consider the \DA\ problem, where  $p$ cooperating processors need to complete $t$ similar and independent tasks
in an adversarial setting. Here we deal with a synchronous message passing system 
with processors that are subject to crash failures. 
Efficiency of algorithms in this setting is measured in terms of \emph{work} complexity 
(also known as total available processor steps) and \emph{communication} complexity 
(total number of point-to-point messages). 
When work and communication are considered to be comparable resources, then the overall efficiency is meaningfully expressed in terms of \emph{effort} defined as \emph{work} + \emph{communication}.
We develop and analyze a constructive algorithm that has work $\cO( t +   p \log p\, (\sqrt{p\log p}+\sqrt{t\log t}\, )  )$  and a nonconstructive algorithm that has work $\cO(t +p \log^2 p)$.
The latter result is close to the lower bound $\Omega(t + p \log p/ \log \log p)$ on work.
The effort of each of these algorithms is proportional to its work when the number of crashes is bounded above by $c\,p$,  for some positive constant $c < 1$.
We also present a nonconstructive algorithm that has effort $\cO(t + p ^{1.77})$.

\vfill

\noindent
{\bf Key words:} 
distributed algorithm, 
message passing, 
crash failures, 
scheduling tasks,
load balancing,
Ramanujan graphs.
\end{abstract}

\vfill

\thispagestyle{empty}

\setcounter{page}{0}

\newpage

\section{Introduction}

Performing a collection of tasks by processors prone to failures is among the fundamental problems in fault-tolerant distributed computing.
We consider the problem called \DA, where the processors cooperate on tasks that are similar, independent, and idempotent. 
Simple instances of this include checking all the points in a large solution space, attempting either to generate a witness or to refute its existence, and scheduling collections of tasks admitting ``at least once'' execution semantics.
We consider the synchronous setting with crash-prone processors
that communicate via point-to-point messages, 
where $t$ tasks need to be performed by $p$  processors subject to $f$ crash failures,
provided at least one processor does not crash (i.e., $f \le p-1$).

Synchronous message-passing solutions for \DA\ were first developed by Dwork et al.~\cite{DworkHW98} who estimated the performance of their algorithms in terms of \emph{effort} defined as the sum of task-oriented work and communication. 
\emph{Task-oriented work} counts only the processing steps expended on performing tasks and 
it discounts any steps spent idling or waiting for messages. 
\emph{Communication} costs are measured as the message complexity, calculated as the total number of point-to-point messages. 
Note that the time of algorithm executions may be very large, for example, when $f$ can be $p-1$, time
may be at least linear in $t$ for the cases where one remaining processor must perform all tasks.
Thus time is not normally used to describe the efficiency of \DA{} algorithms when large number of failures is allowed. 
We also measure algorithm performance in terms of \emph{effort}, except that we use a more conservative approach.
Instead of task-oriented work, we include \emph{available processor steps} complexity (or \emph{total work}) defined by Kanellakis and Shvartsman~\cite{KanellakisS92}  that accounts for all steps taken by each processor, including idling, until the processor either terminates the computation or crashes.
Thus we define the \emph{effort} of an algorithm to be $\cW + \cM$, where $\cW$ is its total work complexity and $\cM$ is its message complexity.
Other prior research focused on developing \DA\ algorithms that are efficient in terms of work, 
then dealing with communication efficiency as a secondary goal, e.g., using the \emph{lexicographic complexity}~\cite{PriscoMY94}.

Trade-offs  between work and communication in solutions of  \DA\  are to be expected, indeed, 
communication improves coordination among processors
with the potential of reducing redundant work
in unreliable systems.
There are two direct ways in which a processor can get to know that a certain task is complete
in a message-passing system:  
the processor can either perform the task, or it can receive a message  
that the task was completed by another processor. 
Note that \DA{} can be solved without any communication: 
simply have each processor perform each task.
The work of such an algorithm is $\cO(p\cdot t)$.
On the other hand, an algorithm may cause each processor to always share its knowledge with all other processors in an attempt to reduce work. 
This may result in efficient work, but the communication complexity 
of such an algorithm is $\Omega(p \cdot t)$: each time a processor completes a task, 
it sends a message to all other processors.
Thus it is highly desirable to develop algorithms for which both work is $o(p \cdot t)$
{and} communication is $o(p \cdot t)$.
This makes it meaningful to balance work and communication, and to consider them as comparable resources.
These observations motivate the use of the quantity $\cW + \cM$ as a unifying performance metric.

The performance bounds of our algorithms are expressed in terms of three parameters:
the number of processors~$p$, the number of tasks~$t$,  
and the number of crashes~$f$ that may occur in the course of an execution.
Our algorithms place no constraints on the relationship between $t$ and $p$;
these parameters are independent in our algorithms and the complexity bounds.
The only restriction on $f$ is that is that at least one processor does not crash,
i.e., $f<p$.

We say that a parameter is \emph{known} when it can be used in the code of an algorithm.
The parameters $p$ and $t$ are always known.
When the parameter $f$ appears in performance bounds, then this indicates that the algorithm
is designed to optimize performance for the number of crashes that is at most~$f$.
When $f<p$ is used as a parameter then $f$ is known.
When $f$ is not used in the code of an algorithm then it is only known that $f\le p-1$.
It so happens that if $f$ appears in an algorithm in this paper, then the corresponding
communication complexity depends on $f$, whereas the bounds on the work complexity involve only $p$ and~$t$.

We consider an adversarial setting, in which a nefarious adversary causes processors to crash.
An adversary is \emph{$f$-bounded} if $f<p$ is an upper bound on the number of crashes in any execution.
When stating a performance bound for a known $f<p$ we normally state that the bound holds 
``for an $f$-bounded adversary.''
The $(p-1)$-bounded adversary is called \emph{unbounded}.
An $f$-bounded adversary  is called \emph{linearly bounded} if $f\le c\,p$, for some positive constant $c<1$.
Our algorithms always solve \DA\ when exposed to the unbounded adversary, but their message complexity 
may  be especially efficient when the adversary is linearly-bounded. 

We call a deterministic algorithm \emph{constructive} if its code can be 
produced by a deterministic sequential algorithm in time polynomial in~$t$ and~$p$. 
This is in contrast with \emph{nonconstructive} algorithms that may rely on 
combinatorial objects that are only known to exist.
Methodologically, starting with a constructive algorithm, we trade constructiveness for 
better effort bounds in producing nonconstructive algorithms. 

We aim for algorithmic  solutions for \DA\ that attain good effort complexity, 
rather than seeking just work-efficient solutions.
Here a key challenge, besides tolerating crashes and 
controlling work, is to ensure that communication costs do not exceed work complexity.
Whereas any two processors can communicate in any step of computation, 
in our algorithms we limit communication by allowing messaging to take place
over certain constant-degree subnetworks.
To this end, we use constructive graphs with good ``expansion'' properties, and this contributes 
to the emerging understanding of how expansion-related properties of the underlying 
communication schemes can be used to improve fault tolerance and efficiency.


\Paragraph{Our results.}

We present a new way of structuring algorithms for the $p$-processor, $t$-task \DA\ problem that allows for both work \emph{and} communication to be controlled in the presence of adaptive adversaries.
We give a generic algorithm for performing work in systems with  crash-prone processors, 
and we parameterize it by $(i)$ task-assignment rules and $(ii)$ virtual overlay graphs superimposed on the underlying communication medium. 
We now detail our contributions.

\begin{itemize}
\Item[\sf I.]
We present a deterministic constructive algorithm, called \aBL, that uses a balancing task allocation policy (Section~\ref{sec:constructive}).
This algorithm solves  \DA{} in any execution with at least one non-crashed processor, and its performance is tuned to  a known upper bound $f$ on the number of crashes, where $f<p$.
The algorithm's work is $\cW=\cO( t +   p \log p\, (\sqrt{p\log p}+\sqrt{t\log t}\, ) )$, which does not depend on $f$,  while the message complexity does depend on~$f$.
When the adversary is additionally constrained to be linearly-bounded, the message complexity of the algorithm is  $\cM=\cO(\cW)$.
\end{itemize}

No prior algorithms using point-to-point messaging 
attained total work (available processor steps) 
that is both $o(p^2)$ and $o(t^2)$ against the $f$-bounded adversary, for any known $f<p$.
By using embedded graphs whose properties depend on $f$, each processor in our algorithm sends $\cO(1)$ messages in each round in the case of linearly-bounded adversaries,
making communication comparable to work.
(Section~\ref{sec:graphs} discusses the use of embedded graphs in communication.)

\begin{itemize}
\Item[\sf II.]
We develop a deterministic algorithm, called \aDP, that is more efficient than algorithm \aBL, but is nonconstructive (Section~\ref{sec:algorithm-optimized-for-work}).
This algorithm solves  \DA{} in any execution with at least one non-crashed processor, and its performance is tuned to  a known upper bound $f$ on the number of crashes, where $f<p$. 
This algorithm has work $\cW=\cO(t+p\log^2 p)$, which does not depend on~$f$, while the message complexity of the algorithm does depend on~$f$.
When the adversary is additionally constrained to be linearly-bounded, the message complexity of the algorithm is  $\cM=~\cO(\cW)$.
\end{itemize}

We note that the upper bound on work for this algorithm differs from the lower bound  $\Omega(t+p\log p/\log\log p)$ given 
in \cite{ChlebusK-RSA04,GeorgiouRS04} by the small factor $\log p \log\log p$.

\begin{itemize}
\Item[\sf III.]
We give a deterministic nonconstructive algorithm, called \aEP,
that is effort-efficient against the unbounded adversary (Section~\ref{sec:effort}).
The effort of this algorithm is $\cW+\cM=\cO (t+p^{1.77})$, for any unknown $f<p$.
\end{itemize}

This algorithm is obtained by combining algorithm \aDP\ with the algorithm 
of De~Prisco et al.~\cite{PriscoMY94} (although the authors did not consider the effort efficiency,
this was the first algorithm that obtained effort that is $\cO(t+p^2)$ against the unbounded adversary).


\newcommand{\RB}{\raisebox{3ex}{~}}
\newcommand{\LB}{\raisebox{-1.5ex}{~}}

\begin{table}[tp]
\begin{center}
\begin{tabular}{|@{\hskip -1pt}c@{\hskip 4pt} |l |l |}

\hline
\RB \LB
Paper & Work $\cW$ and message $\cM$ complexities
& Remarks \\
\hline
\hline
\RB \LB
\cite{DworkHW98}& $\cW=\cO(t+p)$& Task-oriented work  \\
\LB
(1992) & $\cM=\cO(p\sqrt{p})$&  \\
\hline
\RB \LB
\cite{PriscoMY94} & $\cW=\cO(t+p^2)$&\\
\LB
(1994) & $\cM=\cO(p^2)$&\\
\hline
\RB \LB
\cite{GalilMY95} & $\cW=\cO(t+p^2)$ & Known $f$ and $\varepsilon$   \\
\LB
(1995) & $\cM=\cO(f p^{\varepsilon} + p\log p)$ & \\
\hline
\RB \LB
\cite{ChlebusPS01} & $\cW=\cO\bigl(t\log f + p\frac{\log p\log f}{\log\log p}\bigr)$ & 
Reliable broadcast,  \vspace*{-1.5mm} \\
\RB \LB
(1997) & $\cM=\cO(t + p\frac{\log p}{\log\log p} + fp)$ & no restarts\\
\hline
\RB \LB
\cite{GeorgiouRS04} & $\cW=\cO\bigl(t\log f +p\log p \log f / \log \frac{p}{f} \bigr)$&Improved analysis for~\cite{ChlebusPS01}\\
\LB
(2004) & $\cM=\cO\bigl(t+p\log p / \log \frac{p}{f}+fp\bigr)$& above when $f \le p/\log p$\\
\hline
\RB \LB
\cite{ChlebusPS01} & $\cW=\cO(t\log f + p\log p\log f)$ &  Reliable broadcast; restated \\
\LB
(1997) & $\cM=\cO(t+p\log p +f p)$& for less than $p$ restarts\\
\hline
\RB\LB
\cite{ChlebusGKS-JDA08} & $\cW=\cO\bigl(t+p \frac{\log^2 p}{\log\log p}\bigr)$ & Randomized solution,\\
\RB \LB
(2002) & $\cM=\cO\bigl(\bigl(\frac{p}{p-f}\bigr)^{3.31}\,\cW\, \bigr)$ & with known $f$\\
\hline
\RB \LB
This paper & $\cW=\cO(t +   p \log p\, (\sqrt{p\log p}+\sqrt{t\log t}\, ) )$ & Known $f$ \\
\LB
(2002) & $\cM=\cW=\cO\bigl(\bigl(\frac{p}{p-f}\bigr)^{3.31}\, \cW\, \bigr)$ &\\
\hline
\RB \LB
This paper& $\cW=\cO(t+p \log^2 p)$ &Nonconstructive solution, \\
\LB
(2002) & $\cM=\cO\bigl(\bigl(\frac{p}{p-f}\bigr)^{3.31}\, \cW\, \bigr)$ & with known $f$ \\
\hline
\RB \LB
This paper& $\cW=\cO\bigl(t+p^{1.77}\bigr)$ & Nonconstructive solution,\\
\LB
(2002) & $\cM=\cO\bigl(t+p^{1.77}\bigr)$&optimized for \emph{effort} \\
\hline
\RB \LB
\cite{GeorgiouKS05} & $\cW=\cO\bigl(t+p^{1+\varepsilon}\bigr)$&Nonconstructive solution, \\
\LB
(2003) & $\cM=\cO\bigl(t+p^{1+\varepsilon}\bigr)$&optimized for \emph{effort}, $\varepsilon$  known\\
\hline
\end{tabular}
\parbox{\pagewidth}{\FF\caption{\label{summary} 
A summary of known solutions for \DA\ in the message-passing model.
The year of the initial  announcements of the results at a conference is given below each citation (in parenthesis).
Algorithms are deterministic and constructive, unless stated otherwise.
Broadcast is not assumed to be reliable, unless stated otherwise.
The upper bound $f$ on the number of crashes does not appear in the code, unless it is stated in the remarks that this quantity is known. 
The complexity is given in terms of work $\cW$ and message $\cM$ complexities.
Work is called task-oriented when idling and waiting are not accounted for, but each instance of performing  of a task contributes one unit of work.
}}
\end{center}
\end{table}


\Paragraph{Previous and related work.}

Dwork et al.~\cite{DworkHW98} were the first to study the \DA\ problem in the message-passing setting, assessing efficiency in terms of \emph{effort}.
De~Prisco et al.~\cite{PriscoMY94} were the first to use the available processor steps as the work measure for message-passing algorithms. 
They gave an algorithm with work $\cO(t+(f+1)p)$ and communication $\cO((f+1)p)$.
Galil et al.~\cite{GalilMY95} gave an algorithm with better communication cost, $\cO(fp^\varepsilon+\min\{f+1,\log p\}p)$, 
for any $\varepsilon>0$. 
A popular algorithmic paradigm used in prior research was to disseminate 
local knowledge among groups of processors by implementing \emph{coordinators}, who first collect and then spread information.
Chlebus et al.~\cite{ChlebusPS01} developed algorithms based on aggressive coordination, where the number of appointed coordinators grows exponentially following crashes of all previously appointed coordinators.
Their algorithms rely on \emph{reliable multicast}, where if a sender crashes during a multicast, 
then either none or all of the messages are delivered to the non-faulty recipients.
One of their algorithms has work $\cO((t + p\log p/\log\log p)\log f)$  and communication $\cO(t + p\log p/\log\log p + fp)$; 
these bounds were improved by Georgiou et al.~\cite{GeorgiouRS04} for $f\le p/\log p$.
Recent work by Davtyan et al.~\cite{DPGS14} explores a way to use an unreliable broadcast and presents an experimental study.
Another algorithm in~\cite{ChlebusPS01} incorporates restarted processors. 
It is the only known algorithm able to deal with restarts efficiently; it has work $\cO((t + p\log p + f)\cdot \min\{\log p,\log f\})$ and its message complexity is $\cO(t+p\log p+ f p)$.
In~\cite{ChlebusGKS-JDA08}, we gave a randomized solution with work $\cO(t+p \frac{\log^2 p}{\log\log p})$ for the $f$-bounded adversary, for any known $f<p$.
This paper considers the same generic algorithm and uses the same overlay graphs as in the current paper, but with a randomized rule for  tasks selection by processors.
Georgiou et al.~\cite{GeorgiouKS05} subsequently developed an algorithm with effort $\cO(t+p^{1+\varepsilon})$ against the unbounded adversary, for any known~$\varepsilon$, using an approach based on gossiping.
A summary of  results for \DA\ related to the results of this paper is given in Table~\ref{summary}.
The book~\cite{GeorgiouS-book08} by Georgiou and Shvartsman gives a representative account of the topics in \emph{Do-All} computing.

Table~\ref{summary} does not present results dealing with specialized adversaries and 
models of inter-processor communication;
we review such work here.
Chlebus and Kowalski~\cite{ChlebusK-RSA04} studied \DA\ with crashes in the presence of a weakly-adaptive linearly-bounded adversary, giving
a randomized algorithm with $\cO\big(t + p \big(1+\log^*\big(\frac{t\log p}{p}\big)\big)\big)$ 
expected effort.
This bound is provably better than for any deterministic algorithm subject to the $\Omega(t+p\log p/\log\log p)$ lower bound \cite{ChlebusK-RSA04, GeorgiouRS04} on work.
They also present a deterministic algorithm  that schedules tasks by  balancing them perfectly and performs $\cO(t+p\log p/\log\log p)$ work against the linearly-bounded adversary;
this implies that $\Theta(t+p\log p/\log\log p)$ is precisely the optimum  work in such a setting.
Kowalski and Shvartsman~\cite{KowalskiS05} studied the \DA\ problem in the message-passing model with restricted asynchrony, where every message delay is at most~$d$,  for some value~$d$ unknown to the algorithm.
They showed that  $\Omega(t+p\,d \log_d p)$ is a lower bound on the expected work and developed  a deterministic algorithm with work $\cO((t+pd)\log p)$. 
Georgiou et al.~\cite{GeorgiouRS04} considered an iterated version  of \DA.
In~\cite{GeorgiouRS05} the same authors studied a problem, called \emph{Omni-Do}, that is an on-line version of \DA\ in asynchronous systems with partitionable networks. 
They gave a randomized algorithm achieving optimal competitive ratio against oblivious adversaries. 
Fern\' andez et al.~\cite{FernandezGRS05} considered \DA\  with Byzantine failures.
Chlebus et al.~\cite{ChlebusKL-DC06} and Clementi et al.~\cite{ClementiMS02} studied \DA\ 
in a model with communication over a multiple-access channel.
The former~\cite{ChlebusKL-DC06} considered the impact of collision detection and randomization  on the complexity of protocols.
The latter~\cite{ClementiMS02} gave tight bounds on work and time for such deterministic algorithms against $f$-bounded adversaries.

The use of \emph{gossip} emerged as a paradigm for solving  \DA\ and related problems
 in crash-prone message-passing systems after the conference 
 version~\cite{ChlebusGKS-DISC02} of this paper was published.
It was first applied by Chlebus and Kowalski~\cite{ChlebusK-JCSS06} 
to obtain a message-efficient solution to consensus in synchronous message passing.
This paper reformulated gossiping in a crash-prone environment as a problem in which  each processor starts with a rumor to be ideally learned by all processors.
Here, after a processor $v$ crashes, any other processor must either learn $v$'s rumor 
or the fact that $v$  crashed.
That paper proposed solutions to gossiping with $\cO(p\text{ polylog } p)$ message
complexity for $p$ processors and suitably restricted adversaries, and with $\cO(p^{1.77})$ message complexity for unbounded adversaries, 
using the same underlying communication overlay 
topologies  as in the current paper.
Georgiou et al.~\cite{GeorgiouKS05} showed how to obtain  
$\cO(p^{1+\varepsilon})$ communication for the unbounded adversary 
while maintaining suitably good time performance, for any fixed $\varepsilon>0$.
With this, they gave an algorithm for  \DA\ with  $\cO(t+p^{1+\varepsilon})$  effort
against the unbounded adversary, again for any fixed $\varepsilon>0$.
Kowalski et al.~\cite{KowalskiMS-ICDCS05} showed how 
to obtain a \emph{constructive} solution 
for gossiping with crashes with  $\cO(p^{1+\varepsilon})$ communication
for the unbounded adversary, for any fixed $\varepsilon>0$.
Chlebus and Kowalski~\cite{ChlebusK-DISC06} developed a constructive gossiping solution  with time $\cO(\text{polylog } p)$  while sending  $\cO(p\text{ polylog } p )$ point-to-point messages, and demonstrated its efficiency when applied to consensus.
Georgiou et al.~\cite{GeorgiouGGK13} studied the impact of adversarial models, 
both oblivious and adaptive, on 
the efficiency of gossiping in asynchronous message passing.

Problems related to \DA\  were studied in distributed systems with shared read/write registers.
Bridgland and Watro~\cite{BridglandW87} considered tasks in asynchronous systems 
with crash-prone processors under the constraint that each processor 
can perform at most one task.
Another related problem is the problem of collecting all values originally stored in
a set of registers. Here each process proceeds by performing reads and storing the collected values in its own register 
(each register has the capacity to store all needed values).
A read from register $x$ followed by a write to register $y$ adds 
the contents of register $x$ to the contents of register~$y$.
Since each register storing a value needs to be read at least once, this resembles 
the problem of performing each task at least once.
A solution to the problem of collecting values was first used by 
Saks et al.~\cite{SaksSW91} in their consensus algorithm.
Chlebus et al.~\cite{ChlebusKS-STOC04} showed that $n$ processors in an asynchronous system  can collect $n$ values using $\cO(n)$ registers in $\cO(n\text{ polylog } n)$ total work.
\emph{Write-All}, the problem introduced by Kanellakis and Shvartsman~\cite{KanellakisS92}, 
is about writing at least once to each  register in a collection.
Here writing  to a  register is viewed as a task in the same sense as in \DA.
Chlebus and Kowalski~\cite{ChlebusK-STOC05} showed that $n$ processors 
can write to each of $\cO(n)$  registers using $\cO(n\text{ polylog }n)$ total work.
Alistarh et al.~\cite{AlistarhBGG12} studied \DA\ in 
an asynchronous shared-memory model and gave
  a deterministic non-constructive algorithm that performs $t$ tasks on $p$ processors with work $\cO(t+p\text{ polylog}(p+t))$.
Kentros et al.~\cite{KentrosKNS09} introduced the \emph{At-Most-Once} problem where
the goal is for a set of processors to perform as many tasks as possible
provided that each task is performed at most once.
They studied this problem for shared memory with asynchronous crash-prone processors,
where efficiency is measured by  \emph{effectiveness} that factors in the number of tasks completed.
They gave a lower bound on the number of performed tasks, 
showing that it was impossible to perform all  tasks, 
and  gave an algorithm with effectiveness close to the lower bound.
Kentros and Kiayias~\cite{KentrosK13} solve \emph{At-Most-Once} 
 with improved effectiveness.
Kentros~et al.~\cite{KentrosKK12} introduced the \emph{Strong-At-Most-Once} problem, 
in which all tasks must be performed in the absence of crashes, 
and showed that it has consensus number~$2$.
Censor-Hillel~\cite{Hillel10} used a randomized wait-free 
solution to multi-valued consensus  as a building block in an algorithm for \emph{At-Most-Once} of optimal effectiveness.

In recent years, Internet supercomputing has become an increasingly popular 
means for harnessing the power of a vast number of interconnected computers. 
With this come the challenges of marshaling distributed resources and dealing with failures. 
Traditional centralized approaches employ a master processor and many worker processors 
that execute a collection of tasks on behalf of the master.
Despite the simplicity and advantages of centralized schemes, the master processor is 
a performance bottleneck and a single point of failure. Additionally, a phenomenon of increasing
concern is that workers may return incorrect results. Thus completely decentralized
fault-tolerant solutions are of interest. Representative works in this area include
the papers by Davtyan et al.~\cite{DavtyanKS-OPODIS2011,DavtyanKRS15}, also formulated for synchronous, crash-prone, message-passing settings, but with different
failure models than in this work. 
We also note that the goal in those works is for all non-crashed processors
to learn the results of all tasks, while in our work it is sufficient for the 
processors to know that the tasks have been performed.

The solutions to \DA\ given in~\cite{PriscoMY94} and~\cite{GalilMY95} employed a group membership routine (called checkpointing) that enables all non-crashed processors to obtain the same knowledge about the set of the remaining processors.
In the current paper, we also use a solution to a membership problem, which is integrated into the algorithm described in Section~\ref{sec:effort}.
Group membership abstractions were already proposed by Birman and Joseph~\cite{BirmanJ87}, and intensely researched since then.
Solutions to problems related to that of group membership are discussed in the books by Cachin et al.~\cite{CachinGR2011} and Birman~\cite{Birman2012}.

Expanders are small-degree graphs that have strong connectivity-related properties.
The ``expansion'' properties used to define expanders usually mean either that ``small'' sets of nodes have ``many'' neighbors, or that any two ``large'' disjoint sets of nodes are connected by at least one edge.
The former approach is discussed by Chung~\cite{Chung-book97}.
The latter is considered by Pippenger~\cite{Pippenger87}, who defined an \emph{$a$-expander} to be a graph where, for any two disjoint sets $X_1$ and~$X_2$ of nodes of size $a$ each, there is an edge $(x_1,x_2)$
in the graph, where $x_1\in X_1$ and $x_2\in X_2$.

There is a substantial body of research on fault-tolerance in networks with expansion properties. 
Here we cite the most relevant work.
Alon and Chung~\cite{AlonC88} showed that, for every $0<\varepsilon< 1$ and every integer $m>0$, there are constructive graphs with $\cO(m/\varepsilon)$ vertices and maximum degree $1/\varepsilon^2$ that have the property that after removing any $1-\varepsilon$ fraction of 
 vertices or edges the remaining graph contains a path of length~$m$.
 They showed this by using the expanders of Lubotzky et al.~\cite{LubotzkyPS88} with suitably chosen parameters.
 A protocol is said to achieve an $h(f)$-agreement if, in each execution with at most $f$ processors failures,  at least $p-h(f)$ non-faulty processors eventually decide on a common value. 
A protocol achieves an almost-everywhere agreement if it achieves $h(f)$-agreement for some function $h(f)\le \mu f$, where the constant~$\mu$ is independent of $f$ and of the size of the network.
Upfal~\cite{Upfal94} showed how an almost-everywhere agreement can be achieved with a linear number of faults, extending the result by Dwork et al.~\cite{DworkPPU88}. 
This result uses the same family of expanders from~\cite{LubotzkyPS88}, 
and it is shown that 
if one removes from the graph any subset of vertices of cardinality that
is a constant fraction of the total number of vertices,  the diameter of
the remaining linear-size graph is  logarithmic.
The graph construction in Section~\ref{sec:graphs}, in which we consider powers of Ramanujan graphs,  is an extension of that approach.

Magnifying the expansion properties of a graph by taking its power was already suggested 
by Margulis~\cite{Margulis73},  and used, for instance by Diks and  Pelc~\cite{DiksP00}, 
in the context of broadcasting and fault diagnosis. 
Goldberg et al.~\cite{GoldbergMP94} developed a deterministic distributed algorithm to reconfigure an $n$-input  multi-butterfly network with faulty switches so that it still can route any permutation between some sets of~$n-\cO(f)$ inputs and outputs, for any number of~$f$ faulty switches.
Chlebus et al.~\cite{ChlebusKS-STOC04} showed that $a$-expander networks have the property that for any set of~$3a$ non-faulty nodes there are $2a$ of these nodes that can communicate  in $\cO(\log a)$ time only among themselves.

Linearly-bounded adversaries have been used in other contexts, e.g., 
Lamport et al.~\cite{LamportSP82} showed that the consensus problem for Byzantine faults in synchronous settings can be solved against $f$-bounded adversaries for $f<p/3$.
Diks and Pelc~\cite{DiksP00} studied deterministic broadcast and fault diagnosis in complete networks, with processor faults controlled by  linearly-bounded adversaries.
Kontogiannis et al.~\cite{KontogiannisPSY00} studied a fault-prone 
bulk-synchronous parallel computer model, where processor faults are controlled by a variant of oblivious linearly-bounded adversaries.


\Paragraph{Document structure.}

We define  models and efficiency measures in Section~\ref{sec:model}.
We define  communication schemes based on  overlay graphs and give their  properties  in Section~\ref{sec:graphs}. 
Section~\ref{sec:design-algorithms} contains our generic algorithm 
and its analysis.
In the sequel we instantiate the generic algorithm and analyze 
efficiencies of the resulting derivations.
Section~\ref{sec:constructive} presents a constructive algorithm based on load balancing.
Section~\ref{sec:algorithm-optimized-for-work} describes an algorithm optimized for work.
Section~\ref{sec:effort} gives an algorithm optimized for effort.
We conclude  in Section~\ref{sec:discussion}.

\section{Models and Technical Preliminaries} 

\label{sec:model}

We consider a  system of~$p$ processors with unique processor identifiers in the interval~$[1,p]$.
Each processor knows the value of~$p$, and $p$ may appear in the code of the algorithms.


\Paragraph{Communication.}

Processors communicate by sending messages.
The size of a message is assumed to be sufficiently large to contain the local state of a processor, as determined by the executed distributed algorithm. 
Communication is modeled at the ``transport layer'' level of abstraction, meaning that the network appears to the processors to be completely connected. 
Thus we abstract away the physical network, where they may or may not be direct links between processors, with routing provided by the ``network layer.''


\Paragraph{Synchrony.}

The system is synchronous, where the local steps of all processors are governed by a global clock.
An execution is structured as a sequence of  rounds.
A \emph{round} is defined to be a fixed number of clock cycles that is sufficiently large but minimal 
for the following to occur: 
(1) a processor can send messages to several other processors in one round,
(2) messages sent during a round are delivered and received later in that round, 
(3) the duration of a round is sufficient to perform the needed local processing.


\Paragraph{Failures and adversaries.}

Processors fail by crashing.
We denote by $f$ an upper bound on the number of failures that may occur in an execution.
We do not consider processor restarts and so $f$ never exceeds~$p$.
Multicast messaging is \emph{not} assumed to be atomic with respect to failures: 
if a processor crashes while attempting to multicast a message, then some arbitrary subset of the destinations may receive the message.
We assume that no messages are lost or corrupted in transit.

The occurrence of failures is governed by adversarial models that determine upper bounds on the number of processor failures and the specific subsets of processors that may crash at some instant.
An adversary is called \emph{$f$-bounded} when at most $f$ crashes may occur in an execution.
An adversary is  \emph{linearly bounded} when it is $f$-bounded and~$f$ satisfies $f\le cp$, for a positive constant $c<1$, which means that $p-f=\Omega(p)$.
The $(p-1)$-bounded adversary is also called \emph{unbounded}.
The unbounded adversary is subject only to the weakest restriction that at least one processor does not crash in any execution.
Adversaries are always \emph{adaptive} in this paper, in the sense that the decisions about the failures of specific processors are made on-line.
The adversaries are subject only to the upper bound on the number of crashes allowed to occur in an execution, unless stated otherwise (as in Section~\ref{sec:restricted-adversaries}).


\Paragraph{The \DA{} problem and tasks.}

The \DA{} problem is to perform a given set of $t$ tasks in the presence of an adversary.
The tasks have unique identifiers in the interval~$[1,t]$. 
Each processor knows the value of~$t$, and $t$ may appear in the algorithm code.
Any processor can perform any task when given the tasks's identifier. 
Tasks are \emph{similar} in terms of time needed to perform one; we assume that any task can
be performed within a single round and that one round is needed to perform one task by one processor.
Tasks are \emph{idempotent}, obeying at-most-once execution semantics: any task can be performed multiple times and concurrently. 
Tasks are also \emph{independent}: they can be performed in any order.


\Paragraph{Correctness and termination.}

A processor may either voluntarily stop working  in a given execution,
or it may be forced to do so by an adversary that causes it to fail.
In the former case, we say that the processor \emph{halts}, and in the latter case the processor \emph{crashes}.
Processors may halt in different rounds.
Halted processors are considered non-faulty.
We say that an algorithm \emph{solves the \DA\ problem} against a given adversary, 
if the following two conditions are satisfied in any execution consistent with the adversary:
\begin{enumerate}
\Item
Each task is eventually performed by some processor (or processors). 
\Item
\label{cor}
Each processor eventually halts, unless it crashes. 
\end{enumerate}
Our algorithms solve \DA\  in the presence of the unbounded adversary.
Observe that this form of correctness implies that when a processor~$v$ halts, 
then all the tasks must have been performed.
This is because if a non-faulty processor halts when not all tasks are complete, 
then the unbounded adversary can immediately crash all other processors,
and thus there will remain tasks that have not been completed.

We say that an algorithm \emph{terminates} in a round if this is the first round
where each processor either halted or crashed.
It is not a requirement that algorithms terminate
by having all correct processors halt simultaneously.


\Paragraph{Performance metrics.}

We consider work  and communication as metrics of performance of algorithms. 
\emph{Work} $\cW$ is the maximum over all executions of  the sum, for all processors, of the number of rounds a processor performs in an execution until the processor crashes or the algorithm terminates, whichever occurs first. 
This means that we count the available processor steps~\cite{KanellakisS92}. 
In particular, halted processors continue contributing to work until the algorithm terminates.
\emph{Communication~$\cM$} is the maximum over all executions of the total number of point-to-point messages  sent in an  execution;
multicasts are accounted for in terms of the point-to-point messages according to the number of destinations for each multicast.
We define \emph{effort} as $\cE=\cW+\cM$.

\section{Graphs and Communication} 

\label{sec:graphs}

To achieve our efficiency goals we control the communication costs of our algorithms
by allowing processors to send messages only to a selected subset of processors during a round.
In this section, we specify graphs used by the algorithms to limit communication. 
Each such graph forms a conceptual overlay network over the assumed complete network.

We consider simple graphs, that is, undirected graphs with at most one edge between any two nodes.
A graph $G=(V,E)$ is defined in terms of its set of nodes~$V$ and the set of edges~$E$.
We denote by $|G|$ the number of nodes in the graph.
A graph $H=(V',E')$ is a \emph{subgraph of~$G=(V,E)$}, 
denoted by $H\subseteq G$, when $V'\subseteq V$ and $E'\subseteq E$.
A subgraph~$H=(V',E')$ of~$G=(V,E)$ is an \emph{induced subgraph of~$G$} when, 
for any  pair of nodes $x$ and $y$ in~$V'$, if  $(x,y)$ is an edge in~$E$ then it is also an edge in~$E'$.
An induced subgraph~$H=(V',E')$ is said to be \emph{induced by~$V'$}, 
since the set of nodes $V'$ determines the edges in~$E'$.

If $(x,y)$ is an edge in a graph, we say that node~$x$  is a \emph{neighbor of~$y$}
(and $y$ is a neighbor of $x$).
For a subset $V_0\subseteq V$,  we define the set of \emph{neighbors of~$V_0$}, denoted $N_G(V_0)$, to be the set of all nodes of~$G$ that have neighbors in~$V_0$.
Extending the neighborhood notation, we denote by $N^k_G(V_0)$, for a positive integer $k$, to be 
the set of all nodes in~$G$ connected to at least one node in~$V_0$ by a path of length at most~$k$.
The graph $G^k = (V,E')$ is called the \emph{$k^{th}$ power of graph~$G$} if 
for any pair of nodes $u$ and $v$ in~$V$, $(u,v)$ is an edge in~$E'$ 
if and only if there is a path between $u$ and~$v$ in~$G$ of length at most~$k$.
Observe that for any $V_0\subseteq V$ we have $N^k_G(V_0) = N_{G^k}(V_0)$.

We model each processor as a node, and any pair of processors that are to communicate regularly is modeled as an edge of an \emph{overlay graph}.
For $p$ processors and a bound $f<p$ on the number of crashes, the suitable overlay graph is denoted by~$G(p,f)$.
Each crashed or halted processor is removed from this graph, 
causing the remaining portion of the overlay graph to evolve through a sequence of subgraphs, 
until the graph vanishes altogether when the algorithm terminates.

Definitions~\ref{def:compactness} and~\ref{def:subgraph-property} below are parameterized by three constants $\alpha$, $\beta$ and~$\gamma$, where $\alpha$ is real such that $0<\alpha<1$, and $\beta$ and $\gamma$ are positive integers, and also by the number of failures~$f$ that are tolerated in an execution; we will choose
the specific constants $\alpha$, $\beta$ and~$\gamma$ later.
We use the following notation for logarithms: $\lg x$ means $\log_2 x$, 
while $\log x$ is used when the constant base of the logarithm does not matter, as in the asymptotic notation $\cO(\log n)$.


\begin{definition}
\label{def:compactness}

Let $f$ be an upper bound on the number of crashes.
We say that a subgraph~$H\subseteq G(p,f)$ is \textbf{compact} if $|H|\ge \alpha(p-f)$ and the diameter of~$H$ is at most $\beta \lg p+\gamma$.
\qed
\end{definition}

For a graph~$G$, a \emph{subgraph function $P$ on $G$} assigns an induced subgraph $P(G_0)\subseteq G$ to any induced subgraph~$G_0\subseteq G$.


\begin{definition}
\label{def:subgraph-property}

Graph $G(p,f)$ has \textbf{the subgraph property for~$f$} if there exists a subgraph function~$P$ on $G(p,f)$ that satisfies the following conditions for induced subgraphs $G_0$ and $G_1$, each of at least $p-f$ nodes:
\begin{enumerate}
\item
$P(G_0)\subseteq G_0$; 
\item
$|P(G_0)|\ge\alpha |G_0|$;
\item
The diameter of~$P(G_0)$ is at most $\beta\lg p +\gamma$;
\item
If $G_0\subseteq G_1$ then $P(G_0)\subseteq P(G_1)$.
\qed
\end{enumerate}
\end{definition}


\Paragraph{Graphs with good expansion properties.}

We denote by $L(n,d)$ the constructive Ramanujan graph on $n$ nodes and of node degree~$d$ as given by Lubotzky et al.~\cite{LubotzkyPS88}, for positive integers $n$ and~$d$.
A self-contained exposition of this construction and a discussion of the expansion-type properties of the graphs are  given by Davidoff et al.~\cite{DavidoffSV-book03}; Theorem~4.2.2 therein is relevant in particular.
Any number~$d$ such that $d -1$ is an odd prime can be chosen as a node degree of~$L(n,d)$. 
The number of nodes~$n$ is either of the form $r(r^2-1)/2$ or of the form $r(r^2-1)$, where $r>2\sqrt{d-1}$ is an odd prime, depending on whether or not the prime $d -1$ is a quadratic residue modulo~$r$, respectively.
We will use a fixed  node degree denoted  by~$\Delta_0$. 

Given a number $p$ of processors, we set $x$ to the smallest positive integer with the property that $x(x^2-1)/2\ge p$.
Let $r_p$ be a prime number between $x$ and~$2x$ that must exist by Chebyshev's Theorem on the distribution of prime numbers.
Consider the corresponding Ramanujan graph of node degree $\Delta_0$ and of either $n_p=r_{p} (r_{p}^{2}-1)/2$ or $n_p=r_p(r_p^2-1)$ nodes, assuming additionally that $n_p$ is sufficiently large. 
This graph $L(n_p,\Delta_0)$ has the properties that $n_p\ge p$ and $n_p=\Theta(p)$. 
If $n_p=p$ then nothing else needs to be done to model the $n_p$ processors.
Otherwise, we let each processor simulate~$\cO(1)$ nodes, so that $n_p=\Theta(p)$ nodes of $L(n_p,\Delta_0)$ are simulated.
The simulating set of $p$ processors are prone to crashes, and when such a simulating processor crashes,  then this results in $\Theta(1)$ crashes of the simulated nodes. 
For the simulation to provide the given results, it is sufficient  for the presented constructions to have the property that replacing~$p$ by~$c p$ and $f$ by~$c f$, where $c>1$ denotes the number of processors simulated by one simulating process, preserves the validity of the used arguments and their conclusions. 
It is actually only Lemmas~\ref{lem:degree-power-of-Lp}, \ref{lem:subgraph-property} and~\ref{lem:disjoint-are-connected} in this Section that directly refer to the properties of the  graphs~$L(p,\Delta_0)$, and they pass the test, by inspection.  
This allows us to simplify the notation, so that henceforth we assume that~$n_p=p$ and refer to the graphs~$L(p,\Delta_0)$ by~$L(p)$.

The expansion of a graph can be estimated by the eigenvalues of its adjacency matrix.
The matrix is symmetric, hence its eigenvalues are real.
Let the eigenvalues be denoted by 
\[
\lambda_0\ge \lambda_1\ge\ldots\ge \lambda_{p-1}\ , 
\]
for a graph of~$p$ vertices.
The largest eigenvalue $\lambda_0$ of a regular graph is equal to its degree,
which in the case of the graphs~$L(p)$ is~$\Delta_0$.
The graphs~$L(p)$ have an additional property that their second largest in absolute value
eigenvalue  $\lambda=\max\{|\lambda_1|,|\lambda_{p-1}|\}$ satisfies the inequality 
\begin{equation}
\label{eqn:ramanujan}
\lambda\le 2\sqrt{\Delta_0 -1}\ ,
\end{equation} 
as shown in~\cite{DavidoffSV-book03,LubotzkyPS88};
this is the property that defines \emph{Ramanujan Graphs}
among regular connected graphs of the node degree~$\Delta_0$.

We will refer to the following two facts known from the literature.


\begin{fact}[\cite{Tanner84}] 
\label{fact:tanner}

For a regular graph~$G$ with $p$ nodes and a subset of nodes $V_0$, the size of~$N_G(V_0)$ can be estimated by  the two largest eigenvalues of~$G$ as follows:
\[
|N_G(V_0)| \ge 
\frac{\lambda_0^2 |V_0|}{\lambda^2+(\lambda_0^2-\lambda^2)|V_0|/p}\ .
\]
\end{fact}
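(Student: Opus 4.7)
The plan is to prove the bound by combining a Cauchy--Schwarz step (to convert a degree-sum identity into a lower bound on $|N_G(V_0)|$) with a spectral decomposition (to upper-bound the squared norm that appears in Cauchy--Schwarz). Let $A$ be the $p\times p$ adjacency matrix of $G$. Since $G$ is $\lambda_0$-regular, $A$ is symmetric, the normalized all-ones vector $v_0 = p^{-1/2}\mathbf{1}$ is a top eigenvector with eigenvalue $\lambda_0$, and $A$ admits an orthonormal eigenbasis $v_0,v_1,\ldots,v_{p-1}$ with real eigenvalues $\lambda_0,\lambda_1,\ldots,\lambda_{p-1}$ satisfying $|\lambda_i|\le\lambda$ for $i\ge 1$.

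First I would introduce the vector $y := A\mathbf{1}_{V_0}$ and observe that its $v$-th coordinate equals the number of neighbors of $v$ in $V_0$, so that $y$ is supported on $N_G(V_0)$. Summing these coordinates and using regularity gives $\sum_v y_v = \lambda_0|V_0|$, since every vertex of $V_0$ contributes exactly $\lambda_0$ incidences. Applying Cauchy--Schwarz on the support of $y$ then yields
\[
(\lambda_0|V_0|)^2 \;=\; \Bigl(\sum_{v\in N_G(V_0)} y_v\Bigr)^2 \;\le\; |N_G(V_0)|\cdot\|y\|^2 ,
\]
so $|N_G(V_0)|\ge \lambda_0^2 |V_0|^2/\|y\|^2$. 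The second step is to bound $\|y\|^2$ spectrally. Expanding $\mathbf{1}_{V_0} = \alpha_0 v_0 + \sum_{i\ge 1}\alpha_i v_i$, Parseval gives $\alpha_0 = |V_0|/\sqrt{p}$ and $\sum_{i\ge 1}\alpha_i^2 = |V_0|-|V_0|^2/p$, and therefore
\[
\|y\|^2 \;=\; \lambda_0^2\alpha_0^2 + \sum_{i\ge 1}\lambda_i^2\alpha_i^2 \;\le\; \lambda_0^2\,\frac{|V_0|^2}{p} + \lambda^2\Bigl(|V_0|-\frac{|V_0|^2}{p}\Bigr) \;=\; |V_0|\Bigl(\lambda^2 + (\lambda_0^2-\lambda^2)\,\frac{|V_0|}{p}\Bigr).
\]
Substituting this into the previous display and cancelling a factor of $|V_0|$ gives exactly the claimed inequality.

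I do not expect a real obstacle here: both ingredients are entirely standard, and the crucial structural fact --- that $y$ is supported on $N_G(V_0)$ --- follows immediately from the definition of the neighborhood. The only point needing minor care is the estimate $\sum_{i\ge 1}\lambda_i^2\alpha_i^2 \le \lambda^2\sum_{i\ge 1}\alpha_i^2$, which requires $\lambda$ to dominate $|\lambda_i|$ in \emph{absolute value} for all $i\ge 1$; this is precisely why the definition $\lambda = \max\{|\lambda_1|,|\lambda_{p-1}|\}$, rather than just $|\lambda_1|$, is used.
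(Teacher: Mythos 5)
The paper does not prove this statement: it records it as a cited Fact from Tanner's 1984 paper, so there is no in-paper argument to compare against. Your proof is correct and is in fact the standard proof of Tanner's vertex-expansion bound: setting $y = A\mathbf{1}_{V_0}$, the support of $y$ is exactly $N_G(V_0)$ by the paper's definition of neighborhood, Cauchy--Schwarz on that support gives $|N_G(V_0)| \ge (\lambda_0|V_0|)^2/\|y\|^2$, and the spectral expansion of $\mathbf{1}_{V_0}$ (using $\alpha_0^2 = |V_0|^2/p$ and $\sum_{i\ge1}\alpha_i^2 = |V_0| - |V_0|^2/p$) yields $\|y\|^2 \le |V_0|\bigl(\lambda^2 + (\lambda_0^2-\lambda^2)|V_0|/p\bigr)$, after which the $|V_0|$ factors cancel to give precisely the stated inequality. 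You are also right that the step $\sum_{i\ge1}\lambda_i^2\alpha_i^2 \le \lambda^2\sum_{i\ge1}\alpha_i^2$ is exactly where the definition $\lambda = \max\{|\lambda_1|,|\lambda_{p-1}|\}$ (rather than $\lambda_1$ alone) is used.
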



\begin{fact}[\cite{Upfal94}]
\label{fact:upfal}

If $\Delta_0=74$ is used in the specification of graph~$L(p)$ then there exists a function~$P'$ such that if $L_0$ is a subgraph of~$L(p)$ of size at least $71p/72$, then the subgraph~$P'(L_0)$ of~$L_0$ is of size at least $|L_0|/6$ and its diameter is at most $30\lg p$. 
\end{fact}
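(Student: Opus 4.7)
The plan is to exploit the spectral expansion of the Ramanujan graph $L(p)$ recorded in inequality~(\ref{eqn:ramanujan}) together with Fact~\ref{fact:tanner} to argue that, after removing at most $p/72$ nodes, almost all surviving nodes lie in a diameter-bounded core; the function $P'$ is then defined to be the extraction of this core.

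First, I would instantiate Fact~\ref{fact:tanner} for the induced subgraph $L_0$. Since $L(p)$ is $\Delta_0$-regular with $\Delta_0=74$ and $\lambda \le 2\sqrt{73}$, for any $V_0 \subseteq L_0$ with $|V_0|$ below a threshold like $p/3$, the neighborhood $|N_{L(p)}(V_0)|$ is at least $\rho\, |V_0|$ for some expansion constant $\rho>1$ obtained by plugging the eigenvalues into Fact~\ref{fact:tanner}. Removing the at most $p/72$ deleted nodes shrinks this estimate by at most an additive $p/72$, so $|N_{L_0}(V_0)| \ge \rho' |V_0|$ for a slightly smaller constant $\rho'>1$ whenever $|V_0|$ is not too small. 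The value $\Delta_0 = 74$ is chosen exactly so that $\rho'$ comfortably exceeds $1$.

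Second, I would iterate expansion from a single node. For $v\in L_0$, define the ball $B_k(v) = N^k_{L_0}(\{v\})$. Starting from a seed $|B_0(v)|=1$, the above step gives $|B_{k+1}(v)| \ge \rho' |B_k(v)|$ so long as $|B_k(v)| \le p/3$. Therefore there is some $k_0 = O(\log p)$, which can be chosen as at most $15 \lg p$ by tuning $\rho'$ through $\Delta_0 = 74$, such that either $|B_{k_0}(v)| \ge p/3$ or $v$ belongs to a ``stuck'' small component. Call $v$ \emph{good} in the first case.

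Third, I would observe that any two good nodes $u,v$ satisfy $B_{k_0}(u)\cap B_{k_0}(v) \ne \emptyset$ by pigeonhole, since their sizes exceed $p/3$ and $|L_0|\le p$. Hence $\operatorname{dist}_{L_0}(u,v) \le 2k_0 \le 30\lg p$. Define $P'(L_0)$ as the induced subgraph on the set of good nodes. The diameter bound is then immediate, and monotonicity of $P'$ (needed when this lemma feeds into Definition~\ref{def:subgraph-property}) follows from the fact that the ball $B_k(v)$ only grows when $L_0$ grows, so the ``good'' property is monotone in $L_0$. The final step is to bound $|P'(L_0)|$ from below: a bad node lies in a set of size $<p/3$ that is separated from $\rho'$-expansion; a counting argument using the expansion inequality caps the number of such nodes, and with $\Delta_0=74$ this cap is comfortably $5|L_0|/6$, giving $|P'(L_0)|\ge |L_0|/6$.

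The main obstacle is the quantitative bookkeeping: the expansion constant $\rho'$, the number of iterations $k_0$, the ball-size threshold (here $p/3$), and the final fraction $1/6$ are all linked, and the point of fixing $\Delta_0=74$ is to make the numbers $30\lg p$ and $|L_0|/6$ fall out simultaneously. The delicate part is showing that the bad set really is at most $5|L_0|/6$ in size rather than just ``a small fraction,'' which requires carefully applying Fact~\ref{fact:tanner} at the critical scale where a would-be component of bad nodes is still below $p/3$ but is forced by expansion to have too many neighbors outside itself, contradicting its ``stuck'' status.
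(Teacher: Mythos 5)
The paper does not prove this statement: it is cited directly from Upfal's 1994 paper and imported as a black box, so there is no in-paper proof to compare against. Evaluating your reconstruction on its own merits, it has a genuine gap at the very start of the iteration.

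The central problem is that ball expansion from a single seed does not work after deleting up to $p/72$ nodes. For a single vertex $v$ in the $\Delta_0$-regular graph $L(p)$ with $\Delta_0=74$, we have $|N_{L(p)}(v)|=74$ exactly; but the deleted set can have up to $p/72$ nodes, which for $p\ge 5328$ exceeds~$74$. So the additive correction $-p/72$ you invoke can wipe out the entire neighborhood of $v$, and Tanner's bound gives nothing usable for $|B_0(v)|=1$, or indeed for any ball of size $o(p)$. You acknowledge the issue by writing that expansion holds ``whenever $|V_0|$ is not too small,'' but establishing that a $1/6$-fraction of vertices ever have balls that reach the regime where Tanner's multiplicative bound beats the additive $p/72$ loss \emph{is} the content of the fact, not a side condition. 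The ``counting argument'' you appeal to at the end, which is supposed to bound the set of stuck vertices by $5|L_0|/6$, is exactly the missing crux; bad vertices lie in small components, but there is no a priori bound on how many such components exist, so the conclusion does not follow from what you have written.

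Upfal's actual argument (following Alon--Chung) takes a structurally different route that dodges the small-ball obstacle: rather than growing balls from individual seeds and declaring them good or bad afterwards, it initializes a bad set $T$ to the deleted vertices and then \emph{peels}, repeatedly absorbing into $T$ any surviving vertex most of whose neighbors are already in $T$. The expansion inequality (Tanner's bound applied to $T$ itself, a set of size comparable to $p$ rather than $O(1)$) is then used to show that each absorption step consumes more boundary edges than it creates, so the process terminates with $|T|$ still a bounded multiple of $p/72$. The survivors are then connected with logarithmic diameter because each of them has a constant fraction of neighbors among the survivors, so a breadth-first search from any survivor expands geometrically without ever dipping into the small-set regime. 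This global peeling is what makes the argument go through; the seed-and-ball-growth framing does not.

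A secondary remark: even granting the expansion step, your monotonicity claim for $P'$ (that it respects inclusions $L_0\subseteq L_1$, needed for Part 4 of Lemma~\ref{lem:subgraph-property} via Definition~\ref{def:P}) would be fine under your definition, so that part is not the issue.
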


In terms of Definition~2, Fact~\ref{fact:upfal} means that if $\Delta_0=74$ is used in the construction of graph~$L(p)$, and if $f\le p/72$, then $L(p)$ has the subgraph property for~$f$ with $\alpha=1/6$, $\beta=30$ and~$\gamma=0$.
Next we will show how to extend this fact so that the restriction $f\le p/72$ can be abandoned, for appropriately adjusted constants $\alpha$, $\beta$ and $\gamma$. 
To this end, we use the overlay graphs~$L(p)^j$, for suitably large~$j$,  that have stronger connectivity properties  than those of~$L(p)$.
Observe that, for a given exponent~$j$ in this construction, the maximum node degree~$\Delta$ of~$L(p)^j$ is at most~$\Delta_0^j$.
We will press into service the function $P'$ from Fact~\ref{fact:upfal} several times in the sequel.


\Paragraph{The specification of the overlay graphs.}

The definitions of graph~$L(p)$ and the notions of compactness and subgraph property are parametrized by the constants $\Delta_0$, $\alpha$, $\beta$, and~$\gamma$.
From now on,  we fix these constants to the values $\Delta_0=74$, $\alpha=1/7$, $\beta=30$, and~$\gamma=2$.
Let $N(R)$ denote $N_{L(p)}(R)$, for any set $R$ of nodes of~$L(p)$.

The overlay graphs~$G(p,f)$ are defined as $L(p)^\ell$ for a suitable exponent~$\ell$.
If $f\le p/72$, then we may take $\ell=1$ and rely on Fact~\ref{fact:upfal}.
Let us suppose that $f>p/72$.
Let $R$ be a set of fewer than $71p/72$ nodes.
We estimate the size of the neighborhood~$N(R)$ by the inequality of Fact~\ref{fact:tanner} and  property~\eqref{eqn:ramanujan} by which $L(p)$ is a Ramanujan graph, to obtain the following bounds:
\begin{equation}
\label{eqn:specialized-tanner}
|N(R)|
\ge 
\frac{\Delta_0^2}{\lambda^2+(\Delta_0^2-\lambda^2) \frac{|R|}{p}}|R| 
\ge  
\frac{5476}{292+5184\cdot\frac{|R|}{p}}|R|
\ .
\end{equation}
If the size $|R|$ is at most $p/50$, then $|N(R)|\ge \rho |R|$ holds for the constant~$\rho = 27/2$ that is obtained by substituting $p/50$ for~$|R|$ in the estimate~\eqref{eqn:specialized-tanner}.
If $|R|$ is between $p/50$ and~$71 p/72$, then $|N(R)|\ge~\!\!\rho_1|R|$ for the constant $\rho_1 = 1.013$ that is obtained by substituting $71 p/ 72$ for~$|R|$ in the estimate~\eqref{eqn:specialized-tanner}.

A benefit brought in by replacing the estimate~\eqref{eqn:specialized-tanner} by the products $\rho |R|$ and $\rho_1 |R|$ is to be able to express the estimates on the sizes of neighborhoods of the form $N^i(R)$ in a power $L(p)^i$ of graph~$L(p)$ by the powers of $\rho$ and~$\rho_1$, with the corresponding exponents suitably related to~$i$.

We want to take a power~$k$ of~$L(p)$ that is sufficiently large to expand any neighborhood of set~$R$ of at least $p-f$ nodes to~$71p/72$~nodes, then use function~$P'$ from Fact~\ref{fact:upfal}, and finally rely on the exponent~$\ell$ to be sufficiently large to translate this action into the suitable properties of graph~$L(p)^\ell$.
Let us define the three positive integers $r$, $k$, and~$\ell$  as follows:
\begin{itemize}
\item[\rm a)] Let $r$ be the smallest positive integer such that $(p-f)\rho^{r}> p/50$.

\item[\rm b)] Given this~$r$, let $k>r$ be the smallest integer that is large enough so that the inequality  $(p-f)\rho^{r}(\rho_1)^{k-r}> 71p/72$ holds.

\item[\rm c)] Set  $\ell=2k+1$.
\label{eqn:l=2+1}
\end{itemize}
The existence of~$r$ is by the choice of~$\rho = 27/2$, while $k$ is well defined by the choice of~$\rho_1$, as specified just after the estimate~\eqref{eqn:specialized-tanner}, which means that graph~$L(p)^\ell$ is well defined.
We let $G(p,f)$ denote the graph~$L(p)^\ell$.

Given a set of nodes $R$ in~$G(p,f)$, we may estimate the size of the set $N^k (R)$ in~$G(p,f)$ as follows. 
If $R$ is a set of nodes such that $|R|\ge p-f$ then the following inequality holds:
\begin{equation}
\label{eqn:exponent}
|N^k (R)| > \frac{71}{72}\, p \ .
\end{equation}
This is because parts a) and b) of the specification of $L(p)^\ell$ provide 
\[
|R|\rho^{r}(\rho_1)^{k-r}\ge (p-f)\rho^{r}(\rho_1)^{k-r}\ .
\]
If $R$ is a set of nodes such that $|R|<p-f$, then  the following inequality holds:
\begin{equation}
\label{eqn:generalized}
|N^k (R)| > \frac{|R|}{p-f}\cdot \frac{71}{72}\, p \ .
\end{equation}
This is because in such a case we have that 
\[
|R| \rho^{r} (\rho_1)^{k-r} =  \frac{|R|}{p-f} (p-f) \rho^{r} (\rho_1)^{k-r}> \frac{|R|}{p-f} \cdot \frac{71p}{72}
\ ,
\]
by the inequality used in part b) of the construction of the overlay graph.


\Paragraph{Three properties of the overlay graphs.}

Now we discuss three properties of the overlay graphs $G(p,f)$, as defined above in the form~$L(p)^\ell$ for a suitable~$\ell$, that will be key for applications in algorithm design in subsequent sections.


\begin{lemma}
\label{lem:degree-power-of-Lp}

For any $f<p$,  the maximum degree of $G(p,f)$ is $\cO\bigl(\bigl(\frac{p}{p-f}\bigr)^{3.31}\bigr)$.
\end{lemma}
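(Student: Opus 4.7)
The plan is to bound $\ell = 2k+1$ logarithmically in $p/(p-f)$, and then observe that $\Delta_0^\ell$ converts that logarithmic bound into the stated polynomial.

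First, I recall that every node of $L(p)^\ell$ has degree at most $\Delta_0^\ell = 74^\ell$, since each step of the path that defines a new edge of $L(p)^\ell$ has at most $\Delta_0$ choices. So it suffices to show that
\[
\ell \;\le\; 2\log_{\rho}\!\Bigl(\frac{p}{p-f}\Bigr) + \cO(1),
\]
where $\rho = 13.5$, because then $74^\ell \le C\cdot (p/(p-f))^{2\log_\rho 74}$ for some constant $C$.

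Next, I would analyze the integers $r$ and $k$ from parts (a) and (b) of the specification of~$\ell$. By the minimality of~$r$, the inequality $(p-f)\rho^{r-1}\le p/50$ holds, so $\rho^{r}\le \frac{\rho p}{50(p-f)}$; taking logarithms gives $r\le \log_{\rho}(p/(p-f)) + \cO(1)$. Moreover, this also implies $(p-f)\rho^{r}\le \rho p/50$, and consequently the quantity $\frac{71p}{72(p-f)\rho^r}$ appearing in the inequality of part~(b) is bounded above by the absolute constant $\frac{71\cdot 50}{72\cdot\rho/1} = \frac{71\cdot 50}{72\cdot 13.5}$. Therefore $k-r$ is bounded by a constant that depends only on $\rho$ and $\rho_1$, and does not depend on $p$ or~$f$. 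Combining with the bound on $r$, I obtain $k\le \log_\rho(p/(p-f)) + \cO(1)$, whence $\ell = 2k+1 \le 2\log_\rho(p/(p-f)) + \cO(1)$.

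Substituting this into $\Delta_0^\ell$, the maximum degree of $G(p,f)$ is at most
\[
74^{\,\ell} \;\le\; C\cdot \Bigl(\tfrac{p}{p-f}\Bigr)^{2\log_{\rho} 74}
\]
for some absolute constant $C$. A routine numerical calculation (using $\rho = 27/2$) gives $2\log_{13.5} 74 = 2\ln 74/\ln 13.5 < 3.31$, which yields the claimed bound. The main step requiring care is the observation that $k-r = \cO(1)$: it relies on using both the defining inequality of $r$ and its minimality to squeeze $(p-f)\rho^r$ between two constant multiples of $p$, so that the residual factor in part~(b) is bounded independently of~$f$. Everything else is logarithm arithmetic.
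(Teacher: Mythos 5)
Your overall strategy matches the paper's: bound $r$ and $k-r$ separately, show $k-r=\cO(1)$, and unwind $\Delta_0^{\ell}$ with $\ell=2k+1$ into the polynomial $\bigl(\frac{p}{p-f}\bigr)^{2\log_\rho\Delta_0}$. Your bound on $r$, via the minimality inequality $(p-f)\rho^{r-1}\le p/50$, is correct and is exactly what the paper uses.

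There is, however, a genuine error in the step bounding $k-r$. You write that $(p-f)\rho^{r}\le \rho p/50$ ``consequently'' yields an \emph{upper} bound of $\frac{71\cdot 50}{72\cdot\rho}$ on the residual quantity $\frac{71p}{72(p-f)\rho^{r}}$. But an upper bound on $(p-f)\rho^{r}$ gives a \emph{lower} bound on that fraction, not an upper bound; the direction of the implication is reversed. What you actually need here is the \emph{defining} property of $r$ from part~(a), namely $(p-f)\rho^{r}> p/50$, which gives $\frac{71p}{72(p-f)\rho^{r}}<\frac{71\cdot 50}{72}$ --- this is the inequality the paper invokes to conclude $\rho_1^{k-r-1}<\frac{71\cdot 50}{72}$ and hence $k-r=\cO(1)$. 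Your closing paragraph does correctly gesture at squeezing $(p-f)\rho^{r}$ between two constant multiples of $p$ via both inequalities, but the displayed derivation uses only the minimality half and draws the wrong conclusion from it; the extraneous factor of $\rho$ in your constant is a symptom of the reversal. Once you replace the appeal to $(p-f)\rho^{r}\le \rho p/50$ with the appeal to $(p-f)\rho^{r}> p/50$, the argument closes exactly as in the paper.
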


\begin{proof}
We estimate $r$ and $k-r$ as used in the specification of $G(p,f)=L(p)^\ell$, where $\ell=2k+1$.
It follows from part a) of the construction that
\begin{equation}
\label{eqn:aux1}
\log_\rho\frac{p}{50(p-f)} < r \le \log_\rho\frac{p}{50(p-f)} +1\ .
\end{equation}
Similarly, since the inequality
\[
(p-f)\rho^{r}\cdot \rho_1^{k-r-1}\le \frac{71p}{72}
\]
holds by part b) in the construction, and  the inequality
\[
\frac{1}{(p-f)\rho^r}<\frac{50}{p}
\]
holds by part a), we obtain that
\begin{equation}
\label{eqn:aux2}
\rho_1^{k-r-1}\le \frac{71p}{72} \cdot \frac{1}{(p-f)\rho^r} 
<\frac{71\cdot 50}{72}\ .
\end{equation}
Combining the estimates \eqref{eqn:aux1} and~\eqref{eqn:aux2} yields the following bound on~$k$:
\begin{eqnarray*}
k&=&r+(k-r)\\
&\le& \log_\rho\frac{p}{50(p-f)} +1 + \log_{\rho_1} \frac{71\cdot 50}{72} + 1 \\
&=&
\log_\rho \frac{p}{p-f} + \Bigl(2-\log_\rho 50 + \log_{\rho_1} \frac{71\cdot 50}{72}\Bigr)\\
&=& \log_\rho \frac{p}{p-f} + c \ ,
\end{eqnarray*}
where $c$ is a constant ($c= 2-\log_\rho 50 + \log_{\rho_1} \frac{71\cdot 50}{72}$). 
This leads to the following bound on $\Delta$, which is the maximum degree of~$L^{\ell}(p)$:
\begin{eqnarray*}
\Delta&\le& \Delta_0^{2k+1}\\
&\le& \Delta_0^{2\log_\rho \frac{p}{p-f} + 2c +1}\\
&=&\cO\bigl(\Delta_0^{2\log_\rho \frac{p}{p-f}}\bigr)\\
&=&
\cO\Bigl(\Bigl(\frac{p}{p-f}\Bigr)^{2\log_\rho \Delta_0}\Bigr) \ .
\end{eqnarray*}
For $\rho = 27/2$ and $\Delta_0=74$, the exponent $2\log_\rho \Delta_0$ in the bound on the maximum degree  is less than~$3.31$, by direct inspection.
\end{proof}

Let $f<p$ and let $R$ denote a set of nodes of the graph~$G(p,f)=L(p)^\ell$ of at least $p-f$ elements. 

\begin{definition}\label{def:P}
The \textbf{subgraph function $P(R)$} gives the subgraph of~$L(p)^\ell$ induced by the vertices in~$N^k (P'(N^k (R)))\cap R$, where $\ell=2k+1$ and $P'$ is the subgraph function from Fact~\ref{fact:upfal}.
\end{definition}


\begin{lemma}
\label{lem:subgraph-property}

For any $f<p$,  graph $G(p,f)$ has the subgraph property with $P(R)$ as the subgraph function.
\end{lemma}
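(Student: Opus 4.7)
The plan is to verify each of the four clauses in Definition~\ref{def:subgraph-property} for the function $P$ of Definition~\ref{def:P}, applied to an induced subgraph $R\subseteq G(p,f)=L(p)^\ell$ of size at least $p-f$, where $\ell=2k+1$. Throughout I write $S:=P'(N^k(R))$ and $V_0:=N^k(S)\cap R$, so that $P(R)$ is the subgraph of $L(p)^\ell$ induced by $V_0$.

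Clause~1, $P(R)\subseteq R$, is immediate from $V_0\subseteq R$. For clause~2, first apply~\eqref{eqn:exponent} to obtain $|N^k(R)|>71p/72$, and then invoke Fact~\ref{fact:upfal} to conclude $|S|\ge|N^k(R)|/6>71p/432$. Put $T:=R\setminus N^k(S)$; by symmetry of the edge relation in $L(p)$, $u\in T$ forces $S\cap N^k(\{u\})=\emptyset$, so $S\cap N^k(T)=\emptyset$ and $|N^k(T)|\le p-|S|<361p/432$. If $|T|\ge p-f$ held, \eqref{eqn:exponent} would give $|N^k(T)|>71p/72=426p/432$, contradicting the previous bound; hence $|T|<p-f$, and \eqref{eqn:generalized} yields $(|T|/(p-f))\cdot 71p/72<361p/432$, i.e.\ $|T|<(361/426)(p-f)<(6/7)|R|$ (using $p-f\le|R|$). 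Therefore $|P(R)|=|R|-|T|>|R|/7=\alpha|R|$.

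For clause~3, I would lift a short path in $S$ to a short path in $V_0$. Given any $u,v\in V_0$, choose $s_u,s_v\in S$ with $d_{L(p)}(u,s_u)\le k$ and $d_{L(p)}(v,s_v)\le k$, which exist since $u,v\in N^k(S)$. Fact~\ref{fact:upfal} supplies a path $s_u=w_0,\ldots,w_m=s_v$ lying inside $S$, with $m\le 30\lg p$. For each interior $w_i$, pick $r_i\in R$ with $d_{L(p)}(w_i,r_i)\le k$, and set $r_0:=u$, $r_m:=v$; then $r_i\in R\cap N^k(S)=V_0$, and the triangle inequality gives $d_{L(p)}(r_i,r_{i+1})\le k+1+k=\ell$, so $(r_i,r_{i+1})$ is an edge of the induced subgraph $P(R)$ of $L(p)^\ell$. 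Thus $u$ and $v$ are joined in $P(R)$ by a walk of length at most $30\lg p\le\beta\lg p+\gamma$. Clause~4 reduces, via monotonicity of $N^k$ and of intersection, to monotonicity of $P'$ on nested subgraphs of $L(p)$ of size at least $71p/72$.

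The main obstacle is precisely this last point: Fact~\ref{fact:upfal} is stated existentially and does not assert inclusion-preservation of $P'$. I expect to resolve it by inspecting the construction of $P'$ underlying \cite{Upfal94}, which iteratively deletes vertices whose neighborhoods contain too many already-deleted vertices; starting from a smaller $L_0$ only enlarges the set of vertices eventually removed, which forces $P'(L_0)\subseteq P'(L_1)$ whenever $L_0\subseteq L_1$. The rest of the argument is a routine chaining of the expansion inequalities~\eqref{eqn:exponent}--\eqref{eqn:generalized} with the guarantees of Fact~\ref{fact:upfal}.
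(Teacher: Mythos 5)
Your proof is correct and, for Parts~2 and~3, genuinely simpler than the paper's. Both arguments for Part~2 rest on the observation that $N^k(T)$ is disjoint from $S=P'(N^k(R))$ when $T=R\setminus N^k(S)$; the paper presents this as a proof by contradiction (assume $|T|>6|R|/7$, then $N^k(T)$ and $S$ together would overfill the $p$ vertices, via $\tfrac{6}{7}\cdot\tfrac{71}{72}+\tfrac{1}{6}\cdot\tfrac{71}{72}>1$), while you turn the disjointness directly into $|N^k(T)|\le p-|S|$ and finish by applying~\eqref{eqn:exponent}--\eqref{eqn:generalized}. For Part~3 the gain is more substantial: the paper lifts a shortest path in $P'(N^k(R))$ to a walk in $P(R)$ by a four-case induction (does the lifted node equal the path node, is the next node already in $P(R)$), plus a separate patch for endpoints lying outside $P'(N^k(R))$, arriving at a diameter bound of $30\lg p+2$. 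You collapse all of this into a single triangle-inequality step: for each internal $w_i$ of the $P'$-path choose $r_i\in R$ with $d_{L(p)}(w_i,r_i)\le k$ (possible since $w_i\in N^k(R)$, and $r_i\in V_0$ since $w_i\in S$), set $r_0=u$, $r_m=v$, and observe $d_{L(p)}(r_i,r_{i+1})\le k+1+k=\ell$; the only case you omit is the degenerate $s_u=s_v$, where $d_{L(p)}(u,v)\le 2k<\ell$ gives an edge directly. You are also right to flag Part~4: the paper derives $P'(N^k(V_1))\subseteq P'(N^k(V_2))$ ``by Fact~\ref{fact:upfal},'' but the fact as stated guarantees only the existence of a $P'$ with the size and diameter bounds, not inclusion-monotonicity, so the paper's own proof has exactly the lacuna you describe. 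Your sketch---Upfal's iterative deletion process only deletes more when started from a larger set of removed vertices, hence $P'$ is monotone---is the right way to close it, and the cleanest fix to the paper would be to strengthen the statement of Fact~\ref{fact:upfal} to include this monotonicity.
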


\begin{proof}
We verify the four parts of the definition of the subgraph property.

Part~1)  states that $P(R)\subseteq R$: this follows directly from the definition of~$P(R)$.

Part 2) states that $|P(R)|\ge|R|/7$.
Suppose, to the contrary, that $|R\setminus P(R)|> 6|R|/7$. 
We proceed by making three preliminary observations.
The first one is that the following inequality holds:
\begin{equation}
\label{eqn:first-observation}
|N^k (R\setminus P(R))| > \frac{6}{7} \cdot \frac{71}{72} \, p \ .
\end{equation}
To show \eqref{eqn:first-observation}, we consider two cases.
If $|R\setminus P(R)|\ge p-f$ then estimate~\eqref{eqn:exponent} suffices.
Otherwise, when $|R\setminus P(R)| < p-f$, then we resort to~\eqref{eqn:generalized} and additionally verify that
\[
\frac{|R\setminus P(R)|}{p-f}\ge \frac{|R\setminus P(R)|}{|R|} > \frac{6}{7}\ .
\]
The second observation is that the following inequalities hold
\begin{equation}
\label{eqn:second-observation}
|P'(N^k(R))|\ge \frac{|N^k (R)|}{6} \ge \frac{1}{6}\cdot\frac{71}{72}\, p\ , 
\end{equation}
which follows from bound \eqref{eqn:exponent}  and Fact~\ref{fact:upfal}. 
Finally, observe that the inequality
\begin{equation}
\label{eqn:third-observation}
\frac{6}{7}\cdot\frac{71}{72} + \frac{1}{6}\cdot\frac{71}{72}>1 
\end{equation}
holds, which can be verified directly. 

The inequalities~\eqref{eqn:first-observation},  \eqref{eqn:second-observation} and~\eqref{eqn:third-observation} together imply that the intersection of~$N^k (R\setminus P(R))$ with $P'(N^k (R))$ is nonempty, because the sum of the sizes of sets is larger than~$p$, which is the total number of nodes.
Let~$v$ be a node that belongs to both $N^k (R\setminus P(R))$ and~$P'(N^k (R))$.
There is a path in~$L(p)$ of length at most~$k$ from~$v$ to some  $w$ in $R\setminus P(R)$. 
According to the specification $P(R)= N^k (P'(N^k (R)))\cap R$, this means that $w$ is in~$P(R)$ because the node~$v$ is in~$P'(N^k (R))$.
This results in a contradiction, since the sets~$R\setminus P(R)$ and~$P(R)$ are disjoint.
The contradiction yields that $|P(R)|\ge |R|/7$.

Part~3) states that the diameter of~$P(R)$ is at most $30\lg p +2$.
Let $v_1$ and~$v_2$ be two vertices in the subgraph~$P(R)$ of~$L(p)^\ell$. 

Suppose first that both $v_1$ and~$v_2$ are in~$P'(N^k (R))$.
By Fact~\ref{fact:upfal}, there is a path from node~$v_1$ to node~$v_2$  of length at most $30\lg p$ that traverses nodes in $P'(N^k (R))$ via edges in~$L(p)$. 
Let $\langle w_1=v_{1},w_2,\ldots ,w_m=v_2\rangle$ be the sequence of consecutive nodes on this path. 
We show that there is a path $\langle w'_1=v_{1}, w'_2,\ldots ,w'_m=v_2\rangle$ between $v_1$ and~$v_2$ in the subgraph~$P(R)$ of~$L(p)^\ell$ that has the following property: for $1\le i<m$, either $w'_{i+1}=w_{i+1}$ or $w'_{i+1}$ is a neighbor of~$w_{i+1}$ in~$L(p)^k$. 
The construction is by induction on the length of the path. 
The base of induction is for the path of length zero, which holds by the assumption $w'_1=v_{1}$.
Suppose that we have found an initial segment of this path up to a node $w'_i$, where $1\le i<m-1$. 
The inductive step is accomplished for each of the following four possible cases: 
\begin{itemize}
\Item[\rm a)] 
The case when $w_i=w'_i$ and~$w_{i+1}$ is in~$P(R)$:

Set $w'_{i+1}=w_{i+1}$. 
The nodes $w'_i$ and~$w'_{i+1}$ are neighbors in~$L(p)$  and so also in~$L(p)^\ell$.

\Item[\rm b)] 
The case when $w_i=w'_i$ and~$w_{i+1}$ is not in~$P(R)$:

Hence $w_{i+1}\in P'(N^k(R))\setminus P(R)$. 
Let $C$ denote $P'(N^k (R))\setminus R$. 
We have the following equalities:
\[
P'(N^k(R))\setminus P(R) = 
P'(N^k(R))\setminus (N^k(P'(N^k(R)))\cap R) =
P'(N^k(R))\setminus R = C \ ,
\] 
because 
\[
P'(N^k(R)) \cap R \subseteq P'(N^k(R))
\]
and so the inclusion
\[
P'(N^k(R)) \cap R \subseteq N^k(P'(N^k(R))
\] 
holds.
This means that $w_{i+1}\in C$.

Let $w'_{i+1}$ be any neighbor of~$w_{i+1}$ that is in the subgraph~$(N^k (C)\setminus C)\cap R$ of~$L(p)^k$, and hence in~$P(R)$; such a node exists since otherwise $N^k(w_{i+1})\cap R$ would be empty, which contradicts the fact that 
\[
w_{i+1}\in C\subseteq  P'(N^k(R))\subseteq N^k(R) \ .
\]
The nodes $w'_i$ and~$w'_{i+1}$ are neighbors in~$L(p)^\ell$, as their distance in~$L(p)$ is at most $1+k<\ell$; namely, the distance in~$L(p)$ between $w'_i$ and $w_{i+1}$ is $1$ and the distance in~$L(p)$ between $w_{i+1}$ and $w'_{i+1}$ is at most~$k$.

\Item[\rm c)] 
The case when $w_i\neq w'_i$ and~$w_{i+1}$ is in~$P(R)$:

Take $w_{i+1}$ to be $w'_{i+1}$.
The distance between $w'_i$ and~$w'_{i+1}=w_{{i+1}}$ in~$L(p)$  is $k+1<\ell$, since the node $w_i$ is a neighbor of~$w_{i+1}$ in~$L(p)$ and, by the inductive assumption, the node $w_i$ is a neighbor of~$w'_i$ in~$L(p)^k$.
Hence there is an edge between $w'_i$ and~$w'_{i+1}$ in~$L(p)^\ell$. 

\Item[\rm d)] 
The case when $w_i\neq w'_i$ and~$w_{i+1}$ is not in~$P(R)$:

Let $C$ denote $P'(N^k (R))\setminus R$. 
By the inductive assumption, the node~$w_i$ is a neighbor of~$w'_i$ in~$L(p)^k$ and $w_{i+1}\in C$.
Let $w'_{i+1}$ be any neighbor of~$w_{i+1}$ in the subgraph $(N^k (C)\setminus C)\cap R\subseteq P(R)$ of~$L(p)^k$; such a node exists since otherwise $N^k(w_{i+1})\cap R$ would be empty, which contradicts the fact that $w_{i+1}\in P'(N^k(R))\subseteq N^k(R)$.
The distance between $w'_i$ and~$w'_{i+1}$ in~$L(p)$ is at most $\ell=k+1+k$, since the distance between $w'_i$ and $w_i$ is at most $k$, the distance between $w_i$ and $w_{i+1}$ is~$1$, and the distance between $w_{i+1}$ and $w'_{i+1}$ is at most $k$. 
Therefore there is an edge between $w'_i$ and $w_i$ in~$L(p)^\ell$. 
\end{itemize}
Note that if a path $\langle v_1=w'_1,w'_2,\ldots ,w'_{m-1}\rangle$ has been built then the node~$w'_{m-1}$ is in $N^k (\{w_{m-1}\})$, by the inductive assumption. 
Hence the distance in~$L(p)$ between the nodes~$w'_{m-1}$ and~$v_2$ is at most $k+1$. 
This completes the inductive construction, and hence the case when both $v_1$ and~$v_2$ are in~$P'(N^k (R))$. 

The case when either $v_1$ or~$v_2$ is not in~$P'(N^k (R))$ follows from the observation that both these nodes have neighbors in~$P'(N^k (R))$ in~$L(p)^k$, by the inclusion $P(R)\subseteq N^k (P'(N^k (R)))$.
Therefore the distance between them in the subgraph~$P(R)$ of~$L(p)^\ell$ is at most $2+30\lg p$. 

Part~4)  states that $P(R_1)\subseteq P(R_2)$ for any induced subgraphs $R_1\subseteq R_2$ of~$L^\ell(p)$, each of size at least $p-f$.

Let $V_1$ be the set of nodes of~$R_1$ and $V_2$ the set of nodes of~$R_2$.
Clearly $N^k (V_1)\subseteq N^k (V_2)$.
Moreover, $|N^k (V_2)|\ge |N^k (V_1)|\ge 71p/72$ by~\eqref{eqn:exponent}.
It follows that $P'(N^k (V_1))\subseteq P'(N^k (V_2))$, by Fact~\ref{fact:upfal}, which implies 
\[
N^k(P'(N^k (V_1)))\subseteq N^k(P'(N^k (V_2))) 
\]
and next 
\[
N^k(P'(N^k (V_1))) \cap V_1\subseteq N^k(P'(N^k (V_2)))\cap V_2 \ .
\]
Since $P(R_1)$ is the subgraph of~$L^\ell(p)$ induced by the nodes in $N^k(P'(N^k (V_1))) \cap V_1$, and $P(R_2)$ is the subgraph of~$L^\ell(p)$ induced by the nodes in $N^k(P'(N^k (V_2))) \cap V_2$, the inclusion $P(R_1)\subseteq P(R_2)$ follows.
\end{proof} 

As a part of the statement of Lemma~\ref{lem:subgraph-property}, the diameter of a compact graph is at most $30 \lg p + 2$; we will use the notation~$g(p)$ as a shorthand for $30 \lg p + 2$.

We say that two disjoint sets of nodes $A$ and $B$ of a graph are \emph{connected by an edge} if there exist $v\in A$ and $w\in B$ such that $(v,w)$ is an edge.

\begin{lemma}
\label{lem:disjoint-are-connected}

Any two disjoint sets  of nodes of graph $G(p,f)=L(p)^\ell$, such that each contains at least $(p-f)/7$ nodes, are connected by an edge.
\end{lemma}

\begin{proof}
Let $A$ and $B$ be two disjoint sets of nodes such that each has at least $ (p-f)/7$ elements.
By the specification of graph $L(p)^\ell$,  it is sufficient to show that there is a path in $L(p)$, of length at most~$\ell$, from some node $v\in A$ to some node $w\in B$.
In what follows in this proof, edges and neighborhoods refer
to graph~$L(p)$. 

Recall that $\ell=2k+1$
and the expansion factor for sets of size at most $p=50$ is  $\rho = 27/2$.
We show that the intersection of $N^{k}(A)$ and $N^{k}(B)$ is nonempty.
Let $A'=N(A)$. 
The following inequalities hold:
\[
|A'|\ge\min\{ \rho |A|,\lfloor p/50\rfloor +1\}\ge\min\{p-f,\lfloor p/50 \rfloor+1\}
\ ,
\]
because $|A|\ge (p-f)/7$ and $\frac{1}{7}\cdot \frac{27}{2}>1$.
The inequality $|N^r(A')| > p/50$ holds by the specifications of $r$ and~$\rho$. 
The size of the set
\[
N^k(A)=N^{k-1}(A')=N^{k-r-1}(N^r(A'))
\] 
is  at least $(p-f)\rho^r (\rho_1)^{k-r-1}$, by the definitions of $k$ and~$\rho_1$.
The number $k$ was selected such that the following inequality holds
\[
(p-f)\rho^r (\rho_1)^{k-r}> \frac{71p}{72}
\ ,
\] 
for some integer~$r>0$.
This implies the following inequalities:
\begin{equation}
\label{eqn:bigger-than}
(p-f)\rho^r (\rho_1)^{k-r-1}> \frac{71p}{72\rho_1} > \frac{p}{2}
\ ,
\end{equation}
because  $2\cdot \frac{71}{72}>1.013=\rho_1$.
We conclude that set $N^k(A)$ has more than $p/2$ nodes. 

By the same argument, we have that $|N^k(B)|>p/2$. 
It follows that there exists a node $x\in N^k(A)\cap N^k(B)$.
This node $x$ is of distance at most $k$ from some node $v\in A$ and also of distance at most $k$ from some $w\in B$.
Therefore, there exists a path in $L(p)$ that starts at  $v\in A$, passes through $x$, and ends at~$w\in B$, whose total length is at most $2k<\ell$.
\end{proof}

\section{Algorithm Design}

\label{sec:design-algorithms}

In this section we present the design template for our algorithms formulated as  \emph{the generic algorithm}.
This algorithm design leaves the following open: 
1)~the assignment of tasks for processors to perform, and 
2)~the overlay graphs used for sending messages.
The generic algorithm solves the \DA\ problem against the unbounded adversary, as we show in Proposition~\ref{proposition:correctness-generic}.
In the subsequent sections we deal with specific algorithms as instantiations of the generic algorithm.
The purpose of the instantiations is to achieve suitable efficiency.

\subsection{The generic algorithm}

\label{sec:generic-algorithm}

The generic algorithm is  given in Figure~\ref{fig:algorithm}, and it contains the initialization followed by the \texttt{repeat} loop.
The activity of each processor consists of two stages.
In the first stage, the processor works to perform all tasks and to share its knowledge with other processors.
This is implemented by procedure \texttt{Main} given in Figure~\ref{fig:procedure-main}.
After the processor learns that there are no outstanding tasks, 
it switches to the second stage, where it propagates its knowledge to all processors.
This is implemented by procedure \texttt{Closing} given in Figure~\ref{fig:procedure-closing}.
The processor halts when it learns that every processor knows that there are no outstanding tasks.


\begin{figure}[t]
\rule{\textwidth}{0.75pt}

Generic Algorithm
\begin{center}
\begin{minipage}{\pagewidth}
\begin{description}
\item[{\it /* Initialization */}] ~ 

  \texttt{Tasks}$_{v} \leftarrow$ the sorted list of all  tasks 

  \texttt{Processors}$_{v}$,  \texttt{Busy}$_{v}  \leftarrow$ the sorted list of all  processors

 \texttt{Phase}$_{v} \leftarrow \texttt{Main}$ {\it /*  \texttt{Phase}$_{v}$ is set to point to procedure \texttt{Main} */}

 $\texttt{Done}_{v} \leftarrow \texttt{false}$

\item[{\it /* Algorithm code */}] ~

\texttt{repeat} call procedure  \texttt{Phase}$_{v}$ 
\end{description}
\end{minipage}
\FFF

\rule{\textwidth}{0.75pt}

\parbox{\captionwidth}{\caption{\label{fig:algorithm}
The generic algorithm code for processor~$v$; the algorithm starts with the initialization, 
followed by iterating either procedure \texttt{Main} or procedure \texttt{Closing}
based on the value of \texttt{Phase}$_{v}$.}}
\end{center}
\end{figure}


\begin{figure}[t]
\rule{\textwidth}{0.75pt}

Procedure \texttt{Main}

\begin{center}
\begin{minipage}{\pagewidth}
\begin{itemize}
\Item[\sc Round 1:] receive messages
\Item[\sc Round 2:] perform local computation:
\begin{innerlist}
\Item
update the private lists using the messages just received:
\begin{enumerate}
\item
remove processor identifier $x$ from $\texttt{Processors}_v$ if either $x$ is missing from some \texttt{Processors} list received,
or $x$ is a neighbor of~$v$ and no message from $x$ was received
\item
remove processor identifier $x$ from \texttt{Busy}$_v$ if either $x$ is missing from some \texttt{Busy} list received, 
or $x$ is not in \texttt{Processors}$_v$
\item
remove task identifier $y$ from \texttt{Tasks}$_v$ if $y$ is missing from some \texttt{Tasks} list received 
\end{enumerate}

\Item
if \texttt{Tasks}$_v$ is nonempty then 
\begin{enumerate}
\item
select a task from \texttt{Tasks}$_v$ by applying  \texttt{Se\-lec\-tion\_Rule}
\item
perform the selected task
\item
remove the selected task from \texttt{Tasks}$_v$
\end{enumerate}
else $\texttt{Done}_v \leftarrow \texttt{true}$

\Item
if $\texttt{Busy}_v \neq \texttt{Processors}_v$ then $\texttt{Done}_v \leftarrow \texttt{true}$ 

\Item
if \texttt{Stop} signal received then $\texttt{Done}_v \leftarrow \texttt{true}$

\Item 
if $\texttt{Done}_v = \texttt{true}$ then 
\begin{enumerate}
\item
remove $v$ from \texttt{Busy}$_v$
\item
set \texttt{Tasks}$_v$ to an empty list
\item
$\texttt{Phase}_v \leftarrow \texttt{Closing}$ {\it /*  \texttt{Phase}$_{v}$ is set to point to procedure \texttt{Closing} */}
\end{enumerate}   
\end{innerlist}
\B
\Item[\sc Round 3:] 
multicast a message containing \texttt{Tasks}$_v$, \texttt{Processors}$_v$, and \texttt{Busy}$_v$ to those neighbors in the overlay graph that are in \texttt{Processors}$_v$
\end{itemize}
\end{minipage}
\FFF

\rule{\textwidth}{0.75pt}

\parbox{\captionwidth}{
\caption{\label{fig:procedure-main}
Procedure \texttt{Main} of the generic algorithm; the code for processor~$v$.
The code is parameterized by the procedure \texttt{Se\-lec\-tion\_Rule} that is used
to select an item from a list.}}
\end{center}
\end{figure}

A processor is called \emph{active} if it neither halted nor crashed.
An active processor  is called \emph{busy} when it executes procedure \texttt{Main}, and so is still ``busy performing tasks.''
A processor executing procedure \texttt{Closing} is no longer busy performing tasks but it remains active.
An active processor that is not busy may halt for two reasons in the course of executing procedure \texttt{Closing}.
(1) The processor halts after it learns that there are no busy processors, 
meaning that all active processors know that there are no outstanding tasks.
(2) The processor may halt after it learns that there are ``too few" active processors 
located sufficiently close in the overlay communication network to cooperate with productively; 
this is a technical condition designed to avoid scenarios when isolated islands of processors remain active, thereby unnecessarily contributing to the work and message complexities.


\begin{figure}[t!]
\rule{\textwidth}{0.75pt}

Procedure \texttt{Closing}

\begin{center}
\begin{minipage}{\pagewidth}

\begin{itemize}
\Item[\sc Round 1:] receive messages
\Item[\sc Round 2:] perform local computation:
\begin{innerlist}
\Item
update the private lists using the messages just received:
\begin{enumerate}
\item
remove processor identifier $x$ from $\texttt{Processors}_v$ if either $x$ is missing 
from some  \texttt{Processors} list received, 
or $x$ is a neighbor of~$v$ and no message from~$x$ was received
\item
remove processor identifier $x$ from $\texttt{Busy}_v$ if either $x$ is missing from some 
\texttt{Busy} list received, 
or $x$ is not in $\texttt{Processors}_v$
\end{enumerate}

\Item
if $v$ does not consider itself compact then \texttt{halt} after this phase is over
\Item
if \texttt{Busy$_v$} is nonempty then 
\begin{enumerate}
\item
select a processor from \texttt{Busy$_v$} by applying \textsf{Selection\_Rule}
\item
set \textsf{Selected\_Processor$_v$} to the selected processor
\item 
remove \textsf{Selected\_Processor$_v$} from \texttt{Busy$_v$}
\end{enumerate}
else 
\begin{enumerate}
\item[4.]
set \textsf{Selected\_Processor$_v$} to $v$
\item[5.]
\texttt{halt} after this phase is over
\end{enumerate}
\end{innerlist}
\B
\Item[\sc Round 3:] send messages:
\begin{innerlist}
\Item
send \texttt{Stop} signal  to \textsf{Selected\_Pro\-cessor$_v$}

\Item
multicast a message containing \texttt{Tasks}$_v$, \texttt{Processors}$_v$, and \texttt{Busy}$_v$ to those neighbors in the overlay graph that are in \texttt{Processors$_v$} 
\end{innerlist}
\end{itemize}
\end{minipage}

\FFF
\rule{\textwidth}{0.75pt}

\parbox{\captionwidth}{\caption{\label{fig:procedure-closing}
Procedure \texttt{Closing} of the generic algorithm; the code for processor~$v$.
The property of a processor to consider itself compact is defined in terms of the subgraph of the overlay graph induced by the nodes in $\texttt{Processors}_v$.
The processor~$v$ halts after completing the full three rounds of the last phase it performs, and so in particular after sending all the messages in Round~3 of the phase.
The code is parameterized by the procedure \texttt{Se\-lec\-tion\_Rule} that is used
to select an item from a list.}}
\end{center}
\end{figure}


\Paragraph{Local states and information propagation.}

Each processor~$v$ maintains its local state consisting of three ordered lists and additional  variables.
The list \texttt{Tasks$_v$} contains the identifiers of tasks assumed to be outstanding.
The list \texttt{Processors$_v$} contains the identifiers of processors assumed to be active.
The list \texttt{Busy$_v$} contains the identifiers of processors assumed to be busy performing tasks. 
These lists are sorted in the order of the identifiers contained in them. 
The position of an item in a list is called this item's \emph{rank} in the list.
Items may be removed from lists; this affects the ranks of the remaining items.

During the initialization, processor~$v$ sets the list \texttt{Tasks}$_v$ to the sorted list of all task identifiers.
Both lists \texttt{Busy}$_v$ and \texttt{Processors}$_v$  are initialized to the sorted list of all processor identifiers.

Processors share their knowledge by sending their \texttt{Tasks}, \texttt{Processors}, and \texttt{Busy} lists to their neighbors in the current overlay graph. 
Each active processor~$v$ updates its lists \texttt{Tasks$_v$}, \texttt{Processors$_v$} and \texttt{Busy$_v$} 
after it receives messages containing such lists. 
If an item is not on the list received from some neighbor, then the processor removes the item from the corresponding private list. 
The processor removes from its  \texttt{Tasks}$_v$ list the tasks that it knows to be complete based on the received information.
Similarly, the processor removes from its  \texttt{Processors}$_v$ list any processors that crashed or halted,  and finally, it removes from its \texttt{Busy}$_v$ list the processors that know that there are no outstanding tasks.
If no message is received by $v$ from a neighboring processor~$u$, 
then $u$ is removed from the list \texttt{Processors$_v$}, as this indicates that $u$ is no longer active. 
In this way the changes in the lists are propagated through flooding broadcasts over the overlay graph.

The knowledge contained in the private lists may be out of date in a given round. 
In the case of the lists \texttt{Processors$_v$} and \texttt{Busy$_v$}, the sets of processor identifiers
on these lists may be the supersets of the actual sets of active and busy processors, respectively. 
This happens if some processors are no longer active or no longer busy, but processor~$v$ has not learned this yet. 
Similarly, the set of task identifiers on the list \texttt{Tasks$_v$} may be a superset of the actual set of outstanding tasks, 
since some tasks may have been performed, but processor~$v$ does not know this yet.

Each processor maintains the following additional variables.
The variable \texttt{Phase$_v$} points to the procedure to be performed by $v$ in the next iteration of the main loop;
this procedure is either \texttt{Main} or \texttt{Closing}.
Initially \texttt{Phase}$_v$ is set to point to \texttt{Main}.
The variable \textsf{Selected\_Processor$_v$} is used to store the identifier of the processor to whom the \texttt{Stop} signal is sent by $v$ in the current execution of the procedure \texttt{Closing}. 
The boolean variable \texttt{Done$_v$} facilitates switching processor~$v$ from executing \texttt{Main} to executing \texttt{Closing} in the next iteration.
The  variable \texttt{Done$_v$} is initially set to false, and it remains false until~$v$ learns that all tasks are complete.

\texttt{Done$_v$} is set to true in the following three situations:
\begin{enumerate}
\item[(1)] 
When the list \texttt{Tasks}$_v$ becomes empty: this means that all tasks are complete.
\item[(2)] 
When $v$ receives the \texttt{Stop} signal from some processor~$u$: this means that $u$ knows that all tasks are complete and so it switches to executing \texttt{Closing}.
\item[(3)]
When~$v$ determines that the lists $\texttt{Busy}_v$ and $\texttt{Processors}_v$ are different: 
this happens when there is an active processor in $\texttt{Processors}_v$ that is missing from $\texttt{Busy}_v$, meaning that this processor will no longer execute \texttt{Main} as it switched to executing \texttt{Closing}.
\end{enumerate}
Once \texttt{Done$_v$} is set to true, processor~$v$ removes its identifier~$v$ from \texttt{Busy}$_v$  so that the knowledge that $v$ is no longer busy is propagated via messages carrying copies of lists.

We refer to each iteration of the repeat loop, where either \texttt{Main} or \texttt{Closing} is invoked, as a \emph{phase}.
Correspondingly, we call a phase either \emph{main} or \emph{closing}, depending on which procedure is invoked.
Each phase consists of three \emph{rounds}: 1)~receiving messages,  2)~performing local computation, and 3)~multicasting messages.
The goal of main phases is to complete all tasks.
Pseudocode for procedure \texttt{Main}, and so for each main phase, is given in Figure~\ref{fig:procedure-main}.
A processor whose list \texttt{Tasks} becomes empty switches to executing procedure \texttt{Closing}.
Now the goal is to inform all active processors that all tasks are complete.
This is accomplished in a manner similar to that of performing tasks, in that informing a processor is now treated as a ``task.''
To perform a ``task'' of this kind, first an item from the \texttt{Busy} list is selected, next the \texttt{Stop} signal is sent to it, and finally the item is removed from the list.
A processor halts when its  \texttt{Busy} list  becomes empty.
Pseudocode for procedure \texttt{Closing}, and so for each closing phase, is given in Figure~\ref{fig:procedure-closing}.

When a processor needs to select an item from a list, it does so according to a procedure
that implements a selection rule denoted by the name \textsf{Selection\_Rule} in Figures~\ref{fig:procedure-main} and~\ref{fig:procedure-closing}.

We instantiate the generic algorithm by specifying the following:
\begin{enumerate}
\item[1)]
the overlay graphs governing sending messages to neighbors and what it means for a processor to consider itself compact, and 
\item[2)]
the rules used to select tasks to be performed in a round.
\end{enumerate}
We use the overlay graphs $G(p,f)$ as specified in Section~\ref{sec:graphs}, for any $f<p$.


\begin{lemma}
\label{lem:communication-from-work}

If $\cW$ is the work complexity of the generic algorithm that uses the overlay graph $G(p,f)$, for any $f < p$, then its communication complexity is $\cO\Bigl(\Bigl(\frac{p}{p-f}\Bigr)^{3.31}\, \cW\Bigr)$.
\end{lemma}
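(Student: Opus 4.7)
The plan is to bound the per-round messaging cost of each active processor by the maximum degree of the overlay graph, and then multiply by the total number of active processor-rounds, which is exactly what $\cW$ counts.

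First, I would observe that messages are sent only in Round~3 of each phase (both in \texttt{Main} and in \texttt{Closing}). In Round~3 of \texttt{Main}, a processor~$v$ sends a single multicast to the subset of its overlay-graph neighbors contained in $\texttt{Processors}_v$; in Round~3 of \texttt{Closing}, the processor additionally sends a single \texttt{Stop} signal to $\textsf{Selected\_Processor}_v$. Under the point-to-point accounting convention adopted in Section~\ref{sec:model}, each such multicast contributes at most $\deg_{G(p,f)}(v) + 1$ point-to-point messages per phase, which by Lemma~\ref{lem:degree-power-of-Lp} is $\cO\bigl(\bigl(\tfrac{p}{p-f}\bigr)^{3.31}\bigr)$.

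Next, I would relate these per-phase message counts to work. Let $k_v$ denote the number of phases in which processor~$v$ is active (i.e., neither crashed nor halted at the start of the phase). Each such phase consists of exactly three rounds, so $v$'s contribution to the total work is at least~$k_v$, and thus $\sum_v k_v \le \cW$. On the other hand, $v$'s contribution to the message count is at most $k_v \cdot \cO\bigl(\bigl(\tfrac{p}{p-f}\bigr)^{3.31}\bigr)$. Summing over all processors gives
\[
\cM \;\le\; \sum_v k_v \cdot \cO\Bigl(\Bigl(\tfrac{p}{p-f}\Bigr)^{3.31}\Bigr) \;=\; \cO\Bigl(\Bigl(\tfrac{p}{p-f}\Bigr)^{3.31}\, \cW\Bigr),
\]
which is the desired bound.

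There is no real obstacle here: the crux is simply that the algorithm has been designed so that all inter-processor communication is restricted to edges of $G(p,f)$, which turns the degree bound of Lemma~\ref{lem:degree-power-of-Lp} directly into a per-round messaging bound. The one thing worth double-checking is that halted processors, which continue to contribute to $\cW$ until algorithm termination, send no further messages (they do not, since \texttt{halt} terminates the processor), so the inequality $\cM \le \cO(\Delta)\cdot\cW$ is not tight only in the direction favorable to us.
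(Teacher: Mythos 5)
Your proof is correct and takes essentially the same approach as the paper's: bound per-phase communication by the overlay-graph degree (via Lemma~\ref{lem:degree-power-of-Lp}) and multiply by the number of active processor-phases, which is at most $\cW$. The paper states this in two sentences; you have simply spelled out the bookkeeping (the extra \texttt{Stop} message, the fact that halted processors contribute to $\cW$ but not $\cM$), which are sound details but not a different argument.
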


\begin{proof}
Let $\Delta$  be the maximum degree of~$G(p,f)$.
Since work is the number of rounds a processor is active, the algorithm sends up to~$\Delta$ messages for each unit of work.
It follows that the communication cost is $\cM=\cO(\Delta\cdot \cW )$. 
The maximum degree~$\Delta$ of the graph~$G(p,f)$ is $\cO\bigl(\bigl(\frac{p}{p-f}\bigr)^{3.31}\bigr)$,  by Lemma~\ref{lem:degree-power-of-Lp}.
\end{proof}


\Paragraph{Compactness.}

As stipulated in Section~\ref{sec:graphs}, a subgraph of the overlay graph is compact if its
size is at least~$(p-f)/7$ and its diameter  is at most~$g(p)=30\lg p +2$.
Let the \emph{range} of a processor~$v$ be the subgraph of the overlay graph that contains each non-faulty processor  whose distance from~$v$  in the overlay graph is at most~$g(p)$.
A processor  is said to be \emph{compact} if its range includes at least $(p-f)/7$ nodes.
Any processor~$v$ in a compact subgraph~$H$ is compact since $H$ is included in the range of~$v$.

The processors estimate distances to other processors in the overlay graph using their local views of this graph.
The distances may change during an execution due to crashes and halts, 
so it is possible for a graph to stop being compact at some point in time.
Each processor~$v$ computes the subgraph of the overlay graph $G(p,f)$ induced by the processors in \texttt{Processors}$_v$, that is, 
by the processors that $v$ still considers being active.
Next $v$ computes the distances from each node of~$G(p,f)$ to itself in this subgraph.
If the estimated size of the range of processor~$v$ is at least~$(p-f)/7$, then $v$ is said to \emph{consider itself compact}.
This property is used as a criterion for halting, see Figure~\ref{fig:procedure-closing}.
Observe that, if a processor~$v$ is compact, but sufficiently many processors in the range of~$v$ 
have halted, thus no longer being active, then $v$ may consider itself not compact at this point.


\Paragraph{Correctness of the generic algorithm.}

Recall that the algorithm starts with initialization, followed by the repeat loop which, depending on the value of $\texttt{Phase}_v$,  invokes either procedure \texttt{Main} or procedure \texttt{Closing} (see Figure~\ref{fig:algorithm}).


\begin{proposition} 
\label{proposition:correctness-generic}

Any instantiation of the generic algorithm solves the \DA\ problem against  the unbounded adversary.
\end{proposition}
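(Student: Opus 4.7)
The plan is to establish two facts: (i) by the time any processor first sets $\texttt{Done}$, every task has been performed, and (ii) every non-crashed processor halts within a bounded number of phases. The argument uses only the structure of the generic algorithm, so it will be independent of the selection rule and of the choice of overlay graph.

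First I would prove the invariant that if a task identifier $y$ is absent from $\texttt{Tasks}_v$ at the end of round $r$ for some active processor $v$, then $y$ has already been performed by some processor in a round $r' \le r$. The proof is a straightforward induction on $r$: the only ways $y$ leaves $\texttt{Tasks}_v$ are that $v$ just performed $y$ itself (handled directly), or that $v$ received a $\texttt{Tasks}$ list missing $y$ from a neighbor in an earlier round, in which case the inductive hypothesis applies to the neighbor.

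Next, let $T$ be the first round in which some processor $v$ sets $\texttt{Done}_v = \texttt{true}$. I would show that this event must be triggered by $\texttt{Tasks}_v$ becoming empty; combined with the previous paragraph, this yields that all tasks have been performed by round $T$. The $\texttt{Stop}$-signal trigger is ruled out because $\texttt{Stop}$ is sent only inside procedure $\texttt{Closing}$, which requires $\texttt{Done}$ to be already true and thus contradicts the minimality of $T$. The $\texttt{Busy}_v \neq \texttt{Processors}_v$ trigger is ruled out by the lockstep invariant $\texttt{Busy}_w = \texttt{Processors}_w$ for every active $w$ at every round $t < T$. The crux of this invariant is that, in the absence of any prior $\texttt{Done}$ event, an identifier $x$ can be removed from a $\texttt{Busy}$ list only through the ``$x$ not in $\texttt{Processors}_v$'' clause, so every $\texttt{Busy}$ deletion is synchronized with a $\texttt{Processors}$ deletion; moreover any outgoing message whose $\texttt{Busy}$ field omits $x$ also omits $x$ from $\texttt{Processors}$, so tracing the chain of propagated removals backward it bottoms out at a round where some processor observed a direct neighbor failure and excised $x$ from both lists. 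This lockstep argument is the main delicate step of the proof.

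Finally, I would bound the number of phases each non-crashed processor $v$ executes. In procedure $\texttt{Main}$, whenever $\texttt{Tasks}_v$ is nonempty, exactly one task is performed and removed, so $|\texttt{Tasks}_v|$ decreases by at least one per main phase; hence $v$ transitions to $\texttt{Closing}$ within at most $t$ main phases (via one of the three $\texttt{Done}$ triggers). In procedure $\texttt{Closing}$, in each phase $v$ either halts (because $\texttt{Busy}_v$ is empty or because it no longer considers itself compact) or removes exactly one element from $\texttt{Busy}_v$; since $|\texttt{Busy}_v| \le p$ on entry to $\texttt{Closing}$, $v$ halts within at most $p+1$ closing phases. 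The unbounded adversary leaves at least one non-crashed processor, so the first $\texttt{Done}$ event is guaranteed to occur; by the preceding paragraph every task has been performed by that round, and by the present paragraph every non-crashed processor halts within $\cO(t+p)$ additional phases, completing the proof that \DA\ is solved.
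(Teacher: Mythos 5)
Your proof is correct, and it rests on the same underlying observation as the paper's — that before any $\texttt{Done}$ event no processor removes itself from its $\texttt{Busy}$ list, so $\texttt{Busy}$ deletions always co-occur with $\texttt{Processors}$ deletions — but you package it differently. The paper argues by induction on the length of a chain of messages leading to any $\texttt{Done}$ event, tracing every such event (whether it arises from an empty $\texttt{Tasks}$ list, a received $\texttt{Stop}$ signal, or $\texttt{Busy} \neq \texttt{Processors}$) back to an origin processor whose $\texttt{Tasks}$ list emptied; this handles all three triggers in one uniform induction. You instead prove a global lockstep invariant ($\texttt{Busy}_w = \texttt{Processors}_w$ for all active $w$ prior to the first $\texttt{Done}$ event $T$) and use it to rule out the $\texttt{Stop}$ and $\texttt{Busy} \neq \texttt{Processors}$ triggers for the earliest such event, so the first $\texttt{Done}$ must come from an empty $\texttt{Tasks}$ list. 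Your framing isolates a clean invariant and is arguably easier to verify locally, at the modest cost of reasoning only about the first $\texttt{Done}$ event rather than all of them; the paper's chain induction is a bit more compressed but directly establishes the claim "if $\texttt{Done}_v$ is ever true then some $\texttt{Tasks}_u$ was empty." One small precision point: when $\texttt{Tasks}_v$ is nonempty a main phase removes at least one task (not exactly one), since line 2.a.3 may remove further tasks learned complete from received messages; this does not affect your $t$-phase bound. Your termination accounting (at most $t$ main phases and $p+1$ closing phases for a non-crashing processor) matches the paper's.
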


\begin{proof} 
First we show that if a task is not in  \texttt{Tasks}$_v$ of processor~$v$ in some round, then the task is complete.
Note that processor~$v$ removes a task from \texttt{Tasks}$_v$  only if either $v$ performs the task itself 
(line 2.b.3 in Figure~\ref{fig:procedure-main}), or $v$ receives a copy of a \texttt{Tasks} list without this task in it 
(line 2.a.3 in Figure~\ref{fig:procedure-main}).
The proof is by induction on the length of the path traversed by a sequence of messages that 
brings the information that the task is not on the list \texttt{Tasks}$_u$ of some processor~$u$.

A processor may halt only while executing a closing phase.
In order to enter this phase, the private variable \texttt{Done} needs to be set to true
in the pseudocode of Figure~\ref{fig:procedure-main}.
We show that if  \texttt{Done}$_v$ is true at some $v$, then \texttt{Tasks}$_u$ is empty at some~$u$; this in turn implies that all tasks are complete.
Inspecting the code of \texttt{Main} in Figure~\ref{fig:procedure-main}, 
we see that there are only three ways for \texttt{Done}$_v$ to be set to true.
(1) When \texttt{Tasks}$_v$ becomes empty (line 2.b  in Figure~\ref{fig:procedure-main});
this provides the base of induction.
(2) When $v$ receives \texttt{Stop} signal (line 2.d), and 
(3) when some processor on \texttt{Processors} list is missing from the \texttt{Busy} list (line 2.c) because it was missing from some received \texttt{Busy} list. 
Processor~$v$ removes its identifier from \texttt{Busy}$_v$ 
(line 2.e.1 in Figure~\ref{fig:procedure-main}) and sends \texttt{Stop} signal (line 3.a in Figure~\ref{fig:procedure-closing})  only when its variable \texttt{Done}$_v$ is true.
This provides the inductive step, since each message that triggers setting \texttt{Done} to true increments by one the length of the path originated at a processor whose \texttt{Tasks} list became empty.

We need to show that the correctness and termination conditions are satisfied in any execution.
Every non-faulty processor performs a task in a main phase, therefore a processor performs this phase at most $t$ times.
Similarly, every non-faulty processor removes at least one processor from its \texttt{Busy} list in each iteration of the closing phase, 
except possibly in its last iteration (if processor~$v$ halts at line~2.b in Figure~\ref{fig:procedure-closing}, 
then it does not remove a processor from \texttt{Busy}$_v$), hence a processor performs this phase at most $p$ times.
Therefore each processor halts by round $t+p$, unless it crashes earlier.
There is at least one processor that never crashes according to the definition of the unbounded adversary.
This processor halts eventually.
If a processor~$v$ halts, it happens in the closing phase, so \texttt{Done}$_v$ is true.
This means that all the tasks are complete.
\end{proof}


\Paragraph{Task selection rules.}

What remains to specify to make the generic algorithm fully instantiated is the task selection rules.
We will consider four such rules.
The resulting instantiations are named as follows:
\begin{itemize}
\Item
\aBL: 
a constructive deterministic algorithm based on load balancing; 
\Item
\aRP: 
a constructive randomized algorithm using random permutations to select tasks; 
\Item
\aDP: 
a nonconstructive deterministic algorithm using permutations that are part of the code to select tasks; 
\Item
\aEP:
a hybrid deterministic algorithm that uses a modification of algorithm \aDP\ combined with algorithm \aDMY\ from~\cite{PriscoMY94}.
\end{itemize}
The purpose of considering the randomized algorithm \aRP\ is only to prove the existence of permutations that make \aDP\ efficient.

\subsection{Epochs}

\label{sec:epoch-core-processors}

We partition an execution of the generic algorithm into disjoint \emph{epochs}, denoted $\cE_i$,  for $i\ge 0$.
Epoch~$\cE_0$ denotes the initialization.
An epoch $\cE_i$, for $i\ge 1$, is defined to be a segment of $g(p)=30\lg p+2$ consecutive phases; the index~$i$ in~$\cE_i$ denotes the $i$th such segment of phases.
Epochs include the phases occurring after  termination.

We use $K_i$, for $i\ge 0$, to denote the subgraph of the overlay graph~$G(p,f)$ induced 
by the nodes that are non-faulty  at the beginning of the epoch~$\cE_i$.
Similarly, we use $G_i$, for $i\ge 0$, to denote the subgraph of the overlay graph~$G(p,f)$ induced 
by the nodes that are non-faulty through the end of the epoch~$\cE_i$.
This means that epoch~$\cE_i$ begins with $K_i$ as the set of non-faulty nodes and ends with~$G_i$ as the set of non-faulty nodes.
We have that $|K_i|\ge p-f$ and $|G_i|\ge p-f$ and that the equalities $G_i = K_{i+1}$ and the inclusions $G_i \subseteq K_i $ hold for $i\ge 0$.

Some epochs $\cE_i$ may have the property that among the processors that start the epoch as active
(these make the set~$K_i$) relatively many crash in the course of the epoch.
The effect is such that those processors that remain non-faulty through the end of the epoch 
(these constitute the set~$G_i$) make a small fraction of the original set~$K_i$.
Formally, an epoch $\cE_i$ is defined to be \emph{stormy} when $|G_{i}| < |K_i |/2$. 
An epoch~$\cE_i$ is \emph{calm} if it is not stormy, that is, when $|G_{i}| \ge |K_i |/2$.
In the analysis of work performance in the following sections, we are mostly concerned with calm epochs, because stormy epochs are taken care of by the following Lemma.


\begin{lemma}
\label{lem:stormy-epochs}

Stormy epochs contribute $\cO(p\log p)$ to work.
\end{lemma}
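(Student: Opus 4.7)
The plan is to exploit the defining feature of a stormy epoch -- that at least half of the processors active at its start crash during it -- to bound both the number of stormy epochs and their cumulative cost.

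First I would observe that the per-epoch work contribution is bounded uniformly: since an epoch $\cE_i$ consists of $g(p)=30\lg p+2$ phases and each phase contributes a constant number of rounds to available-processor-step work per active processor, the total work in $\cE_i$ is at most $|K_i|\cdot \cO(\log p)$, regardless of whether the epoch is stormy or calm.

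Next I would enumerate the stormy epochs in the order they occur as $\cE_{i_1},\cE_{i_2},\ldots,\cE_{i_s}$. By definition of stormy, $|G_{i_j}|<|K_{i_j}|/2$. Because no processor restarts and no processor that is faulty at the end of $\cE_{i_j}$ can be non-faulty at the start of $\cE_{i_{j+1}}$, we have $|K_{i_{j+1}}|\le |G_{i_j}|<|K_{i_j}|/2$. Iterating this inequality from $|K_{i_1}|\le p$ yields $|K_{i_j}|\le p/2^{j-1}$. Two consequences follow: (a) the number of stormy epochs is at most $s\le \lg p$, since $|K_{i_s}|\ge 1$, and (b) the sizes form a geometric sequence summing to at most $2p$.

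Combining the per-epoch bound with the geometric sum, the total work contributed by all stormy epochs is
\[
\sum_{j=1}^{s}|K_{i_j}|\cdot\cO(\log p)\;\le\;\cO(\log p)\sum_{j=1}^{\lg p}\frac{p}{2^{j-1}}\;=\;\cO(p\log p),
\]
which is in particular $\cO(p\log^2 p)$, establishing the lemma. There is no real obstacle here; the only thing to be careful about is justifying the monotonicity $|K_{i_{j+1}}|\le |G_{i_j}|$ using the no-restart assumption, and observing that even processors that halt (rather than crash) during a stormy epoch contribute at most $g(p)$ to that epoch's work, so they do not invalidate the uniform bound $|K_i|\cdot\cO(\log p)$.
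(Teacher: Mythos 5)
Your proof is correct, and in fact it establishes a strictly stronger bound than the lemma claims. The paper's own argument is cruder: it observes that there are $\cO(\log p)$ stormy epochs (since each halves the number of non-faulty processors), bounds the work in any single epoch by $p\cdot g(p)=\cO(p\log p)$, and multiplies to get $\cO(p\log^2 p)$. You instead exploit the geometric decay $|K_{i_j}|\le p/2^{j-1}$ directly, so the sum $\sum_j|K_{i_j}|g(p)$ telescopes to $\cO(p\log p)$ rather than being bounded term-by-term by $p\,g(p)$. Both arguments rest on the same two ingredients — each stormy epoch halves $|K|$, and an epoch lasts $g(p)=\cO(\log p)$ rounds — but yours pays attention to the shrinking processor population during the summation, which the paper discards. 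The chain $|K_{i_{j+1}}|\le|G_{i_j}|<|K_{i_j}|/2$ is justified exactly as you say: $G_{i_j}=K_{i_j+1}$, the $K$'s are monotone non-increasing because there are no restarts, and $i_{j+1}\ge i_j+1$. Since the lemma only claims $\cO(p\log^2 p)$, your $\cO(p\log p)$ certainly suffices; nothing in the downstream analysis appears to rely on the weaker form being tight, so the improvement is harmless though also not exploited elsewhere in the paper.
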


\begin{proof}
An epoch lasts for $g(p)=\cO(\log p)$ rounds.
Each processor contributes $\cO(\log p)$ to the number of the available processor steps in an epoch.
The sum, taken over all stormy epochs, of the numbers of processors beginning the epoch as non-faulty, is $\cO(p)$, because it is the sum of a geometrically decreasing sequence.
Therefore, the total work over stormy epochs can be estimated as the product of $\cO(p)$ and $\cO(\log p)$.
\end{proof}


\Paragraph{Core processors.}

We define a sequence of subgraphs $ H_i$ in~$G(p,f)$, for $i\ge 0$, by stipulating that $H_i=P(G_i)$, where $P$ is the subgraph function of the overlay graph~$G(p,f)$ from Definition~\ref{def:P} in Section~\ref{sec:graphs}.
Lemma~\ref{lem:subgraph-property} in that section states that the construction satisfies the requirements of Definition~\ref{def:subgraph-property} of subgraph property. 
Graph~$H_i$ is called \emph{the core graph for~$\cE_i$}, and the processors that belong to~$H_i$ are \emph{the core processors for~$\cE_i$}.
When a core processor halts, it remains core, as we do not consider halted processors to be faulty.

The graphs $K_i$, $G_i$ and $H_i$,  for $i\ge 0$, are uniquely determined by how crashes occur in an execution.
It follows by the definition of subgraph property that each $H_i$ is compact and that $H_{i+1}\subseteq H_{i}$ and $|H_i|\ge |G_i|/7\ge (p-f)/7$, for $i\ge 0$.
These properties guarantee that some processors remain core throughout the whole execution.

The diameter of~$H_i$, for $i\ge 0$, is at most~$g(p)$, which is the reason why epoch is defined to be a segment of~$g(p)$ consecutive phases.
This definition of epoch guarantees that any epoch is of a sufficient duration for all the core processors to propagate their knowledge among themselves by flooding across the overlay graph.
Here the knowledge of processors can be restricted to mean the contents of their lists, as this is what the algorithms uses.
When a processor $v\in H_i$, that is core for an epoch~$\cE_i$, halts in~$\cE_i$, then $v$ cannot participate in such flooding.
The next lemma helps to argue about flooding in such situations.


\begin{lemma}
\label{lem:core-compact-forever}

If a processor $v$ that is core for an epoch $\cE_i$ halts in that epoch, then, in the last phase before $v$ halts, each of the lists \texttt{Busy}$_v$ and \texttt{Tasks}$_v$ is empty.
\end{lemma}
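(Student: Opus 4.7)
The approach is to split the two claims and handle them separately. The claim about \texttt{Tasks$_v$} is immediate from the code: a processor halts only during procedure \texttt{Closing}, the only way to enter \texttt{Closing} is through step~2.e of procedure \texttt{Main}, and sub-step~2.e.2 there explicitly sets \texttt{Tasks$_v$} to the empty list; since procedure \texttt{Closing} never modifies \texttt{Tasks$_v$}, the list remains empty through the last phase before $v$ halts.

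The claim about \texttt{Busy$_v$} requires more work. Inspection of procedure \texttt{Closing} reveals that $v$ can commit to halting at the end of the current phase in exactly two places: the ``not compact'' guard right after the list update, and the else-branch of the \texttt{Busy$_v$} nonempty test, which is entered only when \texttt{Busy$_v$} is already empty. In the latter case the conclusion is immediate, so the substantive task is to show that a core processor cannot halt via the first (``not compact'') branch.

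I would rule out that branch by induction on the order in which core processors for $\cE_i$ halt during $\cE_i$. The key structural fact is that core processors for $\cE_i$ cannot crash in $\cE_i$, since $H_i \subseteq G_i$ and $G_i$ consists of processors that remain non-faulty through the end of $\cE_i$; hence any core processor that disappears from the active set within $\cE_i$ does so by halting, and by the inductive hypothesis does so with an empty \texttt{Busy} list. For the base case, the first core processor to halt in $\cE_i$ sees every other core processor still sending messages throughout $\cE_i$, so $H_i$ is contained in \texttt{Processors$_v$}; since $H_i$ is compact with diameter at most $g(p)$ and size at least $(p-f)/7$, the range of $v$ contains $H_i$ and $v$ considers itself compact, forcing the halt to go via the second branch. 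For the inductive step, the halting of each earlier core processor is accompanied by its self-removal from \texttt{Busy} in step~2.e.1 of \texttt{Main} and by its final Round~3 multicast; flooding propagates this information through $H_i$ during the rest of $\cE_i$, keeping enough of $H_i$ in \texttt{Processors$_v$} to certify $v$'s compactness.

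The hard part will be the bookkeeping in the inductive step. One must track how the incremental disappearance of halted core processors from \texttt{Processors$_v$} interacts with the ``neighbor absent'' and ``missing from a received list'' propagation rules: removing a halted processor can enlarge the apparent diameter of the residual subgraph of \texttt{Processors$_v$} or reduce the size of $v$'s range, so the induction must maintain invariants relating \texttt{Processors$_v$} to the still-active subset of $H_i$. The decisive ingredient is the coincidence that an epoch is exactly $g(p) = 30 \lg p + 2$ phases long, matching the compactness diameter bound; within one such window the flooding mechanism has enough time both to inform $v$ that halted core processors are gone (so they are removed from \texttt{Busy$_v$} by rule~2.a.2 of \texttt{Closing}) and to keep $v$'s view of the still-active core processors coherent enough to preserve the compactness self-certification.
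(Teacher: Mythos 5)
Your treatment of \texttt{Tasks}$_v$ is exactly the paper's, and your framing of the \texttt{Busy}$_v$ claim — induction on halting order, base case via $H_i \subseteq \texttt{Processors}_v$, ruling out the not-compact branch — matches the paper's structure. But your inductive step has a genuine gap, and the mechanism you gesture at is not the one that actually closes the argument.

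You claim that flooding keeps ``enough of $H_i$ in \texttt{Processors}$_v$ to certify $v$'s compactness,'' and you identify the coincidence of the epoch length $g(p)$ with the compactness diameter bound as the decisive ingredient. Neither is right. Halted core processors absolutely do get removed from $\texttt{Processors}_v$ — the neighbor-absent rule guarantees it — so any argument that $H_i$ simply persists in $\texttt{Processors}_v$ is false as stated, and you yourself flag in the last paragraph that you don't see how to discharge the resulting bookkeeping. The paper's argument does not try to maintain compactness directly. Instead it is a race between two flooding waves: when a core processor $w$ halts, its final Round~3 multicast (which, by the induction hypothesis, carries an \emph{empty} $\texttt{Busy}_w$ list) is received in the very next phase, whereas the \emph{absence} of a message from $w$ is only first detected one phase after that. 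Both pieces of information then spread at one hop per phase, so the empty-\texttt{Busy} wave stays exactly one phase ahead of the remove-from-\texttt{Processors} wave along every path. Hence in the first phase in which $v$ would delete a halted core processor from $\texttt{Processors}_v$ — a prerequisite for $v$ to stop considering itself compact, since otherwise all of $H_i$ is still in its range — $v$ has already received an empty \texttt{Busy} list in an earlier phase and set $\texttt{Busy}_v$ to empty, and that list never regrows. That is the whole argument; the epoch length $g(p)$ plays no role here (it matters for Lemmas~\ref{lem:one-epoch-switch} and~\ref{lem:inequality-ui-by-si}, which need flooding to traverse $H_i$ within one epoch, but not for this purely local one-phase-ahead property).

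One smaller inaccuracy: you attribute the relevant information to the self-removal in step~2.e.1 of \texttt{Main}. That step only deletes $v$ from $\texttt{Busy}_v$; it does not empty the list. The relevant event is the \texttt{Busy} list becoming entirely empty, which for an earlier-halting core processor is supplied by the induction hypothesis and is broadcast in its last Round~3 of \texttt{Closing}, not in \texttt{Main}.
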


\begin{proof}
First we argue about the list \texttt{Busy}$_v$.
A processor can halt only while executing procedure \texttt{Closing}.
By the pseudocode of this procedure  in Figure~\ref{fig:procedure-closing}, there are two possible triggers to halt: one is when the list \texttt{Busy} is empty and another is when the processor does not consider itself compact.
We argue that the list \texttt{Busy}$_v$ is empty in the phase just after which $v$ halts by induction on this phase's number.

First, let us consider the base of induction.
Any processor in~$H_i$ is compact in~$\cE_i$, because $H_i$ is included in the $v$'s range.
Therefore a processor in $H_i$ considers itself compact as long as no processor in~$H_i$ has halted yet.
The base of induction holds because when a processor $v$ halts and no other core processor halted before then $v$ considers itself compact and so its list \texttt{Busy}$_v$ is empty.

Next consider the inductive step. 
Let $v$ halt in a phase such that any other core processor, say, $x$ that halted before $v$ had its list \texttt{Busy}$_x$ empty.
Observe that this empty list \texttt{Busy}$_x$ got forwarded to the $x$'s neighbors just before $x$ halted.
This is by the pseudocode in Figure~\ref{fig:procedure-closing}, which specifies that a processor halts after completing a phase, and so after sending all the messages in that phase. 
An alternative for the list \texttt{Busy}$_v$ to be empty, as a trigger making $v$ halt, is that $v$ does not consider itself compact.
Let us consider this possibility to show that it will not occur.
For~$v$ not to consider itself compact requires removing sufficiently many \emph{core} processors from its list \texttt{Processors}$_v$.
A processor~$w$ is removed from the lists \texttt{Processors} of its neighbors for the first time after a phase with no communication from~$w$, and later this information is reflected in the contents of the forwarded lists \texttt{Processors}.
When $w$ is core that halted before $v$ then the empty list \texttt{Busy}$_w$ was received by the $w$'s neighbors before $w$ stopped communicating with them, as we already argued. 
Therefore the information about the empty \texttt{Busy} lists of the core processors that halted before~$v$ reaches~$v$ before the information that would trigger removing these processors from the list \texttt{Processors}$_v$.
When $v$ obtains the first such a message with an empty  \texttt{Busy} list, then $v$ makes its  list \texttt{Busy}$_v$ empty in the same phase.
The list \texttt{Busy}$_v$ stays empty, once set to be such, until $v$ halts, because entries from this list may be deleted but are never added after initialization.
This completes the proof of the inductive step.

Next we reason about the list \texttt{Tasks}$_v$.
A transition from \texttt{Main} to \texttt{Closing} occurs after line 2.e.3 of Figure~\ref{fig:procedure-main} is executed.
This line is triggered by the condition $\texttt{Done}_v = \texttt{true}$ in line 2.e
that also causes list \texttt{Tasks} to be emptied in line 2.e.2.
It follows that any processor executing procedure \texttt{Closing} has its list \texttt{Tasks} empty.
A processor can halt only while executing procedure \texttt{Closing}, so its list \texttt{Tasks} is empty when it halts.
\end{proof}

The properties of epochs stated as Lemmas~\ref{lem:one-epoch-switch} and~\ref{lem:inequality-ui-by-si} below  are shown by referring to Lemma~\ref{lem:core-compact-forever}.


\begin{lemma}
\label{lem:one-epoch-switch}

If a processor that is core in epoch~$\cE_i$ switches to closing phases by the beginning of~$\cE_{i}$, then every  core processor in~$\cE_i$ switches to closing phases by the end of epoch~$\cE_{i}$, for $i\ge 1$.
\end{lemma}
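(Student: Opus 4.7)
The plan is to argue that information about $v$'s switch to Closing propagates via flooding through $H_i$ during $\cE_i$, exploiting the fact that $H_i$ has diameter at most $g(p)$ and that $\cE_i$ consists of exactly $g(p)$ phases. The key mechanism is the condition in line 2.c of \texttt{Main}: whenever a processor's \texttt{Busy} list strictly shrinks below its \texttt{Processors} list, it sets $\texttt{Done}$ to true and switches to \texttt{Closing}. Since switching causes a processor to remove itself from its own \texttt{Busy} list (line 2.e.1) and to forward the updated list in Round~3, the ``switched'' status travels one hop per phase.

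Concretely, I would fix an arbitrary core processor $w\in H_i$, choose a shortest path $v=u_0,u_1,\ldots,u_d=w$ in $H_i$ (so $d\le g(p)$), and prove by induction on $j$ that $u_j$ has $\texttt{Phase}=\texttt{Closing}$ by the end of phase~$j$ of $\cE_i$. The base $j=0$ holds by hypothesis, counting ``end of phase~$0$'' as the start of~$\cE_i$. Since $u_{j-1}\in H_i\subseteq G_i$, it survives $\cE_i$ without crashing and so completes every multicast it initiates, including the one made either in its transition-phase (Round~3 of Main) or while running Closing. In either case, the Busy list it sends out in the phase in which it first executes with $\texttt{Phase}=\texttt{Closing}$ does not contain $u_{j-1}$.

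For the inductive step I would assume $u_{j-1}$ switched by end of phase $j-1$ (if it switched earlier, a strictly stronger statement holds and we route through Case~A below). Either $u_j$ has already switched (done), or $u_j$ is still executing \texttt{Main} in phase~$j$. In the latter case, in Round~1 of phase~$j$ the processor $u_j$ receives a message from $u_{j-1}$ whose \texttt{Busy} field omits~$u_{j-1}$. The argument then uses that $u_{j-1}$ is still present in $\texttt{Processors}_{u_j}$: no processor gets removed from a \texttt{Processors} list by the ``missing neighbor'' rule while it is still sending messages, and removals can only propagate from an initial such event, so a processor that is non-faulty throughout $\cE_i$ remains in every \texttt{Processors} list throughout $\cE_i$. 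Thus line~2.a.2 forces $u_{j-1}$ out of $\texttt{Busy}_{u_j}$ while $u_{j-1}\in\texttt{Processors}_{u_j}$, the test in line~2.c fires, and $u_j$ transitions to \texttt{Closing} by the end of phase~$j$. Applying this for $j=d\le g(p)$ yields the conclusion.

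The main obstacle is the case where $u_{j-1}$ halts during $\cE_i$: then it no longer multicasts after halting, and moreover $u_{j-1}$ will eventually be removed from $\texttt{Processors}_{u_j}$ by the ``no message received'' rule, potentially breaking the mismatch that triggers $\texttt{Done}_{u_j}$. This is precisely where Lemma~\ref{lem:core-compact-forever} is essential: when a core processor of $\cE_i$ halts, its \texttt{Busy} list is already empty in its last phase, and the halting convention in the caption of Figure~\ref{fig:procedure-closing} guarantees that this empty list is multicast in Round~3 of that phase before the halt takes effect. So the neighbor $u_j$ receives an empty \texttt{Busy} list in the very next phase, \emph{before} $u_{j-1}$ is removed from $\texttt{Processors}_{u_j}$ (which requires a subsequent round of silence). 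The resulting mismatch between the now-empty $\texttt{Busy}_{u_j}$ and the still-nonempty $\texttt{Processors}_{u_j}$ again triggers the switch, so the inductive step goes through uniformly, whether $u_{j-1}$ halts or not.
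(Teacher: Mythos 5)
Your proof is correct and takes essentially the same approach as the paper's: both propagate the ``switched'' information across the compact core graph $H_i$ within one epoch of $g(p)$ phases, both trigger the switch via the $\texttt{Busy}\neq\texttt{Processors}$ test in line~2.c of \texttt{Main}, and both invoke Lemma~\ref{lem:core-compact-forever} (together with the halt-after-multicast convention) to ensure that a halted core processor does not break the flooding chain. Your explicit path-induction, which tracks each consecutive pair $(u_{j-1},u_j)$ and relies only on $u_{j-1}$ remaining in $\texttt{Processors}_{u_j}$, is a tidier way to package the same argument than the paper's higher-level flooding description, but the underlying mechanism is identical.
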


\begin{proof}
Let $v$ be a processor in~$H_{i}$ that  switches to closing phases prior to the beginning of~$\cE_{i}$. 
Processor~$v$  removes its identifier from \texttt{Busy}$_v$ in the round when it switches to closing, per line~2.e.1 in Figure~\ref{fig:procedure-main}.
Next $v$ sends its lists to the neighbors, resulting in each of them removing $v$ from their respective \texttt{Busy} lists,  per line 2.a.2  in Figures~\ref{fig:procedure-main} and line 2.a.2 in Figure~\ref{fig:procedure-closing}.
The fact that $v$ is missing in a \texttt{Busy} list propagates through all core  processors  by flooding within one epoch.
Node~$v$ stays in the lists \texttt{Processors} throughout epoch $\cE_i$, since $v$ belongs to~$G_i$.
Suppose a processor~$u$ receives a copy of \texttt{Busy} list without~$v$ in it in~$\cE_i$  and $u$ is still executing main phases.
The lists $\texttt{Busy}_u$ and $\texttt{Processors}_u$ are compared in line~2.c  in~Figure~\ref{fig:procedure-main}.
The fact that $\texttt{Busy}_u \neq \texttt{Processors}_u$ causes $u$ to set \texttt{Done}$_u$ to~\texttt{true} and next to switch to closing phases in line~2.e.3 in~Figure~\ref{fig:procedure-main}.

If such a flooding chain, as referred to above, fails, it is because some core processor~$w$ in the chain has already halted.
By Lemma~\ref{lem:core-compact-forever}, when a core processor~$w$ halts then its list \texttt{Busy}$_w$ is empty.
Copies of this empty list \texttt{Busy}$_w$ are forwarded to the $w$'s neighbors in the phase just before $w$ halts, by the pseudocode in Figure~\ref{fig:procedure-closing}.
Such forwarding of the empty \texttt{Busy} list to the neighbors concludes flooding from the $w$'s perspective, because this is the ultimate information to be forwarded in a flooding chain, as  the lists  \texttt{Busy} may shrink but they never grow. 
It is sufficient to forward an empty \texttt{Busy} list to the neighbors only once. 
This means that a halted processor does not disrupt a flooding chain of copies of \texttt{Busy} list.
\end{proof}


\Paragraph{Counting outstanding tasks.}

We make use of the following notation for sets: 
 $T_{v,i}$ is a set that contains all items in \texttt{Tasks$_v$} at the end of epoch~$\cE_i$, for $i\ge 0$.
When an epoch's index~$i$ is understood from the context, then we may simply write~$T_v$ for~$T_{v,i}$.
When  considering how much progress has been made by the core processors,
we use the following shorthand notation: $U_i$~stands for the set $\bigcup_{v\in H_i} T_{v,i}$, and $S_i$ denotes  $\bigcap_{v\in H_i}T_{v,i}$, for $i\ge 0$.
We use the following notations for numbers denoting sizes of sets: $u_i=|U_i|$ and $s_i=|S_i|$, for $i\ge 0$. 
For convenience, we let $s_{-1}=u_{-1}=t$.

Epoch $\cE_i$ begins with~$u_{i-1}$ tasks such that each of them belongs to at least one \texttt{Tasks} list of a core processor, and it ends with~$u_{i}$  tasks such that each of them belongs to at least one \texttt{Tasks} list of a core processor.
Similarly, epoch $\cE_i$ begins with the \texttt{Tasks} list of the core processors such that each list has at least~$s_{i-1}$ tasks in it, and it ends with \texttt{Tasks} list of the core processors such that each list has at least~$s_{i}$ tasks in it.


\begin{lemma}
\label{lem:inequality-ui-by-si}

The inequality $u_{i}\le s_{i-1}$ holds, for any $i\ge 0$.
\end{lemma}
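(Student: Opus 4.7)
My plan is to prove the stronger inclusion $U_i \subseteq S_{i-1}$, from which $u_i = |U_i| \le |S_{i-1}| = s_{i-1}$ follows immediately. The case $i = 0$ is trivial: after initialization every processor's \texttt{Tasks} list equals $[1,t]$, so $U_0 \subseteq [1,t]$, while the stated convention $s_{-1} = t$ yields the bound at once.

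For $i \ge 1$ I would argue by contradiction. Suppose $\tau \in U_i \setminus S_{i-1}$, so there exist $v \in H_i$ with $\tau \in T_{v,i}$ and $w \in H_{i-1}$ with $\tau \notin T_{w,i-1}$. Part~4 of Definition~\ref{def:subgraph-property} applied to $G_i \subseteq G_{i-1}$ gives $H_i = P(G_i) \subseteq P(G_{i-1}) = H_{i-1}$, so $v \in H_{i-1}$ as well. Since $H_{i-1}$ is compact by Lemma~\ref{lem:subgraph-property}, it has diameter at most $g(p) = 30\lg p + 2$, so there is a path $w = x_0, x_1, \ldots, x_k = v$ inside $H_{i-1}$ of length $k \le g(p)$. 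Crucially, $\cE_i$ is defined in Section~\ref{sec:epoch-core-processors} to consist of exactly $g(p)$ consecutive phases, matching this diameter bound.

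The heart of the proof is a hop-by-hop flooding along the chosen path during~$\cE_i$. At the end of the last phase of $\cE_{i-1}$, $w$ has $\tau \notin T_w$, and in Round~3 of that phase $w$ multicasts its \texttt{Tasks} list (without~$\tau$) to its overlay-graph neighbors, including~$x_1$. In Round~1 of the first phase of~$\cE_i$, $x_1$ receives this list; in Round~2 it applies the update rule of line~2.a.3 of Figure~\ref{fig:procedure-main} (or its analogue in procedure \texttt{Closing}) and deletes~$\tau$ from $T_{x_1}$. Iterating hop by hop, after $k \le g(p)$ phases of~$\cE_i$ the node $x_k = v$ has deleted~$\tau$ from its \texttt{Tasks} list, contradicting $\tau \in T_{v,i}$.

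The main obstacle I expect is that some intermediate $x_j$ may halt or crash during~$\cE_i$, threatening to break the relay chain. Halts are absorbed by Lemma~\ref{lem:core-compact-forever}: a core processor halts with an empty \texttt{Tasks} list, which it first multicasts in Round~3 of the phase just before halting; this empty list trivially lacks~$\tau$, so the information continues to propagate toward $x_{j+1}$. Crashes need a little more care, but the multicast goes to every overlay-graph neighbor rather than a single successor, and the compactness of $H_{i-1}$ together with the expansion properties of $G(p,f)$ established in Section~\ref{sec:graphs} supplies surviving alternative routes from $w$ to~$v$ of length at most $g(p)$; the reasoning parallels the flooding analysis carried out in the proof of Lemma~\ref{lem:one-epoch-switch}. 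Combining these two observations, $v$ inevitably receives, within the $g(p)$-phase budget of~$\cE_i$, some \texttt{Tasks} list from which $\tau$ has been removed, yielding the required contradiction and hence $U_i \subseteq S_{i-1}$.
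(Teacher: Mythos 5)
Your overall strategy matches the paper's: prove the set inclusion $U_i \subseteq S_{i-1}$ by flooding through a core compact subgraph during $\cE_i$, using the match between the epoch length $g(p)$ and the diameter bound from compactness, and invoking Lemma~\ref{lem:core-compact-forever} to dispose of halting core processors (who forward an empty \texttt{Tasks} list just before they stop). The $i=0$ base case is fine.

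The genuine gap is in the choice of flooding graph. You take $w \in H_{i-1}$ and build a relay path inside $H_{i-1}$, then wave at ``expansion properties'' to supply surviving alternative routes when some $x_j$ crashes during~$\cE_i$. That appeal is not grounded in anything the paper actually proves: the subgraph function $P$ and Lemma~\ref{lem:subgraph-property} give a compact subgraph of the \emph{survivors} at a fixed snapshot; they do not say that the overlay graph retains short paths between two given endpoints after an adversary deletes an arbitrary set of intermediate $H_{i-1}$ nodes in mid-epoch. Processors in $H_{i-1}\setminus H_i$ are precisely those that do not survive $\cE_i$, so your chain can be severed and nothing you cite repairs it. Lemma~\ref{lem:one-epoch-switch}, which you point to, handles only halts (exactly as Lemma~\ref{lem:core-compact-forever} allows), not crashes.

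The paper avoids the problem entirely by flooding inside $H_i$ rather than $H_{i-1}$. Since $H_i = P(G_i) \subseteq G_i$ and $G_i$ is the set of processors non-faulty through the \emph{end} of $\cE_i$, no processor in $H_i$ can crash during~$\cE_i$; the only possible disruption is a halt, and that is precisely what Lemma~\ref{lem:core-compact-forever} neutralizes. $H_i$ is compact, so its diameter is at most $g(p)$, and $\cE_i$ has $g(p)$ phases, so the flood completes. If you rewrite your argument with the relay chain inside $H_i$, the crash case disappears and the proof closes cleanly. (One caveat, which the paper itself is a little terse about: showing $T_{v,i}\subseteq T_{w,i-1}$ only for $v,w\in H_i$ gives $U_i\subseteq\bigcap_{w\in H_i}T_{w,i-1}$, a superset of $S_{i-1}=\bigcap_{w\in H_{i-1}}T_{w,i-1}$; to reach $S_{i-1}$ exactly one should observe that for $w\in H_{i-1}\setminus H_i$ the content of $T_{w,i-1}$ was already multicast to $w$'s neighbors in the final phase of $\cE_{i-1}$, before $w$ could crash, and from there the information floods through $H_i$ as above. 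Your instinct to take $w\in H_{i-1}$ was therefore right; you just need to hand the information off to $H_i$ at the epoch boundary rather than rely on the $H_{i-1}$ path surviving.)
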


\begin{proof}
We argue that $U_{i}\subseteq S_{i-1}$.
It is sufficient to show that $T_{v,i}\subseteq T_{w,i-1}$, for any two processors~$v$ and $w$ in~$H_{i}$.
If a task is not in  \texttt{Tasks}$_w$ at the beginning of~$\cE_{i}$, then this information will propagate through all of  $H_{i}$ through flooding by the end of this epoch;
this results in processor~$v$ removing the task from its list \texttt{Tasks}$_v$ in the epoch, 
unless the task was removed earlier.

If such flooding fails, it is because some core processor~$w$ in the chain have already halted.
By Lemma~\ref{lem:core-compact-forever}, when a core processor~$w$ halts then its list \texttt{Task}$_w$ is empty.
Copies of this empty list \texttt{Task}$_w$ are forwarded to the $w$'s neighbors in the phase just before $w$ halts, by the pseudocode in Figure~\ref{fig:procedure-closing}.
Such forwarding of the empty \texttt{Task} list to the neighbors concludes flooding from the $w$'s perspective, because this is the ultimate information to be forwarded in a flooding chain, since  the lists \texttt{Task} may only become smaller
(they never grow). 
It is sufficient to forward an empty \texttt{Tasks} list to the neighbors only once. 
This means that a halted processor does not disrupt a flooding chain of copies of \texttt{Tasks} lists.
\end{proof}

The sequences $\langle u_i\rangle_{i\ge 0}$ and $\langle s_i\rangle_{i\ge 0}$ are nonincreasing, and $s_i\le u_i$, for $i\ge 0$, directly from their definitions.
These properties combined with Lemma~\ref{lem:inequality-ui-by-si} allow us to use either one of the sequences $\langle u_i\rangle_{i\ge 0}$ and $\langle s_i\rangle_{i\ge 0}$ when measuring progress in the number of tasks completed.


\Paragraph{Main, mixed, and closing epochs.}

Recall that phases of an execution of an instantiation of the generic algorithm are partitioned into main and closing phases.
This partitioning of phases is independent for each processor, and is determined by which corresponding procedure the processor performs in a phase.
In parallel to having this partitioning of phases, we also partition the epochs of an execution into the related categories, for two of which we use the same terms.
Epochs are partitioned into the following groups, which make three  disjoint contiguous segments of an execution: 
\begin{enumerate}
\item[(1)] 
\emph{main epochs} are those  in which all core processors are busy, 
\item[(2)] 
 \emph{mixed epochs} are those  in which some core processors are busy while some are not, and 
\item[(3)] 
 \emph{closing epochs} are those  in which no core processors are busy.
\end{enumerate}
For calm main epochs~$\cE_i$, it is sufficient to estimate total work by the contribution from core processors, as is shown in the following lemma.


\begin{lemma}
\label{lem:core-processors-suffice}

If $w_i$ is the amount of work accrued by the core processors during a calm main epoch~$\cE_i$,  then the total work accrued during~$\cE_i$  is~$\cO(w_i)$.
\end{lemma}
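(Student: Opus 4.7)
The plan is to compare the total work accrued by all processors during $\cE_i$ (including non-core active processors and those that crash during the epoch) against the contribution of just the core processors, and to leverage both the calmness hypothesis and the lower bound on core size from the subgraph property.

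First, I would observe that since $\cE_i$ is a main epoch, every core processor is busy throughout $\cE_i$ by definition. Moreover, every processor $v \in H_i \subseteq G_i$ is non-faulty through the end of the epoch, so it neither crashes nor halts during $\cE_i$. Hence each core processor contributes the full $g(p)$ units of work, giving the exact identity $w_i = |H_i| \cdot g(p)$.

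Next I would bound the total work from above. Any processor contributing to work during $\cE_i$ must be non-faulty at the start of $\cE_i$, i.e., must belong to~$K_i$, and it can accrue at most $g(p)$ work during the epoch (crashing earlier only decreases its contribution). Therefore the total work during $\cE_i$ is at most $|K_i| \cdot g(p)$.

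The final step is to relate $|K_i|$ to $|H_i|$. Since $\cE_i$ is calm, $|G_i| \ge |K_i|/2$, i.e., $|K_i| \le 2|G_i|$. By the subgraph property (Lemma~\ref{lem:subgraph-property} with $\alpha = 1/7$), the core satisfies $|H_i| \ge |G_i|/7$, so $|G_i| \le 7|H_i|$. Combining these yields $|K_i| \le 14|H_i|$, and thus the total work is at most $14 |H_i| \cdot g(p) = 14 w_i = \cO(w_i)$. The argument is essentially a routine counting combined with two structural inequalities, so I do not expect any serious obstacle; the only subtlety is verifying that during a main epoch core processors really are active throughout (so that $w_i$ achieves its maximum $|H_i| \cdot g(p)$), which follows directly from the definitions of ``main epoch'' and~$H_i$.
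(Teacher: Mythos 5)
Your proof is correct and takes essentially the same route as the paper's: it uses the main-epoch observation to show that each core processor in $H_i$ accrues the maximum work for the epoch, combines the calmness bound $|K_i|\le 2|G_i|$ with the subgraph-property bound $|G_i|\le 7|H_i|$ to get $|K_i|\le 14|H_i|$, and then bounds the total work by $14 w_i$. The only minor cosmetic difference is that you make the per-processor accounting ($w_i = |H_i|\cdot g(p)$ versus total $\le |K_i|\cdot g(p)$) fully explicit, whereas the paper leaves it implicit.
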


\begin{proof}
No processor that is core in epoch~$\cE_i$ crashes in~$\cE_i$, as $H_i$ is determined by $G_i$, which consists of the nodes that do not crash by the end of~$\cE_i$.
No processor that is core in epoch~$\cE_i$ halts in~$\cE_i$, by the pseudocode in Figure~\ref{fig:procedure-main}.
Therefore, every node in graph $H_i$  performs work throughout the whole epoch~$\cE_i$.

Graph~$H_i$ is defined as the compact subgraph of~$G_i$ of the form $H_i=P(G_i)$,  where~$P$ is a subgraph function by Lemma~\ref{lem:subgraph-property}, so that $|G_i|\le 7 |H_i|$.
We combine this fact with the inequality $|K_{i}| \le 2 |G_i |$, given by the assumption, to obtain $|K_i| \le 14 |H_i |$.

This implies that the total work accrued during~$\cE_i$  is at most $14 w_i$.
\end{proof}

Next, we formulate a lemma which is an analogue of Lemma~\ref{lem:core-processors-suffice} for closing phases.
The difference we need to handle is that processors may halt voluntarily in closing epochs.

\begin{lemma}
\label{lem:core-closing-work}

Suppose that $\cE_i$ and~$\cE_{i+1}$ are both calm closing epochs.
Let $w_i$ be the amount of work accrued by the core processors during~$\cE_i$.
If every core processor that empties its \texttt{Busy} list during~$\cE_i$ does
not stay core for~$\cE_{i+1}$,  
then the total work accrued during~$\cE_i$  is~$\cO(w_i)$.
\end{lemma}

\begin{proof}
If no core processors empty their \texttt{Busy} lists during~$\cE_i$, then the same argument as in the proof of Lemma~\ref{lem:core-processors-suffice} applies, because no processor halts.
Otherwise, let some core  processors empty their \texttt{Busy} lists during~$\cE_i$.
By the assumption, all these processors do not stay core for the next epoch~$\cE_{i+1}$.
Therefore, the  processors that are core in~$\cE_{i+1}$ do not halt during~$\cE_i$, and so contribute to work through the end of~$\cE_i$. 
This property allows to compare the total work in epoch~$\cE_i$  to the work performed in this epoch by  processors that are core in the next epoch~$\cE_{i+1}$.

The number  of core processors that do stay core for~$\cE_{i+1}$ is  
\[
|H_{i+1}| \ge |G_{i+1}|/ 7 \ge |K_{i+1}|/14 = |G_i|/14
\ ,
\]
as epoch~$\cE_{i+1}$ is calm.
It follows that $|G_i|\le 14 |H_{i+1}|\le 14 |H_{i}|$.
Epoch $\cE_i$ is calm, in that the inequality $|K_i|\le 2 |G_i|$ holds, and so $|K_i|\le 28 |H_i|$.
Therefore the total work accrued during~$\cE_i$  is at most $28 w_i$.
\end{proof}

Our next goal is to estimate work in closing epochs by relating it to work accrued in main epochs.


\begin{lemma}
\label{lem:closing-is-as-efficient-as-main}

Suppose that an instantiation of the generic algorithm is such that if $p$ processors aim to perform $p$ tasks, then this is accomplished with $B(p)$ work by core processors in main epochs, for some function~$B(p)$.
Then the total work in closing epochs of such an instantiation with $p$ processors is $\cO(B(p)+ p\log p)$.
\end{lemma}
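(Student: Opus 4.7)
My plan is to exploit the structural symmetry between procedures \texttt{Main} and \texttt{Closing}, reducing the work spent in closing epochs to the hypothesised bound $B(p)$ applied to a ``virtual'' \DA{} instance in which the at-most-$p$ entries of the \texttt{Busy} lists play the role of tasks. The first step is to observe that \texttt{Closing} has essentially the same algorithmic skeleton as \texttt{Main}: in every phase a processor receives messages carrying copies of \texttt{Tasks}, \texttt{Processors} and \texttt{Busy}; it applies the same \texttt{Se\-lec\-tion\_Rule} (now to \texttt{Busy} rather than to \texttt{Tasks}); it performs a single ``unit of action'' (sending a \texttt{Stop} signal in place of executing a task); and it multicasts to the same set of neighbours in the overlay graph $G(p,f)$. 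Consequently any work bound derived from the abstract structure of the generic algorithm, and in particular the postulated bound $B(p)$ for $p$ processors performing $p$ tasks, transfers verbatim to the corresponding closing instance with at most $p$ ``informing tasks.''

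Next I would split the closing work according to stormy versus calm epochs. Stormy epochs contribute $\cO(p\log^2 p)$ by Lemma~\ref{lem:stormy-epochs}, which already accounts for the second summand in the target bound. For calm closing epochs I would invoke the analogue of Lemma~\ref{lem:core-processors-suffice}: because the core graph $H_i$ retains a constant fraction of the non-crashed population during a calm epoch, the work contributed by all processors in that epoch is within a constant factor of the work contributed by the core processors. Using Lemma~\ref{lem:core-compact-forever} and the flooding argument already employed in Lemmas~\ref{lem:one-epoch-switch} and~\ref{lem:inequality-ui-by-si}, one checks that the core processors in each such epoch interact through \texttt{Busy} lists exactly as they would through \texttt{Tasks} lists in the main portion, so the hypothesised bound $B(p)$ applies to them.

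The last ingredient, and what I expect to be the main obstacle, is the ``re-carving of the core'' phenomenon explicitly flagged just before the lemma. When every processor of the current core graph has halted, compact processors lying outside the core may still have nonempty \texttt{Busy} lists and keep running. Formally, I would carve a fresh compact subgraph out of the subgraph of $G(p,f)$ induced by those processors that are still active, appeal to the structural symmetry above to charge the resulting ``sub-instance'' with $\cO(B(p))$ work, and then iterate. Each carving removes from the active pool a set of size at least $\alpha(p-f)=(p-f)/7$ (by Definition~\ref{def:compactness} and Lemma~\ref{lem:subgraph-property}), so after at most a constant number of carvings, namely $\lceil 1/\alpha\rceil\le 7$, the pool is exhausted. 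Summing the $\cO(B(p))$ contributions over this constant number of sub-instances and adding the $\cO(p\log^2 p)$ stormy contribution yields the claimed bound $\cO(B(p)+p\log^2 p)$.

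The delicate point that I expect to require the most care is verifying that each re-carved subgraph genuinely inherits the subgraph property of Lemma~\ref{lem:subgraph-property} with the same constants $\alpha,\beta,\gamma$, and that flooding over it behaves as Lemma~\ref{lem:core-compact-forever} demands, so that the correspondence with the main-epoch analysis is legitimate and the hypothesised bound $B(p)$ may in fact be invoked on each sub-instance rather than only on the original core.
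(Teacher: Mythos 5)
Your proof shares the paper's overall skeleton (stormy/calm split, the observation that \texttt{Closing} mirrors \texttt{Main} with \texttt{Busy} entries playing the role of tasks, and an appeal to Lemma~\ref{lem:core-processors-suffice}), but it diverges from the paper's proof at the crucial point and the divergence introduces a genuine gap. Your plan iterates the ``re-carving'' and caps the number of iterations by $\lceil 1/\alpha\rceil\le 7$, on the grounds that each carved compact subgraph has at least $\alpha(p-f)=(p-f)/7$ nodes. But the relevant pool you are exhausting is the set of still-active (not halted, not crashed) processors, whose initial size can be as large as $p$, not $p-f$. Hence the correct bound on the number of carvings your argument delivers is $\cO(p/(p-f))$, which is not a constant for $f$ close to $p$. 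Summing $\cO(B(p))$ over $\cO(p/(p-f))$ sub-instances yields $\cO\bigl(B(p)\,p/(p-f)+p\log^2 p\bigr)$, which for $p-f=o(p)$ is strictly weaker than the claimed $\cO(B(p)+p\log^2 p)$. The lemma is invoked for arbitrary known $f<p$ in Theorems~\ref{thm:work-balance-load} and~\ref{thm:deterministic-permutations}, so this range cannot be excluded.

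The paper avoids the iteration and its counting problem altogether by a different device: it picks, without loss of generality, the compact subgraph that \emph{stays compact longest} in the execution as the (conceptual) core, and restricts the $B(p)$ charge to the closing rounds in which every core processor still has a nonempty \texttt{Busy} list (so that, by Lemma~\ref{lem:core-compact-forever}, no core processor has halted and Lemma~\ref{lem:core-processors-suffice}'s constant-factor blow-up applies). Once some core processor sees an empty \texttt{Busy} list, the paper argues that within three more epochs — one for the flooded empty lists to stop all core processors, after which by the ``stays compact longest'' choice no compact processors remain, and one more for non-halted processors to notice they are not compact and halt — the entire execution terminates, contributing only $\cO(p\log p)$ extra work. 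This sidesteps both the need to bound the number of carvings and the subgraph-function subtleties that you flag (correctly) as delicate in your final paragraph: there is only ever one carved subgraph to reason about. If you want to keep an iteration-style argument, you would need a separate argument that after $\cO(1)$ re-carvings no processor can still consider itself compact, which the $\lceil 1/\alpha\rceil$ bound does not provide.
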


\begin{proof}
Stormy epochs contribute $\cO(p \log p)$ to work, by Lemma~\ref{lem:stormy-epochs}.
We include this amount as a component of the bound we seek to justify, and in what follows consider calm epochs only.
The following notation is used, when $\cE_i$ is a calm epoch, then $\cE_i'$, $\cE_i''$ and $\cE_i'''$ are the three immediately following calm epochs. 

We break calm closing epochs into two categories and consider them one by one.

The first category consists of these closing epochs~$\cE_i$ during which  every core processor that empties its \texttt{Bus}y list during $\cE_i$ does not stay core for the next epoch~$\cE_{i}'$. 
\texttt{Busy} lists in procedure \texttt{Closing} are analogous to \texttt{Tasks} lists in procedure \texttt{Main}.
Observe that sending a \texttt{Stop} signal to a processor~$u$ in a closing epoch can be interpreted as ``performing task~$u$.''
This follows by examining the pseudocodes in Figures~\ref{fig:procedure-main} and~\ref{fig:procedure-closing}: first performing a task and then removing this task from the list \texttt{Tasks} in procedure \texttt{Main} corresponds to first sending a 
\texttt{Stop} signal to a processor and then removing this processor from the list \texttt{Busy} in procedure \texttt{Closing}.
It follows that the activity of the system during such closing epochs can be interpreted  as having at most~$p$ processors perform at most $p$ tasks during main epochs.
The core processors perform at most~$B(p)$ work during these epochs~$\cE_i$, by definition of~$B(p)$.
The amount of work performed by all processors during these epochs is~$\cO(B(p))$, by  Lemma~\ref{lem:core-closing-work}.

The second category of calm closing epochs comprises epochs that start from the first epoch~$\cE_k$ during which some core processor~$v$ empties its  \texttt{Busy} list and stays core for the epoch~$\cE_{k}'$.
Processor~$v$ knows, from the emptiness of its  \texttt{Busy} list,  
that no processors are busy performing tasks, so it gets ready to halt.
But before stopping,
$v$ sends a ``time-to-halt'' notification to all its neighbors.
More precisely, processor~$v$ sends copies of empty list \texttt{Busy}$_v$ to its neighbors during this phase, by the pseudocode in Figure~\ref{fig:procedure-closing}.
We argue next that all processors core for $\cE_{k}'$ halt by the end of~$\cE_{k}'$.

The argument we employ is similar to the one used in the proof of Lemma~\ref{lem:one-epoch-switch}.
The information about empty  \texttt{Busy} lists is propagated by flooding.
Namely,  a processor that is about to halt first forwards this information to its neighbors.
The duration of an epoch is long enough for this  information to propagate successfully among the core processors,  if the flooding starts by a core processor in the first phase of an epoch.
If processor $v$ did not halt by the beginning of epoch $\cE_{k}'$, but its \texttt{Busy} list was empty at the beginning of epoch $\cE_{k}'$,  then $v$ would start disseminating
a ``time-to-halt'' notification  in the first phase of $\cE_{k}'$, and this notification would reach all processors core for~$\cE_{k}'$ during~$\cE_{k}'$.
If $v$ halts before the first phase of epoch~$\cE_{k}'$,  then $v$ has already sent its ``time-to-halt'' notification at the time when epoch $\cE_{k}'$ begins, so the effect of flooding is the same, in that by the end of epoch~$\cE_{k}'$ all processors core for~$\cE_{k}'$ halt.

Next we consider processors that are not core for epoch~$\cE_{k}'$.
Each of these processors either stays compact throughout epoch~$\cE_{k}''$ or it does not.
The case of a processor that stops being compact at some point in epoch~$\cE_{k}''$ is straightforward, because such a processor also stops considering itself compact by the end of~$\cE_{k}'''$, and then it halts. 

Let $w$ be a processor $w$ that is not core for epoch~$\cE_{k}'$ but it 
 compact throughout~$\cE_{k}''$.
If $w$'s range~$R$ at the end of~$\cE_{k}''$, includes a processor $z_1$ 
that is core for~$\cE_{k}'$,  then $w$ receives  ``time-to-halt'' notification from~$z_1$ by the end of~$\cE_{k}'$, and so $w$ halts by the end of~$\cE_{k}''$.
So, suppose that $R$ does not include a processor core for~$\cE_{k}'$.
The set~$R$ and the set of core processors for~$\cE_{k}'$ each contains at least $(p-f)/7$ nodes, by the definitions of compactness and core processors.
Since these two sets are disjoint, Lemma~\ref{lem:disjoint-are-connected} implies
that there exists a processor~$z_2$ in~$R$ that is connected by an edge to some processor core for~$\cE_{k}'$.
A processor core for~$\cE_{k}'$ sends a ``time-to-halt'' notification to processor~$z_2$ during~$\cE_{k}'$.
Processor~$z_2$ stays operational through~$\cE_{k}''$, 
so it receives this notification and forwards it to its neighbors.
Finally, processor~$w$ receives a  ``time-to-halt'' notification from~$z_2$  by the end of~$\cE_{k}''$, and so it halts during ~$\cE_{k}'''$ at the latest.

It follows that all processors halt by the end of epoch~$\cE_{k}'''$.
These four epochs $\cE_k$, $\cE_{k}'$, $\cE_{k}''$  and~$\cE_{k}'''$ contribute 
a total $\cO(p\log p)$ to the closing work.
\end{proof}

\section{A Constructive Algorithm}

\label{sec:constructive}

We obtain a specific algorithm by instantiating the generic algorithm with a task-selection rule 
that assigns tasks to processors with the aim of balancing the load among the processors.
The parameter~$f<p$ determines the overlay graph~$G(p,f)$ used for communication, as defined in Section~\ref{sec:graphs} and discussed in Section~\ref{sec:design-algorithms}.
We call the resulting constructive algorithm \aBL. 

The precise selection rule employed in algorithm \aBL\ is as follows.
Consider processor~$v$ and let  $k>0$ be the number of items in \texttt{Tasks}$_v$ list
in a certain round, that is $k=|\texttt{Tasks}_v|$.
We let~$r(v)$ stand for $\bigl\lceil \frac{k \, v}{p} \bigr\rceil$.
The positions on the list are numbered starting with~$1$.
Processor~$v$ selects the item at position $r(v)$ in its list of tasks to perform in this round; 
this selection is well-defined since $k\cdot \frac{v}{p}\le k$.
Number $r(v)$ is the largest integer~$j$ satisfying the inequalities 
\begin{equation}
\label{eqn:ranks}
k\cdot \frac{v}{p}-1< j <k\cdot\frac{v}{p}+1\ .
\end{equation}
Observe  that $v$ as a number is the rank of~$v$ on list \texttt{Processors$_v$} after the initialization.
As the execution proceeds, the rank of~$v$ in this list may change due to crashes,  but this does not affect the choices of tasks by~$v$, as $v$ does not use the size of \texttt{Processors$_v$} when deciding which task to perform.

Note also that the distances between the positions of tasks selected by different processors in an epoch need to be at least~$p g(p)$ if tasks are not to be duplicated in this epoch.
For this to be possible during the initial epochs, $t$ needs to be at least~$p^2 g(p)$, when the position of a selected task determined as $r(v)=\bigl\lceil \frac{k \, v}{p} \bigr\rceil$, which is equivalent to $\sqrt{t}\ge p \sqrt{g(p)}$.
In view of this, we may expect expressions involving $\sqrt{t}$ to appear in a bound on work of algorithm \aBL{} (this is indeed that case, as we state later in Theorem~\ref{thm:work-balance-load};
Lemma~\ref{lem:no-overlap-in-tasks} below covers the epochs during which the work performed by core processors is not duplicated and so such epochs contribute $\cO(t)$ to a bound on work).
In the other case, when $\sqrt{t} < p \sqrt{g(p)}$, duplication of work can occur in any epoch.
In such a situation, computing work is more involved as progress during an epoch in performing tasks depends on how many crashes occurred during the epoch and how much progress has been made in the previous epoch.
(Later in this section we address the involved cases leading to Theorem~\ref{thm:work-balance-load}.)

Next we formally analyze the algorithm's performance.
Recall the following notation: $U_i=\bigcup_{v\in H_i} T_{v,i}$ and $u_i=|U_i|$, for $i\ge 0$, and similarly $S_i=\bigcap_{v\in H_i}T_{v,i}$ and $s_i=|S_i|$, for $i\ge 0$.


\begin{lemma}
\label{lem:no-overlap-in-tasks}

If $u_{i-1}\ge 11p^2g(p)$ then $u_{i-1}-u_{i}\ge |H_{i}|g(p)$, for $i\ge 1$. 
\end{lemma}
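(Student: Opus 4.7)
The plan is to analyze the decrease $u_{i-1}-u_i$ by splitting into two regimes based on how much the core processors' lists at the start of $\cE_i$ agree with one another, measured by the quantity $u_{i-1}-s_{i-1}$.

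The first regime is the case $u_{i-1}-s_{i-1}\ge|H_i|g(p)$, where the conclusion is immediate from Lemma~\ref{lem:inequality-ui-by-si}: since $u_i\le s_{i-1}$, we obtain $u_{i-1}-u_i\ge u_{i-1}-s_{i-1}\ge|H_i|g(p)$. This disposes of the case where the lists of core processors already differ substantially, and reduces the problem to the complementary regime $u_{i-1}-s_{i-1}<|H_i|g(p)\le pg(p)$.

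In the second regime the hypothesis $u_{i-1}\ge 11p^2g(p)$ forces $s_{i-1}>u_{i-1}-pg(p)\ge 10p^2g(p)$, so every core processor starts $\cE_i$ with a list of size at least $10p^2g(p)$. Since an epoch has $g(p)$ phases and every removal from a list traces back to some task-performance in the system (of which there are at most $pg(p)$ during $\cE_i$), no list can shrink by more than $pg(p)$ during the epoch; in particular, every core list remains nonempty, so by Lemma~\ref{lem:core-compact-forever} no core processor halts during $\cE_i$ and each one performs exactly $g(p)$ tasks. Next I would apply a rank argument: identify each task with its rank in the globally sorted set $U_{i-1}$, so that the $r$-th element of any size-$a$ subset of $U_{i-1}$ has global rank between $r$ and $r+u_{i-1}-a$. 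Plugging in the selection rule $r(x)=\lceil a_xx/p\rceil$ for two core processors $v<w$ with $a_v,a_w\ge s_{i-1}-pg(p)$ produces a gap of order $u_{i-1}/p$ between the selected global ranks, which by the hypothesis $u_{i-1}\ge 11p^2g(p)$ dominates both the $u_{i-1}-a$ slack and the $O(pg(p))$ drift accumulated over the phases of the epoch. This certifies that the $|H_i|g(p)$ task-selections performed by core processors during $\cE_i$ pick pairwise distinct tasks. Finally, a flooding argument through the compact core graph $H_i$, analogous to the one in the proof of Lemma~\ref{lem:inequality-ui-by-si}, propagates the removal of each performed task to every core list by the end of $\cE_i$, so these $|H_i|g(p)$ distinct tasks are absent from $U_i$ and we conclude $u_{i-1}-u_i\ge|H_i|g(p)$.

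The main obstacle I expect is the rank argument in the second regime, and specifically verifying that the Case~2 assumption $u_{i-1}-s_{i-1}<|H_i|g(p)$ together with $u_{i-1}\ge 11p^2g(p)$ provides enough slack for the strict inequality $r_w>r_v+(u_{i-1}-a_v)$ to persist across all $g(p)$ phases of the epoch, including the worst case where the lists of $v$ and $w$ shrink at different rates. The coefficient $11$ is calibrated precisely so that the $\Theta(u_{i-1}/p)$ gap between adjacent processors' target ranks comfortably exceeds the combined $O(pg(p))$ slack-and-drift, so that no two of the $|H_i|g(p)$ selections can land on the same task.
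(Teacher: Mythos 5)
Your proof follows the paper's argument essentially step for step: the same case split on $u_{i-1}-s_{i-1}$, the same immediate reduction of Case~1 to Lemma~\ref{lem:inequality-ui-by-si}, and in Case~2 the same rank-separation argument using the selection rule $r(v)=\lceil kv/p\rceil$ together with the hypothesis $u_{i-1}\ge 11p^2g(p)$ to show that the tasks chosen by distinct core processors during $\cE_i$ are pairwise distinct. The only cosmetic difference is the Case~1 threshold ($|H_i|g(p)$ in your write-up versus $pg(p)$ in the paper), which does not affect the argument; the closing flooding step you add is implicit in the paper's ``this property implies that the number of outstanding tasks decreases'' and is no less (or more) rigorous than the original.
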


\begin{proof}
We consider two cases, based on the value of $u_{i-1}-s_{i-1}$.

\noindent
Case 1: 
\begin{equation}
\label{eq-3}
u_{i-1}-s_{i-1}\ge pg(p)
. 
\end{equation}
In this case, we combine \eqref{eq-3} with Lemma~\ref{lem:inequality-ui-by-si} to obtain the estimates
\[
u_{i-1}-u_{i} \ge  u_{i-1}-s_{i-1} \ge pg(p) \ge |H_{i}|g(p)\ . 
\]
Case 2: $u_{i-1}-s_{i-1}<pg(p)$. 
For any processor~$v$, we calculate the decrease in the size of the list \texttt{Tasks$_v$} during  epoch~$\cE_{i}$.
At most $u_{i-1}-s_{i-1}<pg(p)$ tasks can be removed from this list due to discrepancy of the knowledge of processor~$v$ and the respective knowledge by other processors in the range of~$v$  about the outstanding tasks.
Clearly, at most $g(p)$ tasks are performed by processor~$v$ during  epoch~$\cE_{i}$.  

We want to show that during  epoch~$\cE_{i}$ each task is performed by at most one processor, 
as this property implies that the number of outstanding tasks decreases by at least $|H_{i}|g(p)$ during epoch~$\cE_{i}$.
To this end, consider the ranks of tasks in the ordered set $U_{i-1}=\bigcup_{v\in H_{i-1}}T_{v,i-1}$. 
In the first phase of  epoch~$\cE_{i}$, processor~$v$ performs a task whose rank 
in \texttt{Tasks$_v$} is~$r(v)$; let $w(v)$ be the rank of this task in~$U_{i-1}$. 
Consider any other task performed by~$v$ during~$\cE_i$ and let $w'(v)$ be its rank in~$U_{i-1}$. 
List \texttt{Tasks$_v$} is a dynamic data structure and the ranks of its elements may change during~$\cE_i$, 
while the rank of a task with respect to~$U_{i-1}$ stays the same during~$\cE_i$. 
Let $T_{v}$ denote the set~$T_{v,i-1}$.
We can combine~\eqref{eqn:ranks}, for $k=|T_v|$, the number of elements in \texttt{Tasks}$_v$ 
at the beginning of~$\cE_i$, with the fact that \eqref{eq-3} does not hold to obtain that
\begin{equation}
\label{eqn:w}
|T_v|\cdot\frac{v}{p}-1
<w(v)<
(|T_v|+pg(p)) \cdot \frac{v}{p}+1\ .
\end{equation}
The list \texttt{Tasks$_v$} may shed at most $2pg(p)$ items during  epoch~$\cE_i$; this implies
\begin{equation}
\label{eqn:wprime}
(|T_v|-2pg(p)) \cdot \frac{v}{p}-1
<w'(v)<
(|T_v|+pg(p)) \cdot \frac{v}{p}+1\ . 
\end{equation}
By comparing the lower and upper boundaries of the ranges~\eqref{eqn:w} and~\eqref{eqn:wprime}, we obtain that
\begin{equation}
\label{eq-4}
|w'(v)-w(v)| < 3pg(p)+2 \ . 
\end{equation}
For each processor~$v$, we define $B_v\subseteq U_{i-1}$ to be the set of tasks whose ranks differ from~$w(v)$ by at most $3pg(p)+2$.
It follows from \eqref{eq-4} that all  tasks performed by~$v$ during~$\cE_i$ are in~$B_v$.

Consider two different processors ${v_1}$ and~${v_2}$, where $v_1>v_2$. 
We show that their corresponding sets $B_{v_1}$ and~$B_{v_2}$ are disjoint. 
We first estimate the distance in~$U_{i-1}$ of the tasks they both perform in the first phase of~$\cE_i$. 
We use the following inequalities 
\begin{equation}
\label{eqn:aa}
r(v_1)>|T_{v_1}|\cdot \frac{v_1}{p}-1
{\rm ~ ~ ~ ~ ~ ~ and ~ ~ ~ ~ ~ ~}
r(v_2)<|T_{v_2}|\cdot\frac{v_2}{p}+1 
\end{equation}
to estimate the ranks in the relevant \texttt{Tasks} lists. 
We then combine \eqref{eqn:aa} with the assumption that~\eqref{eq-3} does not hold to obtain the following estimates of the ranks in~$U_{i-1}$: 
\[
w(v_1)>|T_{v_1}|\cdot\frac{v_1}{p}-1
{\rm ~ ~ ~ ~ ~ ~ and ~ ~ ~ ~ ~ ~}
w(v_2) \leq r(v_2) + u_{i-1} - s_{i-1} < |T_{v_2}|\cdot\frac{v_2}{p}+1+pg(p)\ . 
\]
This in turn leads to the following estimate:
\begin{eqnarray*}
w(v_1)-w(v_2)
&>& 
|T_{v_1}|\cdot\frac{v_1}{p}-1-|T_{v_2}|\cdot\frac{v_2}{p}-1-pg(p) \\
&\ge & 
(|T_{v_2}|-pg(p))\cdot\frac{v_1}{p}-|T_{v_2}|\cdot\frac{v_2}{p}-2-pg(p)\\
& = &
|T_{v_2}|\cdot\frac{v_1-v_2}{p}-(v_1+p)g(p)-2 \\
&\ge &
(u_{i-1}-pg(p))\cdot\frac{v_1-v_2}{p}-(v_1+p)g(p)-2 \\
&\ge & 
10pg(p)-2pg(p)-2  \\
& > & 
7pg(p)
\ . 
\end{eqnarray*}
Here we used the assumption that $u_{i-1}\ge 11p^2g(p)$ and the fact that $p \geq v_1 > v_2$. 
Notice that $7 p g(p) > 2 [3pg(p)+2]$; together with~\eqref{eq-4} this shows that $B_{v_1}$ and~$B_{v_2}$ are disjoint.
\end{proof}


\begin{lemma}
\label{lem:x-as-parameter}

Let $\cE_{k}$ be an  epoch and consider another epoch $\cE_i$ such that $i >k$. 
If all processors in~$H_{i}$ are in their main phases  during~$\cE_i$ and $x$ is a number such that $u_{i-1} \ge x \geq 7$, then $u_{i-1}-u_{i}$ is either at least $\min\bigl\{\frac{x}{14p}\;, 1\bigr\} |H_{i}|$  or at least $\frac{u_{i-1}}{x}$.   
\end{lemma}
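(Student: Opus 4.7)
My approach will mirror the two-case analysis used in Lemma~\ref{lem:no-overlap-in-tasks}, but with the dividing threshold driven by the parameter~$x$ in place of the fixed quantity $pg(p)$. I will split according to whether $u_{i-1}-s_{i-1}$ exceeds $u_{i-1}/x$ or not.

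The easy case is $u_{i-1}-s_{i-1}\ge u_{i-1}/x$: Lemma~\ref{lem:inequality-ui-by-si} ($u_i\le s_{i-1}$) immediately yields
\[
u_{i-1}-u_i\;\ge\; u_{i-1}-s_{i-1}\;\ge\; \frac{u_{i-1}}{x}\,,
\]
which is the second conclusion of the lemma.

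In the main case, $u_{i-1}-s_{i-1}<u_{i-1}/x$, I will adapt the rank-spread argument from Lemma~\ref{lem:no-overlap-in-tasks} to the \emph{first} task picked by each core in phase~$1$ of $\cE_i$. Let $w(v)$ denote the rank in $U_{i-1}$ of that first task. Using $s_{i-1}\le|T_v|\le u_{i-1}$ and $|U_{i-1}\setminus T_v|\le u_{i-1}-s_{i-1}$ (the latter because $S_{i-1}\subseteq T_v\subseteq U_{i-1}$), I will get
\[
s_{i-1}\cdot\tfrac{v}{p}\;\le\; w(v)\;<\; u_{i-1}\cdot\tfrac{v}{p}+1+(u_{i-1}-s_{i-1})\,,
\]
which yields, for cores $v_1>v_2$ in $H_i$,
\[
w(v_1)-w(v_2)\;>\;\frac{s_{i-1}(v_1-v_2)}{p}-2(u_{i-1}-s_{i-1})-1\,.
\]
Substituting $s_{i-1}>u_{i-1}(1-1/x)$ and $u_{i-1}-s_{i-1}<u_{i-1}/x$ from the case hypothesis, and using $x\ge 7$ together with $u_{i-1}\ge x$, elementary arithmetic will show this gap to be strictly positive whenever $v_1-v_2\ge\lceil 14p/x\rceil$ (the ceiling collapsing to~$1$ once $x\ge 14p$).

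I will then pick cores in sorted-id order at spacing at least $\lceil 14p/x\rceil$ (or take all of $H_i$ if $x\ge 14p$), extracting at least $\min\{x/(14p),1\}\,|H_i|$ cores whose first tasks are pairwise distinct. To convert this into a lower bound on $u_{i-1}-u_i$, I will invoke the flooding argument from Lemma~\ref{lem:inequality-ui-by-si}: the removal of such a first task from its performer's \texttt{Tasks} list in phase~$1$ of~$\cE_i$ propagates across the compact graph $H_i$ within the $g(p)$-phase duration of the epoch, using the hypothesis that every $v\in H_i$ stays in its main phase throughout $\cE_i$ and so continues to relay its lists. Consequently each of the selected tasks is absent from every $T_{w,i}$ and hence from $U_i$, giving the first conclusion $u_{i-1}-u_i\ge\min\{x/(14p),1\}\,|H_i|$.

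\noindent\textbf{Main obstacle.} The principal technical hurdle is the calibration of the constant~$14$ so that the spread inequality is uniformly positive for all admissible $x$: the derivation must simultaneously absorb the factor-$2$ loss from bounding both endpoints of the interval that contains $w(v)$, the additive perturbations from the ceiling in $r(v)=\lceil|T_v|\cdot v/p\rceil$ and from $u_{i-1}-s_{i-1}$, and the fact that $1/x$ is only guaranteed to be at most $1/7$ rather than negligible; the hypothesis $u_{i-1}\ge x$ is exactly what allows the $\Theta(1)$ additive terms to be dominated by $\Theta(u_{i-1}/x)$. A secondary delicate point is verifying that the flooding of a phase-$1$ removal actually reaches every node of $H_i$ within the remaining phases of $\cE_i$, which relies crucially on the assumption that no core switches to \texttt{Closing} during the epoch.
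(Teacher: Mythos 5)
Your approach closely parallels the paper's: both proofs reduce to showing that the first task picked in phase~$1$ of $\cE_i$ by two cores $v_1>v_2$ in $H_i$ can coincide only when $v_1-v_2$ is small (a collision radius of order $p/x$), then count distinct first tasks by dividing $|H_i|$ by that radius, and invoke the flooding mechanism already established for Lemma~\ref{lem:inequality-ui-by-si} to turn ``distinct tasks performed'' into a drop in $u_i$. The cosmetic difference is that you bound the rank $w(v)$ of $v$'s first task directly inside the fixed ambient set $U_{i-1}$, which avoids the paper's longer chain \eqref{eq-6*}--\eqref{eq-7*} that compares the ranks $r(v_1),r(v_2)$ of a \emph{shared} task inside the two different lists; your route is algebraically a bit lighter, and the two are otherwise equivalent.

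However, there is a genuine arithmetic gap in the last step. With your stated collision radius $\lceil 14p/x\rceil$, a collision class (cores picking the same first task) is contained in an id-interval of length at most $\lceil 14p/x\rceil$, so the number of distinct first tasks is at least $|H_i|/\lceil 14p/x\rceil$, and for $1\le 14p/x$ this is bounded below only by $\frac{x}{28p}|H_i|$, not by the required $\frac{x}{14p}|H_i|$. (The paper's collision radius is $7p/x$, precisely so that $\lfloor 7p/x\rfloor+1\le 2\cdot 7p/x$ yields the claimed $\frac{x}{14p}|H_i|$.) The good news is that your own inequality $w(v_1)-w(v_2)>s_{i-1}\frac{v_1-v_2}{p}-2(u_{i-1}-s_{i-1})-1$ actually supports a radius of $\frac{7p}{2x}$: substituting $s_{i-1}>u_{i-1}(1-1/x)$ and $u_{i-1}-s_{i-1}<u_{i-1}/x$, the right-hand side at $v_1-v_2=\frac{7p}{2x}$ is at least $\frac{u_{i-1}}{x}\bigl(\frac{7(1-1/x)}{2}-2\bigr)-1\ge\frac{u_{i-1}}{x}-1\ge 0$, using $x\ge 7$ and $u_{i-1}\ge x$. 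So replacing your $\lceil 14p/x\rceil$ with $\lceil 7p/(2x)\rceil$ (or the paper's $7p/x$) in the spacing and in the final count repairs the constant, and the rest of your argument goes through.
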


\begin{proof}
Suppose that the inequality
\begin{equation}
\label{eqn:u(i-1)-ui}
u_{i-1}-u_{i}<\frac{u_{i-1}}{x} 
\end{equation}
holds.
We need to show that $u_{i-1}-u_{i}$ is either at least $\frac{x}{14p}|H_{i}|$ or at least~$|H_{i}|$. 

Let ${v_1}$ and~${v_2}$ be the identifiers of some two processors in~$H_{i}$ that perform the same task in the first phase of~$\cE_i$, where $v_1 > v_2$; let $r(v_1)$ and~$r(v_2)$ be the ranks of this task in \texttt{Tasks$_{v_1}$} and \texttt{Tasks$_{v_2}$}, respectively, in this phase. 
We first show that \eqref{eqn:u(i-1)-ui} implies 
\begin{equation}
\label{eqn:2-star}
v_1-v_2<\frac{7p}{x} \ .
\end{equation}
Lemma~\ref{lem:inequality-ui-by-si} gives the estimate $u_{i-1}-s_{i-1} \le u_{i-1}-u_{i}$.
We  combine this with \eqref{eqn:u(i-1)-ui} to obtain the inequality $u_{i-1}-s_{i-1} < \frac{u_{i-1}}{x}$, which yields
\begin{equation}
\label{eq-3*}
s_{i-1} > u_{i-1}\Bigl(1-\frac{1}{x}\Bigr)  \ .
\end{equation}
Let $T_{v_1}=T_{v_1,i-1}$ and $T_{v_2}=T_{v_2,i-1}$. 
Substituting the estimates $s_{i-1}\le |T_{v_j}|$ and $|T_{v_j}|\le u_{i-1}$ in~\eqref{eq-3*}, one obtains
\begin{equation}
\label{eqn:T_v_j}
u_{i-1}\Bigl(1-\frac{1}{x}\Bigr)\le |T_{v_j}|\le u_{i-1} \ , 
\end{equation} 
for both $j=1$ and $j=2$, which next yields 
\begin{equation}
\label{eq-4*}
\max\{|T_{v_1}\setminus T_{v_2}|, |T_{v_2}\setminus T_{v_1}|\}\le \frac{u_{i-1}}{x} \ . 
\end{equation}
Inequality \eqref{eq-4*} together with the estimate $ |r(v_1)-r(v_2)| \le  \max\{|T_{v_1}\setminus T_{v_2}|, |T_{v_2}\setminus T_{v_1}|\}$ imply
\begin{equation}
\label{eq-5*}
|r(v_1)-r(v_2)| \le \frac{u_{i-1}}{x} \ , 
\end{equation}
as $r(v_1)$ and~$r(v_2)$ are the ranks of the same task in \texttt{Tasks$_{v_1}$} and \texttt{Tasks$_{v_2}$}, respectively. 
We bound the difference $v_1-v_2$ from above using \eqref{eqn:ranks} as follows:
\begin{equation}
\label{eq-6*}
v_1-v_2<\frac{p}{|T_{v_1}|}\Bigl(r(v_1)+1\Bigr)-\frac{p}{|T_{v_2}|}\Bigl(r(v_2)-1\Bigr)
\ .
\end{equation}
We continue bounding $v_1-v_2$ from above by substituting the estimates given by \eqref{eqn:T_v_j} and \eqref{eq-5*} into the right-hand side of the inequality~\eqref{eq-6*}: 
\begin{align}
\label{eq-7*}
v_1-v_2
&< p\cdot 
\Bigl(\frac{r(v_2)+u_i/x+1}{|T_{v_1}|}-\frac{r(v_2)-1}{|T_{v_1}|+u_i/x}
\Bigr)
\nonumber \\ 
&=
p\cdot \frac{r(v_2)u_{i-1}/x+(u_{i-1}/x)^2+u_{i-1}/x+(u_{i-1}/x)|T_{v_1}|+2|T_{v_1}|}
{|T_{v_1}|(|T_{v_1}|+u_{i-1}/x)} \nonumber \\ 
&\le p\cdot 
\frac{u_{i-1}^2}{x}\cdot\frac{r(v_2)/u_{i-1}+1/x+1/u_{i-1}+|T_{v_1}|/u_{i-1}+2x|T_{v_1}|/u_{i-1}^2}
{u_{i-1}(1-1/x)(u_{i-1}(1-1/x)+u_{i-1}/x)} \ . 
\end{align}
Note that the numbers $r(v_2)$,  $|T_{v_1}|$ and $x$ are at least~1, and at most~$u_{i-1}$, each. 
We obtain the following upper bound
\[
v_1-v_2 
\le
p\cdot\frac{u_{i-1}^2}{x}\cdot\frac{6}{u_{i-1}^2(1-1/x)}
=
\frac{6p}{x-1}
\le 
\frac{7p}{x}\ ,
\]
by estimating the expression~\eqref{eq-7*} and using the fact that $x\ge 7$.
This completes the proof of~\eqref{eqn:2-star}. 

The concluding argument is as follows.
Suppose first that $7p < x$. Then at least $|H_{i}|$ processors perform different tasks.
Otherwise, if $7p \ge x$, then at least 
\[
\frac{| H_{i} | }{ \lfloor 7p/x\rfloor +1 } \leq
\frac{|H_{i}|}{2\cdot 7p/x}= |H_{i}|\cdot\frac{ x}{14p}
\]
processors in~$H_{i}$ perform different tasks. 
\end{proof}

For any execution, the indices $i$ of main epochs $\cE_i$ form a contiguous interval, call it~$\cI$.
For the epochs $\cE_i$ where~$i\in \cI$, the processors  in~$H_i$ are busy executing main phases. 
We first define two subsets $I_1$ and $I_2$ of~$\cI$ as follows:

Set~$I_1$ contains all $i$ from $\cI $ for which $\cE_i$ is stormy, that is, $|G_{i}| < |K_i |/2$.

Set~$I_2$ contains all $i$ from $\cI$ with the property $u_{i-1}\ge 11p^2g(p)$.

Let $k+1$ be the smallest among all epoch indices $i$ such that $u_{i}<11p^2g(p)$. 
We will use Lemma~\ref{lem:x-as-parameter} for $x$ equal to~$\sqrt{\frac{u_{k}}{\lg u_{k}}}$; in what follows we use $x$ as a shorthand for $\sqrt{\frac{u_{k}}{\lg u_{k}}}$. 
Let us define three subsets $I_3$, $I_4$, and $I_5$ of $\cI \setminus (I_1\cup I_2)$ as follows:

Set $I_3$ contains all $i$ from $\cI \setminus (I_1\cup I_2)$ such that $u_{i-1} < x$. 

Set $I_4$ contains all $i$ from $\cI \setminus (I_1\cup I_2)$ such that $u_{i-1} \ge x$ and $u_{i-1}-u_{i}\ge \frac{u_{i-1}}{x}$.

Set $I_5$ contains all $i$ from $\cI \setminus (I_1\cup I_2)$ such that $u_{i-1} \ge x$ and
\begin{equation}
\label{eqn:min-decrement}
u_{i-1}-u_{i}\ge\min \Bigl\{\frac{x}{14p},1\Bigr\}|H_{i}|\ . 
\end{equation}

It follows from Lemma~\ref{lem:x-as-parameter} that $I_3\cup I_4\cup I_5=\cI\setminus (I_1\cup I_2)$, and hence $I_1\cup I_2\cup I_3\cup I_4 \cup I_5=\cI$.

The following theorem summarizes the properties of algorithm \aBL.


\begin{theorem}
\label{thm:work-balance-load}

Algorithm \aBL\ is a constructive deterministic solution for the \DA\ problem with $p$ processors and $t$ tasks that can be instantiated, for any known $f<p$, to have work $\cO(  t +   p \log p\, (\sqrt{p\log p}+\sqrt{t\log t}\, )  )$.
\end{theorem}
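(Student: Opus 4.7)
The plan is to inherit correctness from the generic framework of Section~\ref{sec:generic-algorithm} and then to obtain the work bound by carefully summing contributions across the epoch classes $I_1,\ldots,I_5$ defined just above. Algorithm \aBL{} instantiates the generic algorithm with the closed-form selection rule $r(v)=\lceil k v/p\rceil$, which is evaluable in time polynomial in $p$ and $t$, making the algorithm constructive; correctness against the unbounded adversary (and hence against any $f$-bounded adversary) then follows from Proposition~\ref{proposition:correctness-generic}, so everything else concerns the work bound.

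The first step is to reduce work to that accrued during calm main epochs. Stormy-epoch work is $\cO(p\log^2 p)$ by Lemma~\ref{lem:stormy-epochs}; closing-epoch work is dominated by main-epoch work plus $\cO(p\log^2 p)$ by Lemma~\ref{lem:closing-is-as-efficient-as-main}; and by Lemma~\ref{lem:one-epoch-switch} all core processors leave main within one epoch of the first one doing so, so at most a single mixed epoch contributes $\cO(p\log p)$ work. The remaining work, accrued in calm main epochs indexed by $I_2\cup I_3\cup I_4\cup I_5$, can then be bounded class by class, using the fact that a single calm epoch $\cE_i$ costs $\cO(|H_i| g(p))$ (Lemma~\ref{lem:core-processors-suffice}). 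For $I_2$, Lemma~\ref{lem:no-overlap-in-tasks} gives $u_{i-1}-u_i\ge |H_i| g(p)$, so the amortized cost per distinct completed task is $\cO(1)$ and the total is $\cO(t)$. For $I_3$ the condition $u_{i-1}<x$ forces $|T_v|<x$ for every core processor~$v$, and since each list shrinks by at least $g(p)$ per epoch there are $\cO(x/g(p))$ epochs in $I_3$, contributing $\cO(p x)$ in total. For $I_4$ the defining inequality $u_{i-1}-u_i\ge u_{i-1}/x$ implies geometric decay of $u_i$, so $|I_4|=\cO(x\log u_k)$ and the total is $\cO(p\log p\cdot x\log u_k)=\cO(p\log p\,\sqrt{u_k\log u_k})$. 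For $I_5$, in the principal sub-case $x<14 p$ each epoch is paid for by $\Omega((x/p)|H_i|)$ distinct task completions; since at most $u_k$ distinct completions remain after $I_2$, this yields $\cO(p\log p\,\sqrt{u_k\log u_k})$ as well.

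The main obstacle I anticipate is the saturated sub-case $x\ge 14p$ of $I_5$, where $\min\{x/14p,1\}=1$ and a naive count would leave $\cO(u_{k+1} g(p))$; here one uses $u_{k+1}<11 p^2 g(p)$ to bound this by $\cO(p^2\log^2 p)$, which is absorbed by $p\log p\,\sqrt{t\log t}$ in the only regime ($t\ge 11 p^2 g(p)$) where the sub-case can occur. Summing the $I_2,\ldots,I_5$ bounds and using $u_k\le t$ gives calm-main work $\cO(t+p\log p\,\sqrt{t\log t})$, while the $\sqrt{p\log p}$ summand in the statement emerges in the low-$t$ regime (no $I_2$ epochs, $u_k=t=\cO(p)$), where the $I_3$ cost $\cO(p x)$ reduces to $\cO(p\,\sqrt{p/\log p})=\cO(p\log p\,\sqrt{p\log p})$; this same summand also absorbs the residual $\cO(p\log^2 p)$ coming from the stormy, mixed, and closing contributions. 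Combining everything yields the claimed bound.
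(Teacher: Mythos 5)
Your proof follows the paper's approach closely: same reduction to calm main epochs, same $I_1,\ldots,I_5$ classification, and the same supporting lemmas (Lemmas~\ref{lem:stormy-epochs}, \ref{lem:core-processors-suffice}, \ref{lem:no-overlap-in-tasks}, \ref{lem:x-as-parameter}, \ref{lem:one-epoch-switch}, \ref{lem:closing-is-as-efficient-as-main}). Your explicit split of $I_5$ into the $x<14p$ and $x\ge 14p$ sub-cases, with the saturated case handled via $u_{k+1}<11p^2g(p)$, is actually cleaner than the paper's: the paper telescopes over $i\ge k+1$ to get $\sum_{i\in I_5}(u_{i-1}-u_i)\le u_k$ and then asserts ``$u_k$ is less than $11p^2 g(p)$,'' an assertion that fails whenever $k\ge 0$ --- which is precisely the regime in which the $\max\{14p/x,1\}=1$ branch can be active --- whereas telescoping over $i\ge k+2$ gives $u_{k+1}<11p^2 g(p)$ by definition of $k$, which is exactly what you use. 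Two small slips in your write-up, neither affecting the final bound: the $\sqrt{p\log p}$ summand does not come from $I_3$ (whose cost at $u_k=\cO(p)$ is only $\cO(p\sqrt{p/\log p})$, a strictly lower-order term) but from Lemma~\ref{lem:closing-is-as-efficient-as-main} with $p$ substituted for $t$, where the $I_4/I_5$-type bound yields closing work $\cO(p\log p\sqrt{p\log p})$ --- so your parenthetical claim that the closing contribution is merely $\cO(p\log^2 p)$ is not right, though you absorb it into the same summand regardless; and Lemma~\ref{lem:one-epoch-switch} yields up to two mixed epochs rather than one, which still contributes only $\cO(p\log p)$.
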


\begin{proof}
We estimate the amount of work in main, mixed and closing epochs separately.
We first consider main epochs, whose indices are in~$\cI$.
We consider the contribution of each set from the cover $I_1\cup I_2\cup I_3\cup I_4 \cup I_5$ of the interval $\cI$ one by one. 

The epochs with indices in $I_1$ contribute $\cO(p\log p)$ to work, by Lemma~\ref{lem:stormy-epochs}.
To estimate the amount of work during epochs~$\cE_i$ for $i\in \cI\setminus I_1$ that are calm, we first calculate the contribution by the respective core  processors.

The work performed by the core processors, during epochs $\cE_i$ whose indices $i$ are in~$I_2$, is~$\cO(t)$, as during these epochs the number of the outstanding tasks decreases proportionately to the number of executed available processor steps, 
per Lemma~\ref{lem:no-overlap-in-tasks}.

The work performed by the core processors, during the epochs whose indices are in~$I_3$, is $\cO\bigl( p\log p\,\sqrt{\frac{t}{\log t}}\,\bigr)$, since $x\le \sqrt{\frac{t}{\log t}}$ and an epoch consists of $g(p)=\cO(\log p)$ phases.

If $i\in I_4$, then $u_{i}\le u_{i-1}(1-\frac{1}{x})$, so each such epoch contributes to decreasing $u_{i-1}$ by a factor of at least $1-\frac{1}{x}$. 
The number of such epochs $\cE_i$ is $\cO(x \lg u_{k})$, because 
\[
u_{k}\Bigl(1-\frac{1}{x}\Bigr)^{x\lg u_{k}}=\cO(1)\ .
\]
It follows that 
\[
|I_4|
=
\cO(x \lg u_{k})
=
\cO\Bigl(\sqrt{\frac{u_{k}}{\lg u_{k}}} \lg u_{k}\Bigr)
=
\cO(\sqrt{u_{k}\log u_{k}})
=
\cO(\sqrt{t\log t})\ ,
\] 
as $u_{k}\le t$. 
Therefore the work accrued by the core processors, during the epochs whose indices~$i$ are in~$I_4$, is $\cO(p\log p\,\sqrt{t\log t})$.

Next we consider the work by the core processors during the epochs whose indices are in~$I_5$.
We estimate the sum of the decrements $u_{i-1} - u_{i}$ for $i\in I_5$ by the telescoping series: 
\[
\sum_{i\in I_5} (u_{i-1} - u_{i})
\le
\sum_{i\geq k+1} (u_{i-1} - u_{i})
\le u_k\ .
\]
After combining this with~\eqref{eqn:min-decrement}, we obtain that
\[
u_{k}\ge \sum_{i\in I_5} (u_{i-1} - u_{i}) 
\ge 
\sum_{i\in I_5} \min \Bigl\{\frac{x}{14p},1\Bigr\}|H_{i}|\ .
\]
This implies that
\begin{equation}
\label{eqn:bb}
\sum_{i\in I_5} |H_{i}| 
\le 
u_{k} \max\Bigl\{\frac{14p}{x},1\Bigr\}
\le
\max\Bigl\{\frac{14p\,u_{k}\sqrt{\lg u_{k}}}{\sqrt{u_{k}}},u_{k}\Bigr\}
\ .
\end{equation}
Since $u_{k}$ is less than $11p^2 g(p)$ and at most~$t$, we obtain that 
\[
u_{k}  <  p\sqrt{11 u_{k}g(p)}  \le  p\sqrt{11 \, t \, g(p) }.
\]
This makes the right-hand side of~\eqref{eqn:bb} to be at most
\begin{equation}
\label{eqn:cc}
\max\{14p\,\sqrt{u_{k}\lg u_{k}},u_{k}\}
\le 
\max\{14p\,\sqrt{t\lg t},p\sqrt{11\, t\, g(p)}\}
\ .
\end{equation}
By \eqref{eqn:bb} and \eqref{eqn:cc}, the work contributed by the processors in the core graphs, 
during epochs $\cE_i$ for~$i\in I_5$, is 
\begin{eqnarray*}
\sum_{i\in I_5} |H_{i}| g(p)
&\le&
\, p \, g(p) \max\bigl\{14\,\sqrt{t\lg t} \, ,\sqrt{11 \, t \, g(p)}\bigr\}\\
&=&
\cO\bigl(p \log p\, \bigl(\sqrt{t\log t}+\sqrt{t\log p}\, \bigr) \bigr)\ .
\end{eqnarray*}

The work contributed by the core processors during the epochs $\cE_i$ with $i\in \cI$, is obtained by summing up the contributions of $I_1$, $I_2$, $I_3$, $I_4$, and~$I_5$.
We obtain  the following bound on work performed by core processors during main epochs:
\begin{gather}
\nonumber
\cO(p\log p) +
\cO(t) + 
\cO\Bigl(p \log p\, \sqrt{\frac{t}{\log t}}\,\Bigr) +
\cO\bigl(p \log p\, \sqrt{t\log t}\bigr) +
\cO\bigl(p \log p\, \bigl(\sqrt{t\log t}+\sqrt{t\log p}\, \bigr) \bigr) \\
\label{eqn:bound-work}
=
\cO \bigl( t  + p \log p\, \bigl(\sqrt{t\log t}+\sqrt{t\log p}\, \bigr)  \bigr) 
\ .
\end{gather}
By Lemma~\ref{lem:core-processors-suffice}, bound \eqref{eqn:bound-work} is also an estimate of total work in calm main epochs.

Two consecutive mixed epochs are sufficient for every core processor to learn that there are no outstanding tasks, by Lemma~\ref{lem:one-epoch-switch}. 
These mixed epochs contribute only $\cO(p\log p)$ to work.

To estimate the work during closing epochs, we apply Lemma~\ref{lem:closing-is-as-efficient-as-main}.
The bound on this work is obtained by substituting $p$ for~$t$ in~\eqref{eqn:bound-work} and adding $\cO(p\log p)$, yielding
\[
\cO\bigl(p + p\log p+ p \log p\,\sqrt{p\log p}\bigr)=\cO\bigl((p\log p)^{3/2}\bigr)\ .
\]   
This bound on work in closing epochs,  combined with $\cO(p\log p)$ work in mixed epochs, and with bound~\eqref{eqn:bound-work} on work in main epochs, gives
\[        
\cO\bigl( t+ p\log p+ p\log p\, \bigl(\sqrt{t\log t}+\sqrt{t\log p}\, \bigr) + (p\log p)^{3/2} \bigr) = 
\cO\bigl(t+p \log p\, \bigl(\sqrt{p\log p} + \sqrt{t\log t}\,\bigr)\bigr)
\]
as the bound on total work.
\end{proof}

Note that the bound  in Theorem~\ref{thm:work-balance-load} does not depend on~$f$, 
even though the instantiations of the generic algorithm use overlay graphs of degrees determined by~$f$, as spelled out in Lemma~\ref{lem:degree-power-of-Lp}.
The degrees of overlay graphs affect communication, as stipulated in Lemma~\ref{lem:communication-from-work}.


\begin{corollary}
\label{cor1}

Algorithm \aBL\ is a constructive deterministic solution for the \DA\ problem with $p$ processors and $t$ tasks that can be instantiated, for a number of crashes $f$ bounded by $f\le cp$, for any known constant $0<c < 1$, to have effort $\cO( t+p \log p\, (\sqrt{p\log p} + \sqrt{t\log t}\,) )$.   
\end{corollary}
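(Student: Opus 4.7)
The plan is to combine the work bound from Theorem~\ref{thm:work-balance-load} with the communication bound from Lemma~\ref{lem:communication-from-work}, using the hypothesis that the adversary is linearly bounded to collapse the communication blow-up factor to a constant.

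First I would invoke Theorem~\ref{thm:work-balance-load} to obtain $\cW = \cO(t + p\log p\,(\sqrt{p\log p} + \sqrt{t\log t}))$; this bound is independent of $f$ and so holds in particular when $f \le cp$. Next I would apply Lemma~\ref{lem:communication-from-work}, which yields $\cM = \cO((p/(p-f))^{3.31}\,\cW)$ for any $f < p$, given that the algorithm communicates over the overlay graph $G(p,f)$ whose maximum degree was bounded in Lemma~\ref{lem:degree-power-of-Lp}.

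The key observation is that the linear bound $f \le cp$ with $0 < c < 1$ a known constant implies $p - f \ge (1-c)p$, so that $p/(p-f) \le 1/(1-c) = \Theta(1)$ and therefore $(p/(p-f))^{3.31} = \cO(1)$. Substituting this into Lemma~\ref{lem:communication-from-work} gives $\cM = \cO(\cW)$, and hence the effort satisfies $\cE = \cW + \cM = \cO(\cW)$, which is exactly the claimed bound. Since the constant $c$ is known, the algorithm can be instantiated with the overlay graph $G(p,f)$ for $f = \lfloor cp \rfloor$, ensuring the constants inside the $\cO(\cdot)$ are determined in the code.

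There is essentially no obstacle here: the corollary is a direct consequence of Theorem~\ref{thm:work-balance-load} and Lemma~\ref{lem:communication-from-work} once one observes that linear boundedness makes $p/(p-f)$ a constant. The only thing worth stating carefully is that $c$ being known is what allows the overlay graph parameter $f$ to appear in the code, so that Lemma~\ref{lem:communication-from-work} applies without any additional hidden dependence.
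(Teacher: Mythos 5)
Your proof is correct and follows exactly the same route as the paper's: quote the work bound from Theorem~\ref{thm:work-balance-load}, apply Lemma~\ref{lem:communication-from-work}, and observe that $f\le cp$ with known constant $c<1$ forces $p-f\ge(1-c)p$, so the degree factor $(p/(p-f))^{3.31}$ collapses to a constant and $\cM=\cO(\cW)$. Your remark about $c$ being known being what permits $f$ to appear in the overlay-graph instantiation is a slight elaboration of what the paper states implicitly, but it is not a different argument.
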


\begin{proof}
Since the adaptive known adversary is restricted to be linearly bounded,
the assumption about failures implies $p-f\ge (1-c)p$.
This makes the bound on communication given in Lemma~\ref{lem:communication-from-work} to become $\cM=\cO(\cW)$.
The bound on work is taken from Theorem~\ref{thm:work-balance-load}.
We then combine the bounds on work~$\cW$ and communication~$\cM$ to obtain a bound on effort $\cE=\cW+\cM$.
\end{proof}

\section{An Algorithm Optimized for Work} 

\label{sec:algorithm-optimized-for-work}

We obtain a specific algorithm by instantiating the generic algorithm with a task selection rule such that the processors choose items from lists in the order determined by their private permutations.
We call the resulting constructive algorithm \aDP. 
The difference between \aBL\  in Section~\ref{sec:constructive} and \aDP\  given in this section is that the former is constructive, whereas the latter is nonconstructive, as it resorts to a selection rule that uses nonconstructive permutations.

All our algorithms in this section and in Section~\ref{sec:constructive} use the same constructive overlay graphs for communication: the parameter~$f<p$ determines the overlay graph~$G(p,f)$, as defined in Section~\ref{sec:graphs}.
These graphs are discussed in Section~\ref{sec:design-algorithms}; in particular, Lemma~\ref{lem:communication-from-work} determines how the amount of communication is related to the accrued work.

We consider two ways for establishing the suitable permutations used for task selection.
First, we consider a randomized algorithm, called \aRP, where processors use private permutations generated randomly at the beginning of an execution.
(The purpose of considering the randomized algorithm \aRP\ is only to prove the existence of certain permutations that are then used in a deterministic algorithm.)
Next, we consider the deterministic algorithm \aDP, in which the code is parameterized by private permutations assigned to  processors. 
We show that algorithm \aDP\ can be equipped with suitable permutations that make it more efficient than the deterministic algorithm \aBL.

We now specify how permutations are used to select items from lists.
Each processor~$v$ has two private permutations: $\pi_1$ over the set $\{1, \ldots,p\}$, 
and~$\pi_2$ over the set $\{1,\ldots,11p^2g(p)\}$. 
Processor~$v$ permutes its list \texttt{Busy$_v$} according to the permutation~$\pi_1$.
To explain how the permutation~$\pi_2$ is used, we consider two cases.

The first case is for $t\ge 11p^2g(p)$.
Processors select tasks to perform according to a load balancing rule.
This rule is the same as the one used in algorithm \aBL, as long as the size of \texttt{Tasks$_v$} is greater than~$11p^2g(p)$. 
As soon as exactly $11p^2g(p)$ tasks remain in the list \texttt{Tasks$_v$}, processor~$v$ permutes \texttt{Tasks$_v$} according to the permutation~$\pi_2$.

In the second case we have $t<11p^2g(p)$.
Here the list \texttt{Tasks$_v$} consisting of $t$ entries is rearranged by the permutation~$\pi_2$ as if it consisted of $11p^2g(p)$ entries.
More precisely this means the following: we pad the list \texttt{Tasks$_v$} to  the  length $11p^2g(p)$ by appending ``dummy'' items, then permute the resulting list, and finally remove the dummy items to compact the list to its original size.

Each time processor~$v$ needs to select an item from \texttt{Tasks$_v$}, 
it selects the first item from the rearranged list, then removes it from the list.
If~$v$ needs to select a processor from \texttt{Busy$_v$} in a closing phase, 
then it selects the first processor from the rearranged list, then removes it from the list.

The randomized algorithm \aRP\ starts by each processor selecting two permutations uniformly at random and independently across all  processors. 
The deterministic algorithm \aDP\ equips each processor with its individual pair of permutations. 
We show the existence of permutations that guarantee work $\cO( t+p\log^2 p)$ of \aDP.
A technical challenge we encounter here is to demonstrate that the complexity of \aDP\ can be made suitably small for some permutations. 
Algorithm \aRP\ is used only as an intermediate step to facilitate the analysis of the deterministic counterpart. 
The idea is to estimate the probability that algorithm \aRP\ deviates from its expected performance, in order to next apply the probabilistic method to argue that there exists a family of permutations that achieves a comparable performance in the worst case.

We use the following Chernoff bound on the probability that a sum of independent random variables deviates from its expectation:
For $0<r<1$, let $X_1,\ldots, X_n$ be a sequence of independent Bernoulli trials with $\Pr(X_j=1)=r$ and~$\Pr(X_j=0)=1-r$, for each~$1\le j\le n$.
If $S=\sum_{j=1}^n X_j$ then, for any $0<\varepsilon<1$, the following inequality holds \cite{MitzenmacherUpfal-book05}:
\begin{equation}
\label{eqn:chernoff}
\Pr ( S\le (1-\varepsilon)nr ) \le \exp(-nr\varepsilon^2/2) \ .
\end{equation}

\subsection{Extended epochs}

\label{sec:extended-epochs}

Recall that an epoch~$\cE_i$ begins with $K_i$ as the set of non-faulty nodes, that it ends with~$G_i$ as the set of non-faulty nodes, 
and that the core graph~$H_i$ is a compact subgraph of~$G_i$, for integer~$i\ge 0$.
We define an \emph{extended epoch $\cD_j$} of an execution, for integer $j\ge 0$, to be a contiguous segment 
of (regular) epochs of the execution, subject to the following additional restrictions.

The first extended epoch~$\cD_0$ consists of the first regular epoch~$\cE_0$, which denotes initialization.
Suppose  that~$\cE_k$ is the first regular epoch of an extended epoch~$\cD_j$, for some $j\ge 0$.
If either $|G_{k}| < |K_k|/2$  or $|H_k|g(p) \ge u_{k-1}$ hold, then the extended epoch~$\cD_{j}$ consists of only this one regular epoch~$\cE_k$.
Otherwise, when both $|G_{k}| \ge |K_k|/2$ and $|H_k|g(p) < u_{k-1}$ hold, then the extended epoch~$\cD_j$ consists of a contiguous segment of epochs $\langle\cE_k, \ldots, \cE_\ell\rangle$ such that $\ell\ge k$ is the largest integer for which both $|G_{\ell}| \ge |K_k|/2$ and $(\ell-k+1)\,|H_{\ell}|\,g(p) < u_{k-1}$ hold.

The number~$\ell$ is well defined because the quantity $(\ell-k+1)(p-f)g(p)$ is unbounded as a function of~$\ell$.
When the extended epoch~$\cD_j$ is the segment $\langle\cE_k, \ldots, \cE_\ell\rangle$ of regular epochs, then the next extended epoch $\cD_{j+1}$ begins with the regular epoch~$\cE_{\ell+1}$.

We define the number~$m(j)$ to be the index of the first regular epoch in~$\cD_j$, and, similarly,  the number~$l(j)$ to be the index of the last regular epoch in~$\cD_j$, so that $\cD_j=\langle \cE_{m(j)},\ldots, \cE_{l(j)}\rangle$.
We also define $m(-1)=-1$ and $l(-1)=-1$ for convenience of notation,  letting $u_{m(-1)}=t$ and~$u_{l(-1)}=t$.


\Paragraph{Stormy extended epochs.}

If $|G_{l(j)}| \ge |K_{m(j)}|/2$, then the extended epoch $\cD_j$ is called \emph{calm}, otherwise, when the inequality $|G_{l(j)}| < |K_{m(j)}|/2$ holds, then this extended epoch $\cD_j$ is \emph{stormy}.

\begin{lemma}
\label{lem:stormy-extended-epoch}

An extended epoch is stormy if and only of it consists of one stormy regular epoch.
\end{lemma}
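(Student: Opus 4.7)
The plan is to read off both directions of the biconditional directly from the case split in the definition of extended epochs. I expect the reverse direction to be immediate by substitution into the definition of a stormy extended epoch, while the forward direction will require a brief contrapositive argument exploiting the fact that the ``otherwise'' branch of the construction maintains $|G_{l(j)}| \ge |K_{m(j)}|/2$.

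For the reverse direction, I would suppose that $\cD_j$ consists of a single stormy regular epoch $\cE_k$, so $m(j) = l(j) = k$ and $|G_k| < |K_k|/2$. Substituting into the defining inequality for a stormy extended epoch immediately yields
\[
|G_{l(j)}| \;=\; |G_k| \;<\; \tfrac{1}{2}|K_k| \;=\; \tfrac{1}{2}|K_{m(j)}|,
\]
so $\cD_j$ is stormy.

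For the forward direction, I would prove the contrapositive at the level of the construction branches. Suppose $\cD_j$ falls in the ``otherwise'' branch of the definition; then by the triggering assumption we have $|G_{m(j)}| \ge |K_{m(j)}|/2$ and $|H_{m(j)}|g(p) < u_{m(j)-1}$, and the index $l(j)$ is chosen as the largest $\ell \ge m(j)$ for which $|G_\ell| \ge |K_{m(j)}|/2$ is among the preserved conditions. In particular $l(j) = m(j)$ is always an admissible choice, and whichever $l(j)$ is selected it satisfies $|G_{l(j)}| \ge |K_{m(j)}|/2$; hence $\cD_j$ is calm. Thus a stormy $\cD_j$ must arise from the ``if'' branch, which forces $l(j) = m(j) =: k$. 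The inequality $|G_k| = |G_{l(j)}| < |K_{m(j)}|/2 = |K_k|/2$ then shows that $\cE_k$ is itself stormy.

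The only delicate point, and the main place where a careful writeup is needed, is that the ``if'' branch of the construction is triggered by two alternative conditions: $\cE_k$ being a stormy regular epoch, \emph{or} the fast-progress inequality $|H_k|g(p) \ge u_{k-1}$. I would note explicitly that the fast-progress trigger also collapses $\cD_j$ to a single regular epoch but leaves $\cD_j$ calm (since that trigger does not preclude $|G_k| \ge |K_k|/2$), so this second alternative is consistent with and does not threaten either direction of the biconditional.
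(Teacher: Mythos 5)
Your proposal is correct and follows essentially the same route as the paper: a case split on which branch of the extended-epoch construction applies, with the ``otherwise'' branch always producing a calm $\cD_j$ and the single-epoch branch inheriting storminess directly from the defining inequality. You are in fact slightly more explicit than the paper's own proof, which in its ``$\cE_k$ calm'' case writes the two-part conjunction characterizing the ``otherwise'' branch without separately acknowledging the sub-case where the fast-progress trigger $|H_k|g(p)\ge u_{k-1}$ collapses $\cD_j$ to a single calm regular epoch; your closing paragraph handles exactly that point.
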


\begin{proof}
Let us consider the first regular epoch $\cE_k$ of an extended epoch~$\cD_j$, so that $m(j)=k$.
If epoch~$\cE_k$ is stormy then  $|G_{k}| < |K_k|/2$ and so, by the definition of an extended epoch, $\cD_j$ consists of only the epoch~$\cE_k$, and also $\cD_j$ is clearly a stormy extended epoch.
If $\cE_k$ is calm then the inequality $|G_{k}| \ge |K_k|/2$ holds.
Now $\cD_j$ consists of a contiguous segment of epochs $\langle\cE_m(j), \ldots, \cE_l(j)\rangle$ such that both $|G_{l(j)}| \ge |K_{m(j)}|/2$ and $(l(j)-m(j)+1)\,|H_{l(j)}|\,g(p) < u_{m(j)-1}$ hold, so $D_j$ is calm.
\end{proof}


\begin{lemma}
\label{lem:work-stormy-extended-epochs}

Work performed during stormy extended epochs is $\cO(p\log p)$.
\end{lemma}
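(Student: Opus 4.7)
The plan is to reduce the statement to Lemma~\ref{lem:stormy-epochs} by using the structural characterization of stormy extended epochs provided by Lemma~\ref{lem:stormy-extended-epoch}. Since the latter tells us that a stormy extended epoch consists of exactly one regular epoch, which is itself stormy, the set of phases belonging to stormy extended epochs is identical to the set of phases belonging to stormy regular epochs. Therefore the total work accrued during stormy extended epochs equals the total work accrued during stormy regular epochs.

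First I would invoke Lemma~\ref{lem:stormy-extended-epoch} to justify the identification between stormy extended epochs and stormy regular epochs, so that the quantity we want to bound is exactly the quantity bounded in Lemma~\ref{lem:stormy-epochs}. Then I would apply Lemma~\ref{lem:stormy-epochs} directly to conclude that this work is $\cO(p\log^2 p)$.

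There is no real obstacle here: the lemma is a direct corollary of the two previously stated facts. The only thing worth spelling out, for the sake of clarity, is that work is additive over disjoint segments of phases, so that summing the work contributions across the (stormy) extended epochs gives the same total as summing across the (stormy) regular epochs that constitute them. This is immediate from the definition of work as a sum of per-round per-processor contributions.
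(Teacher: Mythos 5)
Your proof is correct and follows essentially the same route as the paper's: invoke Lemma~\ref{lem:stormy-extended-epoch} to identify stormy extended epochs with stormy regular epochs, then apply Lemma~\ref{lem:stormy-epochs}. The observation about additivity of work over disjoint phase segments is a harmless (and in fact unneeded) elaboration, since a subset relation between the two collections of phases would already suffice for the upper bound.
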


\begin{proof}
Each stormy extended epoch consists of one  stormy regular epoch, by Lemma~\ref{lem:stormy-extended-epoch}.
The stormy regular epochs, that make also stormy extended epochs, contribute $\cO(p\log p)$ to work, by Lemma~\ref{lem:stormy-epochs}.
\end{proof}


\Paragraph{Core processors and main extended epochs.}

We define $Z_j$ to denote $H_{l(j)}$, for an extended epoch~$\cD_j$.
The subgraph $Z_j$ is called the \emph{core subgraph for~$\cD_j$}, and the processors in $Z_j$ are \emph{the core processors for~$\cD_j$}.

Extended epochs in which all core processors are busy are called \emph{main extended epochs}.


\begin{lemma}
\label{lem:graphs-Z-suffice}

If $w_j$ work is accrued by the core processors during a calm main extended epoch~$\cD_j$, then the total work accrued during $\cD_j$  is~$\cO(w_j)$.
\end{lemma}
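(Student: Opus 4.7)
The plan is to parallel the proof of Lemma~\ref{lem:core-processors-suffice} but at the granularity of an extended epoch rather than a single regular epoch. First I would establish that each core processor of $\cD_j$, that is each $v \in Z_j = H_{l(j)}$, is active for the entire duration of $\cD_j$. Since $\cD_j$ is a main extended epoch, every processor in $Z_j$ is busy throughout $\cD_j$, meaning it executes procedure \texttt{Main} in every phase. Inspecting the pseudocode in Figures~\ref{fig:procedure-main} and~\ref{fig:procedure-closing}, a processor can halt only while executing \texttt{Closing}, so $v$ does not halt during $\cD_j$. Moreover, $v \in H_{l(j)} \subseteq G_{l(j)}$ means $v$ is non-faulty through the end of the last regular epoch $\cE_{l(j)}$ of $\cD_j$, so $v$ does not crash during $\cD_j$ either.

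Next, let $D$ denote the number of rounds comprising $\cD_j$. The previous paragraph then yields $w_j = |Z_j| \cdot D = |H_{l(j)}| \cdot D$. On the other hand, the total work accrued during $\cD_j$ is bounded above by $|K_{m(j)}| \cdot D$, since $K_{m(j)}$ is precisely the set of processors that are non-faulty at the start of $\cD_j$, and each such processor contributes at most $D$ rounds before it either crashes or the extended epoch ends.

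The final step is to bound $|K_{m(j)}| / |H_{l(j)}|$ by a constant. The calmness assumption gives $|G_{l(j)}| \ge |K_{m(j)}|/2$, so $|K_{m(j)}| \le 2 |G_{l(j)}|$. Since $H_{l(j)} = P(G_{l(j)})$ for the subgraph function $P$ of Definition~\ref{def:P}, the subgraph property established in Lemma~\ref{lem:subgraph-property} yields $|H_{l(j)}| \ge |G_{l(j)}|/7$, i.e., $|G_{l(j)}| \le 7 |H_{l(j)}|$. Chaining these inequalities, $|K_{m(j)}| \le 14 |H_{l(j)}|$, so the total work during $\cD_j$ is at most $14 w_j = \cO(w_j)$, as required.

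The main conceptual point requiring care, as opposed to the analogous regular-epoch argument, is that one cannot simply apply Lemma~\ref{lem:core-processors-suffice} separately to each $\cE_i$ inside $\cD_j$, because the core graphs $H_i$ for intermediate $i$ may differ from $Z_j = H_{l(j)}$, and in particular a processor in $H_{m(j)} \setminus H_{l(j)}$ could in principle halt inside $\cD_j$ without contradicting anything. Using $Z_j$ as the single reference core graph and invoking the ``busy throughout $\cD_j$'' clause in the definition of a main extended epoch bypasses this issue cleanly, which is why that clause is made part of the definition in the first place.
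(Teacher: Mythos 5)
Your proof is correct and follows the same route as the paper's: observe that no processor in $Z_j = H_{l(j)}$ can halt (because $\cD_j$ is a main extended epoch, so every core processor is still executing \texttt{Main}, and halting can only happen in \texttt{Closing}) nor crash (because $H_{l(j)} \subseteq G_{l(j)}$), then combine the calmness inequality $|K_{m(j)}| < 2|G_{l(j)}|$ with the subgraph-property bound $|G_{l(j)}| \le 7|Z_j|$ to get $|K_{m(j)}| < 14|Z_j|$ and hence total work $\le 14 w_j$. The paper's version is terser about why $w_j = |Z_j|\cdot D$ and why the total is $\le |K_{m(j)}|\cdot D$, but the substance is identical; your closing remark about why one must use $Z_j$ rather than regular-epoch core graphs $H_i$ is a fair observation but not an additional proof step.
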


\begin{proof}
No core processor halts in the extended epoch~$\cD_j$, as this is possible only in the course of executing \texttt{Closing}, by the pseudocode in Figures~\ref{fig:procedure-main} and~\ref{fig:procedure-closing}.
Therefore graph $Z_j$ stays intact throughout the extending epoch~$\cD_j$.

Graph~$Z_j$ is defined as the compact subgraph of~$G_{l(j)}$ of the form $Z_j=P(G_{l(j)})$,  where~$P$ is a subgraph function by Lemma~\ref{lem:subgraph-property}.
It follows that $|G_{l(j)}|\le 7 |H_j|$.
We can combine this fact with the inequality $|K_{m(j)}| < 2 |G_{l(j)} |$, given by the assumption, to obtain $|K_{m(j)}| < 14 |Z_j |$.
Therefore the total work accrued during~$\cD_j$  is less than $14 w_j$.
\end{proof}

We denote by $c_j$ the number of selections of tasks made by the core processors (those in~$Z_j$)  during the extended epoch~$\cD_j$;  
this number~$c_j$ is called \emph{the core number for~$\cD_j$}.
We have that  
\begin{equation}
\label{eqn:core-number}
c_j=(l(j)-m(j)+1)\,|Z_j|\,g(p)\ ,
\end{equation}
as there are $l(j)-m(j)+1$ regular epochs in~$\cD_j$, each taking $g(p)$ rounds, with one  selection of a task by a processor in~$Z_j$ per round.
An extended epoch $\cD_j$ begins with $u_{l(j-1)}$ tasks still present in some lists of the processors, and ends with $s_{l(j)}$ tasks that occur in every list.

\begin{lemma}
\label{lem:work-extended-epoch}
The work accrued during one extended epoch by processors that stay core through its end is $\cO(t+p\log p)$.
\end{lemma}

\begin{proof}
We consider two cases.  
The first case holds when  $\cD_j$ consists of one epoch.
In this case, let $\cE_k$ be the first regular epoch of an extended epoch~$\cD_j$.
Then the amount of work accrued by the processors that are core for $\cE_k$ is $\cO(p\log p)$, because an epoch lasts $\cO(\log p)$ rounds.
The other case is when there are other epochs that follow~$\cE_k$; let $\cE_\ell$ be the last epoch of~$\cD_j$.
Then the work accrued during~$\cD_j$ by processors that are core for $\cE_\ell$ is smaller than $u_{k-1}\le t$, by the definition of an extended epoch.
\end{proof}

We partition the extended epochs of an execution into two categories as follows.
One of them consists of extended epochs~$\cD_j$ with $c_j < u_{l(j-1)}$, for $j\ge 0$;
such extended epochs are called \emph{task-rich} extended epochs.
Another category consists of extended epochs~$\cD_j$ with $c_j \ge  u_{l(j-1)}$, for~$j\ge 0$;
such extended epochs are called \emph{task-poor} extended epochs.


\begin{lemma}
\label{lem:few-tasks-extended-epoch}

An extended epoch that is task-poor consists of one regular epoch.
\end{lemma}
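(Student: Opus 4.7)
The plan is to prove this by contradiction, appealing directly to the definition of extended epoch and unwinding the notation so that ``task-poor'' translates into exactly the inequality that governs the extension process.

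Suppose, toward a contradiction, that $\cD_j$ is task-poor but consists of more than one regular epoch, so $l(j) > m(j)$. Let $k = m(j)$. By the construction of extended epochs, the fact that $\cD_j$ was extended beyond its first regular epoch $\cE_k$ means that neither of the two ``stop'' conditions were triggered at $\cE_k$: in particular, $|H_k| g(p) < u_{k-1}$ held, and the extension proceeded to the largest index $\ell = l(j)$ satisfying both $|G_\ell| \ge |K_k|/2$ and
\[
(\ell - k + 1)\,|H_\ell|\,g(p) < u_{k-1}\ .
\]
In particular this inequality holds for $\ell = l(j)$ itself.

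The next step is to rewrite this inequality using the notations introduced immediately before the lemma. By definition $Z_j = H_{l(j)} = H_\ell$, and
\[
c_j = (l(j) - m(j) + 1)\,|Z_j|\,g(p) = (\ell - k + 1)\,|H_\ell|\,g(p)\ ,
\]
so the inequality above becomes $c_j < u_{k-1}$. Since consecutive extended epochs tile the execution, $m(j) = l(j-1) + 1$, and hence $u_{k-1} = u_{m(j)-1} = u_{l(j-1)}$. Consequently $c_j < u_{l(j-1)}$, which directly contradicts the defining inequality $c_j \ge u_{l(j-1)}$ of a task-poor extended epoch.

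The argument is essentially a bookkeeping check, so there is no real obstacle; the only thing to be careful about is matching the two different index conventions ($u_{k-1}$ appearing in the construction of extended epochs, versus $u_{l(j-1)}$ appearing in the definition of task-poor) and verifying they denote the same quantity via $m(j) = l(j-1)+1$.
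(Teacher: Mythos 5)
Your proof is correct and takes essentially the same approach as the paper: you unwind the definitions of a task-poor extended epoch ($c_j \ge u_{l(j-1)}$) and of the extension rule ($(\ell-k+1)|H_\ell|g(p) < u_{k-1}$ for the chosen $\ell = l(j)$), and observe via $u_{k-1}=u_{l(j-1)}$ that these two inequalities are incompatible unless $l(j)=m(j)$. Your contrapositive phrasing is somewhat more explicit than the paper's very terse argument, but the underlying reasoning is the same.
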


\begin{proof}
Consider an extended epoch $\cD_j = \langle \cE_{m(j)},\ldots, \cE_{l(j)}\rangle$.
If it is task-poor then this means $|H_{l(j)}| \, g(p) \ge u_{l(j-1)}$, because $Z_j=H_{l(j)}$.
Therefore, by the definition of extended epoch, extended epoch~$\cD_j$ consists of only the regular epoch~$\cE_{l(j)}=\cE_{m(j)}$.
\end{proof}


\Paragraph{Productive extended epochs.}

An extended epoch~$\cD_j$  is called \emph{productive} when the inequality
\begin{equation}
\label{eqn:def-productive}
u_{l(j-1)}-s_{l(j)}\ge \min\{u_{l(j-1)},c_j\}/4
\end{equation} 
holds at the completion of~$\cD_j$.
The right-hand side of~\eqref{eqn:def-productive} can be simplified when we know if the extended epoch is task-poor or task-rich.

Task-rich extended epochs $\cD_j$ are defined by  $c_j < u_{l(j-1)}$, therefore the inequality
\begin{equation}
\label{eqn:many-tasks-extended}
\min\{u_{l(j-1)},c_j\}/4 \ge  c_j/4
\end{equation}
holds for such~$j$.

Task-poor extended epochs $\cD_j$ are defined by $c_j \ge u_{l(j-1)}$, therefore the inequality 
\begin{equation}
\label{eqn:few-tasks}
\min\{u_{l(j-1)},c_j\}/4 \ge u_{l(j-1)}/4
\end{equation}
holds for such~$j$.


\begin{lemma}
\label{lem:J-4}

There are $\cO(\log p)$ productive task-poor main extended epochs, in any execution of the generic algorithm.
\end{lemma}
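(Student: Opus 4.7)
The plan is to define the potential $\Phi_i := u_{l(j_i - 1)}$, indexed over the productive task-poor main extended epochs $\cD_{j_1},\cD_{j_2},\ldots,\cD_{j_k}$, and show that it contracts by a constant factor $3/4$ at least every two steps. Since $\Phi_1 = \cO(p\log p)$ and $\Phi_k \ge 1$, this forces $k = \cO(\log p)$.

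First I would bound the initial and terminal potential. Each $\cD_{j_i}$ is task-poor, so by Lemma~\ref{lem:few-tasks-extended-epoch} it consists of a single regular epoch; hence $c_{j_i} = |Z_{j_i}|\,g(p) \le p\,g(p)$, and the defining inequality of task-poor yields $\Phi_i \le c_{j_i} \le p\,g(p) = \cO(p\log p)$. At the other extreme, since $\cD_{j_k}$ is main, every core processor of $\cD_{j_k}$ is still busy at the first round of $\cD_{j_k}$, and in particular has a nonempty Tasks list, which rules out $u_{l(j_k - 1)} = 0$ and gives $\Phi_k \ge 1$.

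The main step is the two-index contraction $\Phi_{i+2} \le (3/4)\,\Phi_i$. The chain is: (a) productivity combined with~\eqref{eqn:few-tasks} gives $s_{l(j_i)} \le (3/4)\,u_{l(j_i - 1)}$; (b) Lemma~\ref{lem:inequality-ui-by-si} at index $l(j_i)+1$ gives $u_{l(j_i)+1} \le s_{l(j_i)}$; and (c) since $j \mapsto l(j)$ is strictly increasing and $j_{i+2}-1 \ge j_i + 1$, we have $l(j_{i+2}-1) \ge l(j_i)+1$, so the monotonicity of $\langle u_n\rangle$ yields $u_{l(j_{i+2}-1)} \le u_{l(j_i)+1}$. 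Composing (a)--(c) produces the contraction. Iterating gives $(3/4)^{\lfloor(k-1)/2\rfloor}\cdot p\,g(p) \ge \Phi_k \ge 1$, from which $k = \cO(\log p)$ follows.

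The main obstacle is that one cannot guarantee a $3/4$ decrease between every pair of consecutive productive task-poor main indices $j_i,\, j_{i+1}$. When $j_{i+1} = j_i + 1$ the two extended epochs are adjacent one-regular-epoch slices, $l(j_{i+1}-1) = l(j_i)$, and Lemma~\ref{lem:inequality-ui-by-si} only yields $\Phi_{i+1} = u_{l(j_i)} \le s_{l(j_i - 1)} \le \Phi_i$, with no constant-factor improvement. The remedy is precisely the skip-one-index argument outlined above: productivity of $\cD_{j_i}$ is chained with a single application of Lemma~\ref{lem:inequality-ui-by-si} and with monotonicity of $\langle u_n\rangle$, so a factor of~$3/4$ is always extracted over at most two productive task-poor main epochs, which only costs an extra factor of~$2$ in the resulting logarithm.
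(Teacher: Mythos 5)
Your proof is correct and takes essentially the same route as the paper's: both derive $s_{l(j)}\le\tfrac34\,u_{l(j-1)}$ from productivity plus task-poorness, chain this via Lemma~\ref{lem:inequality-ui-by-si} and monotonicity of $\langle u_n\rangle$ to a geometric decrease, and conclude $\cO(\log p)$ from the bound $u_{l(j-1)}\le c_j\le p\,g(p)$. Your skip-one-index contraction $\Phi_{i+2}\le\tfrac34\Phi_i$ is a clean way to handle the possible overlap when consecutive productive task-poor epochs are adjacent one-regular-epoch slices, a detail the paper's proof passes over without comment; it only costs a constant factor in the logarithm and so does not change the result.
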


\begin{proof}
We use the sequence $\langle u_i \rangle_{i\ge 0}$ as reflecting progress in completing tasks (see the discussion just after Lemma~\ref{lem:inequality-ui-by-si}).
If $\cD_j$ is a productive task-poor extended epochs then we can combine~\eqref{eqn:def-productive} with \eqref{eqn:few-tasks} to obtain that  
\begin{equation}
\label{eqn:decrement-J4}
u_{l(j-1)}-u_{m(j+1)}\ge u_{l(j-1)} / 4\ ,
\end{equation} 
as $u_{m(j+1)}\le s_{l(j)}$.
Any such extended epochs $\cD_j$ consist of only one regular epoch~$\cE_{m(j)}=\cE_{\ell(j)}$, by Lemma~\ref{lem:few-tasks-extended-epoch}. 
The inequality~\eqref{eqn:decrement-J4}  means that two consecutive regular epochs $\cE_{m(j)}$ and $\cE_{m(j)+1}$ contribute to decreasing the number of outstanding tasks by at least  $u_{m(j)} / 4\le u_{l(j-1)}/4$.
The number of tasks in the first productive task-poor main extended epoch is less than $11p^2g(p)$, so there are  $\cO(\log (11p^2g(p))=\cO(\log p)$ productive task-poor main extended epochs.
\end{proof}

For any execution, the indices $j$ of main extended epochs $\cD_j$ form a contiguous interval, call it~$\cJ$.
We first identify two subsets of~$\cJ$ as follows:

Set $J_1$ consists of these $j\in \cJ$ for which $\cD_j$ is stormy, that is, $|G_{l(j)}| < |K_{m(j)} |/2$.

Set $J_2$  consists of these $j\in \cJ$  for which $u_{\ell(j-1)} \ge 11p^2 g(p)$. 

There is at most one extended epoch $\cD_j$ such that $u_{m(j)} \ge 11p^2g(p)$ and $u_{l(j)} < 11p^2g(p)$.
We let $J_3$ be the singleton set $\{j\}$, when such a $\cD_j$ exists, otherwise $J_3=\emptyset$.

The extended epochs $\cD_j$ for $j\in \cJ \setminus (J_1\cup J_2\cup J_3)$ are called \emph{key}.
We define two subsets $J_4$ and~$J_5$  of the set of indices of the key extended  epochs $\cJ \setminus (J_1\cup J_2\cup J_3)$ as follows:

Set $J_4$ consists of these $j \in \cJ \setminus (J_1\cup J_2\cup J_3)$ for which the key extended epoch~$\cD_j$ is a task-rich extended epoch.

Set $J_5$ consists of these $j\in \cJ \setminus (J_1\cup J_2\cup J_3)$ for which the key extended epoch~$\cD_j$ is a task-poor extended epoch.

Note that $\cJ= J_1\cup J_2\cup J_3\cup J_4\cup J_5$.


\begin{lemma}
\label{lem:key-epochs-productive}

For any known $f<p$, the work accrued in any execution of the generic algorithm is $\cO (t +p\log^2 p)$ during the main extended epochs~$\cD_j$, provided that all the key extended epochs among them are productive.
\end{lemma}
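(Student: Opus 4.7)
The plan is to partition the total work from main extended epochs according to the cover $\cJ = J_1\cup J_2\cup J_3\cup J_4\cup J_5$ and sum the contributions. Since $J_4$ and $J_5$ are disjoint from $J_1$, the corresponding extended epochs are calm, so Lemma~\ref{lem:graphs-Z-suffice} lets me bound their total work by a constant times the core number $c_j$; the remaining buckets are handled directly.

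For $J_1$, Lemma~\ref{lem:work-stormy-extended-epochs} yields $\cO(p\log^2 p)$. For $J_2$, every such extended epoch is task-rich: the task-poor condition $c_j\ge u_{l(j-1)}\ge 11p^2g(p)$ together with Lemma~\ref{lem:few-tasks-extended-epoch} would force $|Z_j|\ge 11p^2>p$. Hence $c_j<u_{l(j-1)}\le t$. All but at most one $\cD_j\in J_2$ also has $u_{l(j)}\ge 11p^2g(p)$, so $u_{i-1}\ge 11p^2g(p)$ throughout $\cD_j$ and Lemma~\ref{lem:no-overlap-in-tasks} gives $|H_i|g(p)\le u_{i-1}-u_i$ for every regular epoch inside, which telescopes across the union of these extended epochs to core work at most $t$. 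The at most one remaining $\cD_j$ in $J_2$ (where $u$ crosses the threshold exactly at a boundary) contributes $c_j<t$ by the definition of extended epoch, and the single $\cD_{j_0}\in J_3$ satisfies $c_{j_0}<u_{l(j_0-1)}\le t$ for the same reason. Applying Lemma~\ref{lem:core-processors-suffice} (equivalently, Lemma~\ref{lem:graphs-Z-suffice}) then upgrades these core bounds to $\cO(t)$ total work for $J_2\cup J_3$.

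For $J_4$, productivity combined with the task-rich condition yields $c_j\le 4(u_{l(j-1)}-s_{l(j)})$. To sum over $J_4\subseteq\cJ$ without assuming consecutive indices, I apply Lemma~\ref{lem:inequality-ui-by-si} to write $u_{l(j-1)}\le s_{l(j-1)-1}$, and then use the monotonicity of $\langle s_i\rangle$ (noted immediately after Lemma~\ref{lem:inequality-ui-by-si}) together with $l(j-1)-1\ge l(j-2)$ to obtain $s_{l(j-1)-1}\le s_{l(j-2)}$. Thus $u_{l(j-1)}-s_{l(j)}\le s_{l(j-2)}-s_{l(j)}$, and the right-hand side is non-negative. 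Summing this non-negative quantity over all $j\ge 1$ in $\cJ$ produces a two-step telescope in which each intermediate $s_{l(i)}$ appears once positively and once negatively and cancels, leaving $s_{l(-1)}+s_{l(0)}\le 2t$. Since all summands are non-negative, the sum restricted to $J_4$ is also at most $2t$, so $\sum_{j\in J_4} c_j\le 8t$ and Lemma~\ref{lem:graphs-Z-suffice} yields $\cO(t)$ total work from $J_4$.

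For $J_5$, Lemma~\ref{lem:few-tasks-extended-epoch} says each such extended epoch is a single regular epoch and Lemma~\ref{lem:J-4} bounds their count by $\cO(\log p)$; each such regular epoch has $g(p)=\cO(\log p)$ rounds with at most $p$ processors, contributing $\cO(p\log p)$ work, for a $J_5$-total of $\cO(p\log^2 p)$. Combining the five bounds yields $\cO(t+p\log^2 p)$, as required. The step I expect to be the main obstacle is precisely the telescoping for $J_4$, because $J_4$ need not consist of consecutive indices of $\cJ$; the two-step shift $u_{l(j-1)}\le s_{l(j-1)-1}\le s_{l(j-2)}$ provided by Lemma~\ref{lem:inequality-ui-by-si} and monotonicity of $\langle s_i\rangle$ is designed exactly to yield a telescope whose partial sums on any sub-collection remain bounded by $2t$.
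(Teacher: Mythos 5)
Your proof is correct and uses the same decomposition $\cJ = J_1\cup J_2\cup J_3\cup J_4\cup J_5$ and the same supporting lemmas as the paper, but it handles two of the five buckets in a genuinely cleaner way. For $J_3$, the paper breaks the unique threshold-crossing extended epoch $\cD_{j_0}$ into two sub-segments at the regular epoch where $u$ drops below $11p^2g(p)$, and then separately mimics the $J_2$ argument on the first sub-segment and the $J_4$/$J_5$ argument on the second, which is somewhat awkward since $\cD_{j_0}$ is not assumed productive. You instead observe directly that any $\cD_j$ with $u_{l(j-1)}\ge 11p^2g(p)$ must be task-rich --- since task-poor together with Lemma~\ref{lem:few-tasks-extended-epoch} would force $|Z_j|\ge 11p^2>p$ --- whence $c_j<u_{l(j-1)}\le t$ bounds its core contribution with no further case analysis; this is shorter and avoids appealing to productivity of a non-key epoch. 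For $J_4$, the paper reasons informally that $u_{l(j-1)}-u_{m(j+1)}\ge c_j/4$ measures ``progress during $\cD_j$ and $\cD_{j+1}$,'' so each task is double-counted at most once; you supply the missing mechanism explicitly via $u_{l(j-1)}\le s_{l(j-1)-1}\le s_{l(j-2)}$ and the two-step telescope $\sum_{j\ge 1}(s_{l(j-2)}-s_{l(j)})\le s_{l(-1)}+s_{l(0)}\le 2t$, and you correctly note that since the summands are non-negative the bound survives restriction to the (possibly non-contiguous) subset $J_4$. The one place to tighten is your wrap-up sentence for $J_2\cup J_3$: Lemma~\ref{lem:core-processors-suffice} and Lemma~\ref{lem:graphs-Z-suffice} both require calmness, so the stormy indices in $J_2$ (which can exist, since $J_1$ and $J_2$ are not defined to be disjoint) should be shunted to the $J_1$ bucket; this is harmless because Lemma~\ref{lem:work-stormy-extended-epochs} already charges all stormy extended epochs to the $\cO(p\log^2 p)$ term, but it is worth saying explicitly.
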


\begin{proof}
Stormy extended epochs, those with indices in $J_1$, contribute $\cO(p\log p)$ to work, by Lemma~\ref{lem:work-stormy-extended-epochs}.

The amount of work during extended epochs~$\cD_j$, for $j\in \cJ\setminus J_1$, that are calm, can be estimated by calculating the contribution by the respective core  processors, by Lemma~\ref{lem:graphs-Z-suffice}.
Observe that the work performed by the processors in~$Z_{j}$ during $\cD_j$ is~$c_j$, where $c_j$ is the core number of~$\cD_j$ as defined by equation~\eqref{eqn:core-number}.

Let us consider the work contributed by the core processors for the extended epochs with indices in~$J_2$ and such that all their regular epochs $\cE_k$ satisfy the inequality $u_k \ge 11p^2g(p)$. This work is~$\cO(t)$ by Lemma~\ref{lem:no-overlap-in-tasks}.

There is at most one extended epoch~$\cD_j$ such that $u_{m(j)} \ge 11p^2g(p)$ and $u_{l(j)} < 11p^2g(p)$;  and if it exists then $J_3=\{ j\}$.
The work accrued during any one extended epoch by processors core for this extended epoch is~$\cO(t+p\log p)$, by Lemma~\ref{lem:work-extended-epoch}; this applies in particular to~$\cD_j$.

Next, let us estimate the amount of work by the core processors during task-rich key extended epochs~$\cD_j$; their indices form the set~$J_4$. 
These extended epochs are productive, by the assumption.
We combine \eqref{eqn:def-productive} and~\eqref{eqn:many-tasks-extended} with $u_{m(j+1)}\le s_{l(j)}$ to obtain  
\begin{equation}
\label{eqn:many-tasks}
u_{l(j-1)}-u_{m(j+1)}\ge c_j/4\ .
\end{equation} 
We use the sequence $\langle u_i \rangle_{i\ge 0}$ as reflecting progress in completing tasks (see the discussion just after Lemma~\ref{lem:inequality-ui-by-si}).
With this interpretation of the numbers $u_{l(j-1)}$ and $u_{m(j+1)}$, we see that  bound \eqref{eqn:many-tasks} means that the progress in completing tasks during $\cD_j$ and $\cD_{j+1}$, assuming $\cD_j$ is task-rich, is proportional to $c_j$, which is the work expended by the processors that are core for~$\cD_j$. 
It follows that the work performed by core processors during productive task-rich extended epochs is~$\cO(t)$.

As the last step, we estimate the amount of work by the core processors during task-poor key extended epochs~$\cD_j$; their indices form the set~$J_5$. 
There are  $\cO(\log p)$ such extended epochs $\cD_j$, by Lemma~\ref{lem:J-4}, as each of them is productive, by the assumption.
Therefore task-poor extended epochs  contribute $\cO(p \log ^2 p)$  work, since each such extended epoch consists of a single regular epoch, by Lemma~\ref{lem:few-tasks-extended-epoch}.

We obtain  $\cO (t +p\log^2 p)$ as the bound on work by summing up the contributions of~$\cD_j$, for all~$j\in \cJ$.
\end{proof}

\subsection{Restricted adversaries and randomization}

\label{sec:restricted-adversaries}

We expect that if permutations are assigned to the processors in a random and independent manner then work will be efficient with positive probability for crashes happening at arbitrary times.
If so, this will be sufficient to claim the existence of permutations that can be fixed in the code of the algorithm so that work will be efficient in a deterministic instantiation of the generic algorithm against the general adaptive adversary.

We consider restricted adversaries as technical means for approximating the behavior of the general adaptive adversaries in the probabilistic analysis of a randomized instantiation of the generic algorithm.
Random selections of tasks by core processors make the dominating contribution to progress in performing tasks in the course of an extended epoch.
These selections are made independently, which facilitates the probabilistic analysis.
For such analysis to hold, we need to be certain that the processors that make the selections have not crashed yet.
To this end, we simply specify what processors are core at the beginning of the extended epoch, so that these processors will not crash through the end of the extended epoch.
This is presented as if the adversary were constrained not to be able to crash these particular processors in this execution.
This is only a conceptual constraint, because ultimately we will consider all the possible sets of processors that do not crash through the end of an extended epoch.
The adaptive adversary is restricted by the number of crashes anyway, so in retrospect we can interpret any specific execution, in which some processors end up as operational at the end of an extended epoch, as if the adversary were restricted  at the outset of the extended epoch to spare  these processors.


\Paragraph{Bars on crashes.}

Recall that we denote by~$K_{m(j)}$ the set of processors that are non-faulty at the beginning of the extended epoch~$\cD_j$, and by~$G_{l(j)}$ the set of non-faulty processors at the end of~$\cD_j$. 

A set of processors $B\subseteq K_{m(j)}$ is said to be a \emph{bar for crashes for~$\cD_j$} if  the adversary is constrained at the beginning of~$\cD_j$ by the requirement that $B=G_{l(j)}$.

This definition means that the processes in the set~$B$ cannot fail and that all the processes in the set~$K_{m(j)}\setminus B$ must fail during the extended epoch~$\cD_j$. 
Alternatively, given a set of processors~$B$, we could consider extended epochs~$\cD_j$ that happen to result in~$B=G_{l(j)}$, rather than present this as a constraint on the adversary.
This is because we consider all possible bars on crashes in the proof of Lemma~\ref{lem:probability-f-chain}, where we use  Lemma~\ref{lem:bar-on-crashes}.
For each extended epoch~$\cD_j$  the respective set~$G_{l(j)}$ is well defined,  so in retrospect we can consider $G_{l(j)}$ as bar for crashes when $\cD_j$ ends.


\begin{lemma}
\label{lem:bar-on-crashes}

For any known $f<p$, if a bar for crashes is specified at the beginning of a main extended epoch~$\cD_j$ of an execution of algorithm \aRP, then $\cD_j$  is not productive with probability at most~$\exp(- c_j/16)$.
\end{lemma}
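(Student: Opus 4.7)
The plan is to bound the probability that the set $X$ of tasks selected by the core processors during $\cD_j$ satisfies $|X|<\min\{u_{l(j-1)},c_j\}/4$, because this directly implies the stated bound on non-productivity. Indeed, whenever a core processor $v\in Z_j$ selects a task $\tau$ in $\cD_j$, it removes $\tau$ from $\texttt{Tasks}_v$, so $\tau\notin T_{v,l(j)}\supseteq S_{l(j)}$; and since lists only shrink, $\tau\in T_{v,l(j-1)}\subseteq U_{l(j-1)}$. Hence $X\subseteq U_{l(j-1)}\setminus S_{l(j)}$ and $|X|\le u_{l(j-1)}-s_{l(j)}$, so it suffices to prove $|X|\ge\min\{u_{l(j-1)},c_j\}/4$ with probability at least $1-\exp(-c_j/16)$.

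Fixing the bar $B$ pins down $Z_j=H_{l(j)}$ and the exact number $c_j=(l(j)-m(j)+1)\,|Z_j|\,g(p)$ of task selections made by core processors during $\cD_j$. I would enumerate these selections in a canonical order (by round, breaking ties by processor identifier) and introduce indicator variables $Y_1,\ldots,Y_{c_j}$ with $Y_i=1$ iff the task chosen in the $i$-th selection is new, that is, not produced by any earlier selection. Then $|X|=\sum_{i=1}^{c_j} Y_i$, which reduces the goal to a tail bound for this sum.

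Granted a conditional lower bound of the form $\Pr[Y_i=1\mid\cF_{i-1}]\ge 1/2$ for every history $\cF_{i-1}$, the plan is to couple $(Y_i)$ with i.i.d.\ Bernoulli$(1/2)$ variables that it stochastically dominates, and then apply the Chernoff bound~\eqref{eqn:chernoff} with $n=c_j$, $r=1/2$, and $\varepsilon=1/2$, which gives $\Pr\bigl[\sum Y_i\le c_j/4\bigr]\le \exp(-c_j/16)$; combined with the trivial bound $|X|=\sum Y_i\le u_{l(j-1)}$, this delivers $|X|\ge\min\{u_{l(j-1)},c_j\}/4$ on the complement event, establishing productivity with the required probability.

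The main obstacle is the conditional estimate $\Pr[Y_i=1\mid\cF_{i-1}]\ge 1/2$. Each selection is the minimum-rank element, under $\pi_2^{(v)}$, of the selecting processor's current task list, while $\pi_2^{(v)}$ has only been partially revealed by $v$'s earlier picks and by tasks removed from $\texttt{Tasks}_v$ through flooding. I plan to address this via a principle-of-deferred-decisions argument on the padded permutation over $11p^2g(p)$ slots: independence of the $\pi_2^{(v)}$'s across distinct processors eliminates collisions among selections by different $v$'s, while for selections by the same $v$ one controls both the number of tasks $v$ has previously picked and the cumulative flood-driven removals from $\texttt{Tasks}_v$, so that the unrevealed portion of $\pi_2^{(v)}$ remains uniform over a set that dominates the already-selected set by a factor of at least two. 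The purpose of padding to the large universe of $11p^2g(p)$ slots is precisely to keep this two-to-one domination valid throughout $\cD_j$.
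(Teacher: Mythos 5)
Your reduction is fine: if $X$ is the set of tasks that core processors select during $\cD_j$, then indeed $X\subseteq U_{l(j-1)}\setminus S_{l(j)}$, and using $S_{l(j)}\subseteq S_{l(j-1)}\subseteq U_{l(j-1)}$ (which follows by chaining Lemma~\ref{lem:inequality-ui-by-si}) gives $|X|\le u_{l(j-1)}-s_{l(j)}$. So it would suffice to show $|X|\ge\min\{u_{l(j-1)},c_j\}/4$ with the stated probability. The problem lies entirely in how you try to establish this via $Y_1,\ldots,Y_{c_j}$.

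The intermediate conclusion you want from Chernoff, namely $\Pr\bigl[\sum_i Y_i\le c_j/4\bigr]\le\exp(-c_j/16)$, is \emph{false} in general. Since each $Y_i$ only counts a first discovery of some task in $U_{l(j-1)}$, we always have $\sum_i Y_i\le u_{l(j-1)}$. For a task-poor extended epoch (Section~\ref{sec:extended-epochs}), $c_j\ge u_{l(j-1)}$, and if in addition $u_{l(j-1)}<c_j/4$ the event $\sum_i Y_i\le c_j/4$ has probability $1$. Correspondingly, the conditional bound $\Pr[Y_i=1\mid\cF_{i-1}]\ge 1/2$ cannot hold for every history: once all (or most) tasks in $U_{l(j-1)}$ have been selected, $Y_i$ is $0$ with certainty. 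No amount of deferred-decisions or padding fixes this --- the padding enlarges the permutation's domain, but the pool of real tasks that can yield $Y_i=1$ is still bounded by $u_{l(j-1)}$, which can be exhausted well before $c_j$ selections.

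What you are missing is the self-stopping (escape) clause in the indicator. The paper's $X_k$ is defined to equal $1$ automatically once the target $b=\min\{u_{l(j-1)},c_j\}/4$ has already been reached, and to track a shrink of $S^{(k-1)}$ otherwise. This is exactly what keeps the conditional lower bound valid on every history and makes the Chernoff bound with $n=c_j$ legitimate even when $c_j\gg u_{l(j-1)}$. You also omit the case split on $s_{l(j-1)}$: when $s_{l(j-1)}\le\tfrac34 u_{l(j-1)}$, productivity holds deterministically (since $u_{l(j-1)}-s_{l(j)}\ge u_{l(j-1)}-s_{l(j-1)}\ge u_{l(j-1)}/4\ge b$), and the probabilistic estimate is needed only when $s_{l(j-1)}>\tfrac34 u_{l(j-1)}$. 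It is precisely this inequality, together with the fact that fewer than $b\le u_{l(j-1)}/4$ new tasks have been found so far, that lets one argue at least half of $\texttt{Tasks}_v$ is outstanding at the moment of the $k$-th selection; without the split, the ``at least half'' claim has no support. Both the escape clause and the case split are essential, and your plan as written would not survive them being added without becoming essentially the paper's argument.
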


\begin{proof}
Let us consider a specific extended epoch~$\cD_j$.
If the inequality 
\begin{equation}
\label{eqn:s-and-u}
s_{l(j-1)} \le \frac{3u_{l(j-1)}}{4}
\end{equation}
is satisfied, then the following estimates
\[
 u_{l(j-1)} - s_{l(j)} \ge u_{l(j-1)} - s_{l(j-1)}\ge u_{l(j-1)}- \frac{3u_{l(j-1)}}{4} = \frac{u_{l(j-1)} }{ 4}
\] 
hold as well.
This implies that the inequality~\eqref{eqn:def-productive}, which defines productive extended epochs, holds with certainty.
Next consider the case when~\eqref{eqn:s-and-u} does not hold.
Let $b$ denote $\frac{\min\{u_{l(j-1)},c_j\}}{4}$; this  is the right-hand side of~\eqref{eqn:def-productive}.
What needs to be shown is that the inequality $u_{l(j-1)}-s_{l(j)}\ge b$ holds with the claimed probability.

Let us order one by one the $c_j$ random selections of tasks by the core processors in~$\cD_j$.
These selections determine a sequence of operations performed, as specified by the algorithm.
The adversary does not crash any of the core processors in the course of~$\cD_j$, as each of them belongs to~$G_{l(j)}$.
Let $S^{(0)}$ denote~$S_{l(j-1)}$, and let $S^{(k)}$ be the set of tasks from~$S_{l(j-1)}$ that are not completed by the  $k$th selection, in this one-by-one ordering.

We define the random variables $X_k$ as follows, for $1\le k\le c_j$:
\[
X_k=\left\{
\begin{array}{ll}
1 & \text{ if either \ } u_{l(j-1)}-|S^{(k)}|\ge b \text{\ \  or \ } 
		S^{(k)}\varsubsetneq S^{(k-1)} \ \\
0 & \text{ otherwise. }
\end{array}
\right .
\]
The intended meaning of this definition is that $X_k$ equals~$1$ when either 
\begin{enumerate}
\item[(a)] 
the target inequality $u_{l(j-1)}-s_{l(j)}\ge b$ is already satisfied in the round when the $k$th task  is selected, or 

\item[(b)] 
the $k$th task selected from $S_{l(j-1)}$ increases the set of the tasks already selected to be performed in this extended epoch,
that is, this task has not been selected  to be performed in this extended epoch prior to this $k$th selection.
\end{enumerate}

We define $X=\sum_{1\le k\le c_j} X_k$.
This random variable has the property that $X\ge b$ if and only if  $u_{l(j-1)}-s_{l(j)}\ge b$.
The random variables $X_1,\ldots,X_k$ are well-defined and mutually independent, as long as  $u_{l(j-1)}-|S^{(k)}| < b$.
This is because these random variables are determined by random and mutually independent selections of tasks by the core processors, while no such processor may crash in~$\cD_j$, due to the assumed restriction on the adversary.

Suppose that processor~$v$ is to perform its $k$th operation in~$\cD_j$.
We want to estimate how many tasks in~\texttt{Tasks$_v$} are outstanding.
Regarding the contents of~\texttt{Tasks$_v$} at the end of the extended epoch~$\cD_{j-1}$, at most $u_{l(j-1)}-|S^{(k-1)}|$ tasks in \texttt{Tasks$_v$} are performed during the first $k-1$ operations, since the size of  \texttt{Tasks$_v$} at the end of~$\cD_{j-1}$ is at most~$u_{l(j-1)}$.
Let us recall two inequalities now.
One is $s_{l(j-1)} > \frac{3u_{l(j-1)}}{4}$; it holds because the inequality \eqref{eqn:s-and-u} does not. 
The second is $b\le \frac{u_{l(j-1)}}{4}$; it holds because $b$ denotes $\frac{\min\{u_{l(j-1)},c_j\}}{4}$.
By combining these two inequalities, we obtain that if $u_{l(j-1)}-|S^{(k-1)}|<b$, then 
\begin{eqnarray*}
s_{l(j-1)}-\big(u_{l(j-1)}-|S^{(k-1)}|\big) &>& s_{l(j-1)}-b\\
& >& \frac{3u_{l(j-1)}}{4} -  \frac{u_{l(j-1)}}{4} \\
&=& \frac{u_{l(j-1)}}{2}  \ .
\end{eqnarray*}
It follows that at least half of the tasks that are in the list \texttt{Tasks$_v$}, in the round when the $k$th selection is made, have not been performed yet.
Therefore the probability that $v$ chooses an outstanding task in the $k$th selection is at least~$\sfrac{1}{2}$, and this outcome is independent from the preceding choices.
Selecting an outstanding task to perform in the $k$th operation results in $S^{(k)}\subsetneq S^{(k-1)}$, which means that $X_k=1$. 

Consider a sequence $\langle Y_k\rangle_{1\le k\le c_j}$ of independent Bernoulli trials, with $\Pr(Y_k=1 ) =\sfrac{1}{2}$. 
Define $Y=\sum_{1\le k\le c_j} Y_k$.
Now we claim $X$  statistically dominates  $Y$, in the sense that, for any $d>0$: 
\begin{equation}
\label{eqn:statistic-dominance}
\Pr (X \le d )\le \Pr(Y \le d )\ .
\end{equation}
This is because of how the constituent random variables $X_k$ and $Y_k$ contribute to~$X$ and~$Y$, respectively.
Namely, the random variables~$X_k$ of~$X$ have the property that  each $X_k=1$ with probability at least $\sfrac{1}{2}$ independently from the prior values, as long as either $k=1$ or  
\[
\sum_{j=1}^{k-1}X_j<u_{l(j-1)}-s_{l(j-1)}-b\ ,
\]
while~$X_k=1$ when the following is satisfied
 \[
 \sum_{i=1}^{k-1}X_i \ge u_{l(j-1)}-s_{l(j-1)}-b\ .
 \] 
Comparing $X$ to $Y$, the constituent random variables~$Y_i$ of~$Y$  are mutually independent Bernoulli trials, in which $Y_i=1$ with probability~$\sfrac{1}{2}$, for all $1\le i\le c_j$.

We resort to~\eqref{eqn:statistic-dominance} and  $b\le \frac{c_j}{4}$ to estimate the probability $\Pr(X\le b)$. 
These inequalities yield 
\[
\Pr(Y \le b)
\le
\Pr\Bigl( Y\le \frac{c_j}{4}\Bigr) 
=
\Pr\Bigl( Y\le \Bigl(1-\frac{1}{2}\Bigr) \cdot c_j \cdot \frac{1}{2}\Bigr) \ .
\]
The right-hand side of this bound  is at most $\exp(- c_j/16)$, by the Chernoff bound~\eqref{eqn:chernoff}, in which we substitute $n=c_j$, $\varepsilon=\sfrac{1}{2}$, and $r=\sfrac{1}{2}$.
\end{proof}


\Paragraph{Adversaries and $f$-chains.}

We make the following definitions towards finalizing the randomized analysis.

An infinite sequence of sets of processors $\cC=\langle V_k \rangle_{k\ge 0}$ is called an \emph{$f$-chain} if, for each $k\ge 0$:
\begin{enumerate}
\item[(a)] 
$V_0=V$ is the set of all processors and $V_{k+1}\subseteq V_k$, 
\item[(b)] 
$|V_k |\ge p-f$, 
\item[(c)]
either  $|V_k|=p-f$ or $|V_k| = \lceil  p/2^i \rceil$, for some $i\ge 0$.
\end{enumerate}

Recall that we denote by~$G_{l(j)}$ the set of non-faulty processors at the end of~$\cD_j$, for $j\ge 0$.
An adversary is said to be \emph{$f$-chain-constrained} when any execution of the algorithm, in which the adversary controls the timing of crashes, has the property that the sequence $\langle G_{l(j)}\rangle_{j\ge 0}$ of sets of processors is an $f$-chain.

The following Lemma~\ref{lem:probability-f-chain} extends the estimate on the probability of one key extended epoch to be productive, as given in Lemma~\ref{lem:bar-on-crashes} in the case when one extended epoch is restricted by a bar on crashes, to the estimate on the probability of an execution to include only productive key extended epochs, in the case when the adversary is $f$-chain constrained.


\begin{lemma}
\label{lem:probability-f-chain}

For any known $f<p$ and for sufficiently large~$p$, the probability that some key extended epochs in an execution of algorithm \aRP\ are not productive is~$\cO( p^{-1/6} \log p)$ when the algorithm is executed against the $f$-chain-constrained adversary. 
\end{lemma}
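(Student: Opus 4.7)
The plan is a union bound over $f$-chains combined with Lemma~\ref{lem:bar-on-crashes}. Let $A_{\cC,j}$ be the event that the execution's $j$-th extended epoch $\cD_j$ ends with $G_{l(j)}$ equal to the $l(j)$-th element of $\cC$, and this $\cD_j$ is the first unproductive key extended epoch. The event that some key extended epoch is unproductive is then contained in $\bigcup_\cC \bigcup_j A_{\cC,j}$. Once $\cC$ is fixed, $V_{l(j)}$ serves as a bar on crashes for $\cD_j$, so Lemma~\ref{lem:bar-on-crashes} yields $\Pr[A_{\cC,j}] \le \exp(-c_j/16)$, where $c_j$ is the core number of $\cD_j$ under this bar.

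I would then lower-bound $c_j$ for key extended epochs. Since each such epoch is calm by definition, $|G_{l(j)}| \ge |K_{m(j)}|/2$; combined with the subgraph property $|Z_j| \ge |G_{l(j)}|/7$ and $g(p) = 30\lg p + 2$, this gives $c_j \ge |Z_j|\,g(p) = \Omega(|V_{l(j)}|\log p)$, so $\exp(-c_j/16) \le p^{-c|V_{l(j)}|}$ for some constant $c>0$. Next I would count the pairs $(\cC, j)$. Within a single chain, the number of candidate ``first-unproductive'' key epochs is $\cO(\log p)$: by definition every earlier key epoch is productive, so Lemma~\ref{lem:J-4} bounds the productive task-poor ones and a telescoping argument (each productive task-rich epoch shrinks $u$ by a constant factor, starting from $u < 11p^2g(p)$) bounds the productive task-rich ones. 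Across chains, an $f$-chain has $m = \cO(\log(p/(p-f))) = \cO(\log p)$ distinct subsets $W_0 \supsetneq W_1 \supsetneq \cdots \supsetneq W_m$, with at most $\binom{p}{s_i} \le (ep/s_i)^{s_i}$ choices at generation $i$ of size $s_i$. Summing $\binom{p}{s_i}\cdot p^{-cs_i}$ over generations and over the $\cO(\log p)$ candidate first-unproductive epochs should deliver the target bound $\cO(p^{-1/6}\log p)$.

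The main obstacle will be the precise numerical calibration of constants: the Chernoff exponent $1/16$ from Lemma~\ref{lem:bar-on-crashes}, the ratio $1/7$ from the subgraph property, and the base $g(p) = 30\lg p+2$ must combine to yield exactly the exponent $1/6$ in the leading factor. The most delicate regime is that of small bars, where $s_i$ is close to $p-f$: there the Chernoff factor $p^{-cs_i}$ is weakest, but the number $\binom{p}{s_i}$ of candidate subsets is correspondingly small, and I expect these opposing effects to balance so that the bound is dominated by some intermediate generation.
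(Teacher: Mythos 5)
Your high-level plan---a union bound over bar configurations combined with Lemma~\ref{lem:bar-on-crashes}, and a lower bound on $c_j$ via the subgraph property and the fact that an extended epoch contains at least one regular epoch---matches the paper's proof. The concrete gap is in how you count the candidate bar sets. You count $\binom{p}{s_i}$ subsets of $\{1,\ldots,p\}$ of size $s_i$ at each generation, and you predict that in the small-bar regime the smallness of $\binom{p}{s_i}$ and the weakness of the probability bound will balance. They do not: the exponent supplied by Lemma~\ref{lem:bar-on-crashes} is, after plugging in $|Z_j|\ge|G_{l(j)}|/7$ and $c_j/16\ge |G_{l(j)}|(30\lg p+2)/112$, roughly $p^{-0.39\,s_i}$, while $\binom{p}{s_i}\approx (ep/s_i)^{s_i}$ grows like $p^{s_i}$ for fixed $s_i$. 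Thus $\binom{p}{s_i}\cdot p^{-0.39 s_i}\approx p^{0.61 s_i}$ is unbounded, and the failure is worst exactly where you predicted cancellation (e.g.\ $f=p-1$, $s_i=1$ gives $\binom{p}{1}p^{-0.39}=p^{0.61}$).

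The paper's counting uses a different and essential observation: the bar $G_{l(j)}$ is not an arbitrary $s$-element subset of $\{1,\ldots,p\}$; it is a subset of the previous bar $K_{m(j)}=G_{l(j-1)}$, and for a key (hence calm) extended epoch $|K_{m(j)}|\le 2|G_{l(j)}|$, so $|K_{m(j)}|=\Theta(p/2^k)$. This caps the number of candidates by $2^{|K_{m(j)}|}=2^{\cO(p/2^k)}$ rather than $\binom{p}{s_i}$. Against the per-epoch probability $\exp(-\Theta((p/2^k)\log p))$, the product $2^{\cO(p/2^k)}\cdot\exp(-\Theta((p/2^k)\log p))=\exp(-\Theta((p/2^k)\log p))\le p^{-\Theta(1)}$ is small uniformly in $k$, and summing over the $\cO(\log p)$ size classes gives the claimed $\cO(p^{-1/6}\log p)$. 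To repair your proposal, replace $\binom{p}{s_i}$ by the count of subsets of the preceding chain element, i.e.\ $2^{s_{i-1}}\le 2^{2s_i}$, which makes the per-generation product decay geometrically in $s_i\log p$. Your $\cO(\log p)$ bound on candidate ``first-unproductive'' indices is a reasonable idea but is not how the paper organizes the bound (it groups by bar size), and in any case it cannot compensate for the polynomial-in-$p$ blow-up in the subset count.
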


\begin{proof}
When the adversary is $f$-chain-constrained, then the sequence of sets of processors $\langle G_{m(j)}\rangle_{j\ge 0}$ that stay non-faulty through the end of an extended epoch $\cD_j=\langle \cE_{m(j)},\ldots, \cE_{l(j)}\rangle$ make an $f$-chain.
This means that the numbers in the corresponding sequence $\langle |G_{m(j)}|\rangle_{j\ge 0}$  are of a restricted form.
We partition key extended epochs into groups~$C_k$, for $0\le k\le \lceil \lg \frac{p}{p-f}\rceil$, determined by these numbers.
Namely, group~$C_k$ consists of these key extended epochs~$\cD_j$  for which $|G_{m(j)} |=\lceil p/2^k\rceil$, for $0\le k < \lceil \lg \frac{p}{p-f}\rceil$, and $| G_{m(j)} |=p-f$ in the case of  $k = \lceil \lg \frac{p}{p-f}\rceil$.

The core graph~$Z_j$, for an extended epoch~$\cD_j$ in~$C_k$, has at least $\frac{p}{7 \cdot 2^{k+1}}$ processors, by Lemma~\ref{lem:subgraph-property}.
It follows that the core number~$c_j$ is at least $\frac{p g(p)}{7 \cdot 2^{k+1}}$, even when the extended epoch~$\cD_j$ consists of just a single regular epoch.
The probability that an extended epoch~$\cD_j$ with a specific set~$G_{l(j)}$ is not productive is at most $\exp (-p g(p)/(7 \cdot 16\cdot 2^{k+1}))$, by  Lemma~\ref{lem:bar-on-crashes}.

If $\cD_j$ is in~$C_k$, for $0\le k\le \lceil \lg \frac{p}{p-f}\rceil$, then there are at most $2^{p/2^k}$ sets~$G_{l(j)}$.
By the union bound, the probability that an extended epoch in~$C_k$ is not productive is thus at most 
\begin{eqnarray*}
\label{eqn:union-bound}
2^{p/2^k} \exp \Bigl(-\frac{p g(p)}{7 \cdot 16\cdot 2^{k+1}}\Bigr)
&=&
\exp\Bigl(-\frac{p}{2^k}\Bigl(\frac{g(p)}{224} - \ln 2\Bigr)\Bigr)\\
&\le&
\exp(- g(p) /225)
\ ,
\end{eqnarray*}
for sufficiently large~$p$.
There are at most $1+\log p$ groups $C_k$, since $k \le  \lceil \lg \frac{p}{p-f}\rceil$. 
We can sum the right-hand sides of~\eqref{eqn:union-bound}, over all these~$k$, to obtain the following estimate
\begin{eqnarray*}
\cO(\log p) e^{- g(p) /225} 
&=& 
\cO(\log p) \, p^{- 30 / (226\cdot \ln 2)} \\
&<&
\cO(\log p) \, p^{-1/6}
\ ,
\end{eqnarray*}
that holds for sufficiently large~$p$.
This is a bound on the probability that some key extended epoch is not productive.
\end{proof}

\subsection{General adversaries and deterministic computation}

\label{sec:deterministic}

Next we show that the generic algorithm can be instantiated with fixed permutations at processors such that $\cO(t+p\log^2 p)$ is the worst-case bound on work.


\begin{lemma}
\label{lem:deterministic-work-f-chain-constrained-adversary}

Algorithm \aDP\  can be instantiated, for any known $f<p$, to have  work $\cO (t +p\log^2 p)$ against the $f$-chain-constrained adversary during main extended epochs.
\end{lemma}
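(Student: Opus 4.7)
The plan is to invoke the probabilistic method on the bound from Lemma~\ref{lem:probability-f-chain}. For sufficiently large~$p$, the failure probability $\cO(p^{-1/6}\log p)$ proved there is strictly less than~$1$. The crucial observation is that the union bound carried out inside the proof of Lemma~\ref{lem:probability-f-chain} ranges, at each level~$k$, over \emph{all} possible subsets that can appear as some $G_{l(j)}$ along an $f$-chain; since an $f$-chain-constrained adversary is fully characterized, from the analysis's point of view, by the sequence of sets $\langle G_{l(j)}\rangle$ it produces, this union bound is in effect a union bound over the entire class of $f$-chain-constrained adversaries. Hence, with positive probability over the random choice of the per-processor pair $(\pi_1^{(v)},\pi_2^{(v)})$ used by \aRP, \emph{every} execution against \emph{any} $f$-chain-constrained adversary has all key extended epochs productive.

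Next I would fix one such lucky assignment of permutations and hard-code it into \aDP. Because the algorithms \aRP\ and \aDP\ differ only in how their permutations are obtained (randomly at start-up versus baked into the code), an execution of \aDP\ with the selected permutations is identical to the corresponding execution of \aRP\ under the same permutations; therefore the productivity property transfers: for this deterministic instantiation, every key extended epoch is productive in every execution against any $f$-chain-constrained adversary. Applying Lemma~\ref{lem:key-epochs-productive} then yields the claimed work bound $\cO(t+p\log^2 p)$ during main extended epochs. For the handful of small values of~$p$ for which the ``sufficiently large'' hypothesis of Lemma~\ref{lem:probability-f-chain} might not yet hold, the term $p\log^2 p$ is a constant, so the bound is trivial and any fixed pair of permutations suffices.

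The hard part is the quantifier switch: the probabilistic bound is over random permutations and an implicit adversary, whereas what we need is a single deterministic permutation assignment that is good \emph{against every} adaptive $f$-chain-constrained adversary. The resolution is precisely the structure of the union bound in Lemma~\ref{lem:probability-f-chain}: because the chain sizes are constrained to the $\cO(\log p)$ discrete values $\lceil p/2^k\rceil$ and $p-f$, and the bound $2^{p/2^k}$ at level~$k$ already exhausts all admissible survivor sets at that level, summing these contributions quantifies uniformly over adversarial strategies, making the probabilistic method applicable in the form described above.
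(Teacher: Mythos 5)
Your proof takes essentially the same route as the paper's: invoke Lemma~\ref{lem:probability-f-chain} to get a failure probability below~$1$ for large~$p$, apply the probabilistic method to extract a fixed family of permutations that makes all key extended epochs productive against every $f$-chain-constrained adversary, and then conclude via Lemma~\ref{lem:key-epochs-productive}. Your explicit unpacking of why the union bound inside Lemma~\ref{lem:probability-f-chain} quantifies over all admissible adversarial outcomes (hence licensing the quantifier switch), and your remark on small~$p$, are correct clarifications of points the paper leaves implicit, but the argument itself is the same.
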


\begin{proof}
Let us consider algorithm \aRP\ executed against the $f$-chain-constrained adversary.
The probability of the event that some key extended epochs in an execution of algorithm \aRP\ are not productive is bounded above by Lemma~\ref{lem:probability-f-chain}.

That bound  is less than~$1$ for sufficiently large~$p$, and then the probability that all key extended epochs are productive is positive.
By the principle of the probabilistic method, we obtain that there exist such permutations that, when algorithm \aDP\ is instantiated with these permutations, then all key extended epochs in any execution of this instantiation against the $f$-chain-constrained adversary are productive.
This allows to make use of Lemma~\ref{lem:key-epochs-productive} to obtain that the work  accrued by such an instantiation of algorithm \aDP\ against the $f$-chain-constrained adversary during main extended epochs  is~$\cO(t+p\log^2 p)$.
\end{proof}


\begin{lemma}
\label{lem:approximation-by-f-chain-constrained adversary}

Suppose that a deterministic instantiation of algorithm \aDP\  is such that if $p$ processors aim to perform $t$ tasks against an $f$-chain-constrained adversary, for a known $f<p$, then this is accomplished with at most $C(p,t)$ work and within $D(p,f,t)$ rounds during main regular epochs, for some functions $C(p,t)$ and~$D(p,f,t)$.
If this instantiation is executed against the general $f$-bounded adversary, then its work is $\cO(C(p,t)+p\log p)$ and time is $D(p,f,t)$ during main regular epochs.
\end{lemma}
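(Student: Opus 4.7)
The approach is to show that any execution against the general $f$-bounded adversary can be analyzed, up to an additive $\cO(p\log^2 p)$ slack, as if it had been carried out against an $f$-chain-constrained adversary, and then to invoke the hypothesis to conclude the bound.

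First, I would decompose an arbitrary execution against the general $f$-bounded adversary into extended epochs and invoke Lemma~\ref{lem:work-stormy-extended-epochs} to bound the contribution of stormy extended epochs by $\cO(p\log^2 p)$. Since there are only $\cO(\log p)$ stormy extended epochs (each consisting of a single stormy regular epoch by Lemma~\ref{lem:stormy-extended-epoch}) and each one lasts $g(p)=\cO(\log p)$ rounds, their contribution to time is $\cO(\log^2 p)$. These contributions furnish the additive slack in the conclusion and can be ignored in what follows.

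Next I would turn to the calm extended epochs. In any calm extended epoch, the inequality $|G_{l(j)}|\ge |K_{m(j)}|/2$ holds, so the sequence of survivor counts across consecutive calm extended epochs decreases by at most a factor of two per epoch. I would coalesce consecutive calm extended epochs into super-epochs whose endpoint survivor counts are within a factor of two of some value in $\{\lceil p/2^i\rceil : i\ge 0\}\cup\{p-f\}$; the resulting sequence of super-epoch endpoint sizes then forms a genuine $f$-chain, and the crash pattern of the general adversary, restricted to these endpoints, is realized by some $f$-chain-constrained adversary. I would then compare, within each super-epoch, the work accrued by the deterministic instantiation under the general adversary against that which it would accrue under this associated $f$-chain-constrained adversary. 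Since the core graphs $Z_j$ are determined from the survivor sets via the subgraph function $P$ of Definition~\ref{def:P}, and Lemma~\ref{lem:subgraph-property} guarantees that their sizes are proportional to the corresponding survivor-set sizes, rounding to the nearest $f$-chain value changes core graph sizes by at most a constant factor. By Lemma~\ref{lem:graphs-Z-suffice}, the total work per super-epoch is within a constant factor of the work accrued by the $f$-chain-constrained counterpart. Summing across super-epochs and invoking the hypothesis that the latter is at most $C(p,t)$, one obtains the total work bound $\cO(C(p,t)+p\log^2 p)$; the time bound $D(p,f,t)$ follows by the analogous round-counting.

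The main obstacle will be making the per-super-epoch comparison precise, because the crash patterns of the general and $f$-chain-constrained executions differ away from the super-epoch endpoints, so the private lists \texttt{Tasks}, \texttt{Processors}, and \texttt{Busy} evolve differently and the two executions cannot be equated round-by-round. The resolution is that $C(p,t)$ is a worst-case bound valid \emph{uniformly} over all $f$-chain-constrained adversaries for the given instantiation, so it suffices to exhibit, for each super-epoch, a specific $f$-chain-constrained adversary matching the general adversary at the super-epoch endpoints and then transfer the work bound via the constant-factor comparison of core graph sizes above.
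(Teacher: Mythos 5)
Your plan has the right flavor — compare the general execution to an $f$-chain-constrained one and invoke the hypothesis — but the resolution of what you identify as the ``main obstacle'' is where the argument breaks. You propose to exhibit, for each super-epoch, its own $f$-chain-constrained adversary and then sum the per-super-epoch work bounds. The difficulty is that the hypothesis gives $C(p,t)$ as a bound on \emph{total} work over an \emph{entire} execution against a \emph{single} $f$-chain-constrained adversary; it does not bound the work accrued within one super-epoch of an execution in which the starting state (the lists \texttt{Tasks}, \texttt{Processors}, \texttt{Busy}) may be anything whatsoever. Summing super-epoch contributions over different adversaries therefore gives you a sum of quantities with no individual upper bounds, and no way to invoke the hypothesis. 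Also, the comparison you suggest is at the level of core-graph sizes, while total work is a sum over rounds of the number of non-crashed processors; Lemma~\ref{lem:graphs-Z-suffice} lets you pass from one to the other only in calm epochs, and only once you have a single coupled execution to compare against.

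The paper's proof sidesteps the obstacle you flag by constructing one coupled execution $\cT$ of the same instantiation, driven by an $f$-chain-constrained adversary, directly from the execution $\cS$ against the general adversary: processors that crash in $\cS$ are crashed in $\cT$ no later and always at the end of a regular epoch, and extra processors are crashed so that the survivor count at each epoch boundary is a critical number (i.e.\ $p-f$ or $\lceil p/2^i\rceil$). Two observations then finish the argument. First, the survivors in $\cT$ are a subset of the survivors in $\cS$ at every round, so $\cS$ reaches termination of main regular epochs no later than $\cT$ does; this gives the time bound $D(p,f,t)$ exactly, whereas your ``analogous round-counting'' would only yield $\cO(D(p,f,t)+\log^2 p)$, which does not match the lemma's statement. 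Second, in any calm regular epoch the ratio of alive processors in $\cS$ to those in $\cT$ is at most $4$ (one factor $2$ from the epoch being calm, one from rounding down to a critical number), so work in $\cS$ during calm epochs is at most $4C(p,t)$; stormy epochs are disposed of by Lemma~\ref{lem:stormy-epochs}, giving the claimed $\cO(C(p,t)+p\log^2 p)$. Note that the paper works directly with regular epochs and does not need the extended-epoch machinery or any coalescing into super-epochs, which in your proposal is also underspecified: the condition ``within a factor of two of a critical number'' is essentially vacuous, and enforcing \emph{exact} critical sizes at the endpoints requires forcing extra crashes, which is exactly the coupling step you are missing.
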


\begin{proof}
Let us consider an execution $\cS$ of the given deterministic instantiation of algorithm \aDP\  against the general $f$-bounded adversary.
This adversary is adaptive and constrained by $f$ only as the bound on the number of crashes.
We will compare the main work performed in $\cS$ to the work performed during a specific  execution~$\cT$ of the algorithm against the $f$-chain-constrained adversary.

Recall that the $f$-chain-constrained adversary is restricted such that extended epochs~$\cD_j$ begin with the set of processors $K_{m(j)}$ whose sizes have the property that $| K_{m(j)} |=\lceil p/2^k\rceil$, for $0\le k < \lceil \lg \frac{p}{p-f}\rceil$, or $| K_{m(j)} |=p-f$.
For a given execution in which crashes are controlled by an $f$-chain-constrained adversary, 
let us call a natural number \emph{$f$-critical} (or just \emph{critical} if $f$ is clear from the context)
if it is the cardinality of a set of processors $K_{m(j)}$ in the execution.

The execution $\cT$ is obtained from~$\cS$ in such a way that processors in $\cT$ are crashing  in the same order as in $\cS$ and not later than in $\cS$, but some processors that crash in $\cS$ crash possibly earlier in~$\cT$ than in~$\cS$.
We maintain two invariants regarding timing of crashes.
One is that a processor that crashes in $\cT$ does it at the very end of some epoch.
The other is that the number of processors that crash by the end of each epoch of $\cT$ is a critical number.

We consider the consecutive regular epochs of~$\cS$ one by one to determine crashes that determine the corresponding epochs of~$\cT$.
We denote by $\cE_i$ and by $\cE'_i$ the $i$th epoch of $\cS$ and $\cT$, respectively.
Similarly, we denote by $G_i$ and $G_i'$ the sets of processors that are non-faulty at the end of epoch $\cE_i$ and~$\cE'_i$, respectively.
The first epoch $\cE_0$ of~$\cS$ consists of initialization, and this same epoch starts~$\cT$ as $\cE_0'$.
Suppose we have considered the epochs $\cE_0$ through $\cE_{j}$ of $\cS$ to determine the corresponding epochs  $\cE_0'$ through $\cE_{j}'$ of~$\cT$.
Now consider $\cE_{j+1}$.
If some processors crash in $\cE_{j+1}$ and they have not crashed in $\cS$ by the end of $\cE_{j}$ then we crash all these processors at the very end of $\cE_{j}'$ and additionally as many additional processors as needed so that the size of the set $G_j'$ is a critical number, but the minimum such a number.

All the processors terminate in the execution $\cS$ not later than in $\cT$ because in each round the processors that are non-faulty in $\cT$ make a subset of the processors that are non-faulty in $\cS$.
So if $D(p,f,t)$ is the bound on rounds in executions against $f$-chain-constrained adversary then it is a bound for the general adversary constrained by $f$, the bound on the number of crashes.

Moreover, if  epoch~$\cE_i$ is calm, then in any round of~$\cE_i$ the ratio of the cardinalities 
of these sets is at most~$4$; this is because the number of non-faulty processors decreases by at most a factor of~$2$ in~$\cE_i$ and the number of non-faulty processors in~$\cE_i$ drops below a critical number in a round of~$\cE_i$ then this contributes another factor of~$2$.
We conclude that the amount of work performed in~$\cS$ during calm epochs is at most four times 
the corresponding amount of work in~$\cT$.
The stormy epochs contribute $\cO(p\log p)$ work to any execution, by Lemma~\ref{lem:stormy-epochs}.
It follows that the main work in~$\cS$ is at most $4C(p,t) + \cO(p\log p)$.
\end{proof}

The following Theorem summarizes the properties of algorithm \aDP.


\begin{theorem}
\label{thm:deterministic-permutations}

Algorithm \aDP\ is a deterministic nonconstructive solution for  the \DA\ problem that, for any known numbers of $p$ processors, $t$ tasks, and bound on the number of crashes $f<p$, can be instantiated to have work $\cO (t +p\log^2 p)$.
\end{theorem}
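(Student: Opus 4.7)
The plan is to combine the extended epoch analysis from Section~\ref{sec:algorithm-optimized-for-work} with the generic bounds on closing and mixed epochs from Section~\ref{sec:design-algorithms}, and transfer the randomized existence argument to the general adaptive adversary.

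First I would handle the main extended epochs. Lemma~\ref{lem:deterministic-work-f-chain-constrained-adversary} provides instantiation permutations under which algorithm \aDP\ accrues $\cO(t+p\log^2 p)$ work during main extended epochs against the $f$-chain-constrained adversary; the existence of such permutations is what makes the algorithm nonconstructive. I would fix any such instantiation and analyze the very same algorithm against the general $f$-bounded adversary.

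Next I would invoke Lemma~\ref{lem:approximation-by-f-chain-constrained adversary} with $C(p,t)=\cO(t+p\log^2 p)$. This transfers the work bound from the $f$-chain-constrained setting to the general setting, yielding $\cO(t+p\log^2 p)$ work during main regular epochs against an arbitrary adaptive $f$-bounded adversary, at the cost of an additive $\cO(p\log^2 p)$ absorbed into the existing bound.

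It remains to account for mixed and closing epochs. By Lemma~\ref{lem:one-epoch-switch}, at most two mixed epochs occur before every core processor switches to closing phases, contributing only $\cO(p\log p)$ work. For closing epochs, Lemma~\ref{lem:closing-is-as-efficient-as-main} applies with $B(p)$ equal to the main-phase work on $p$ tasks performed by $p$ processors, which is $\cO(p+p\log^2 p)=\cO(p\log^2 p)$ in our case; this yields $\cO(p\log^2 p)$ closing work. Summing the three contributions gives the claimed $\cO(t+p\log^2 p)$ bound.

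The main obstacle, namely the existence of permutations that force every key extended epoch to be productive against all admissible crash patterns, has already been dispatched through the probabilistic method in Lemma~\ref{lem:probability-f-chain} and Lemma~\ref{lem:deterministic-work-f-chain-constrained-adversary}; the remaining work here is essentially bookkeeping, combining the cited lemmas and absorbing the $\cO(p\log^2 p)$ additive terms that arise at each reduction step.
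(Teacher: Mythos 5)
Your overall decomposition mirrors the paper's: take the instantiation from Lemma~\ref{lem:deterministic-work-f-chain-constrained-adversary}, transfer via Lemma~\ref{lem:approximation-by-f-chain-constrained adversary}, then add contributions from the transitional and closing phases using Lemmas~\ref{lem:one-epoch-switch} and~\ref{lem:closing-is-as-efficient-as-main}.

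However, there is a genuine gap in how you account for the transition out of the main phase. Lemmas~\ref{lem:deterministic-work-f-chain-constrained-adversary} and~\ref{lem:approximation-by-f-chain-constrained adversary} deliver a bound of $\cO(t+p\log^2 p)$ for the work accrued during \emph{main extended epochs}, i.e.\ during the extended epochs $\cD_0,\ldots,\cD_{j-1}$ in which all processors in $Z_{j'}=H_{l(j')}$ remain busy throughout. The first non-main extended epoch $\cD_j$ is a different object: it can span many regular epochs (up to $(l(j)-m(j)+1)$ of them with $(l(j)-m(j)+1)|Z_j|g(p) < u_{l(j-1)}$ when it is task-rich), and only the last one or two of those regular epochs are ``mixed'' in the regular-epoch sense. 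The earlier regular epochs of $\cD_j$ are covered by neither your main-extended-epoch bound nor your $\cO(p\log p)$ mixed-epoch bound. This is not a bookkeeping triviality: that leftover work is precisely what Lemma~\ref{lem:key-epochs-productive} and its transfer via Lemma~\ref{lem:approximation-by-f-chain-constrained adversary} do \emph{not} control once $\cD_j$ stops being main. The paper closes this hole by case analysis on $\cD_j$: if $\cD_j$ is task-poor it is a single regular epoch (Lemma~\ref{lem:few-tasks-extended-epoch}), contributing $\cO(p\log p)$; if it is task-rich, the core number satisfies $c_j < u_{l(j-1)} \le t$, so the core processors' work inside $\cD_j$ is charged to $t$, and Lemma~\ref{lem:graphs-Z-suffice}-style reasoning bounds the rest. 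You should include this explicit treatment of $\cD_j$ before invoking Lemma~\ref{lem:one-epoch-switch} for the final switch, and only then fold in the closing-epoch bound from Lemma~\ref{lem:closing-is-as-efficient-as-main}.
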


\begin{proof}
Let us take an instantiation of the algorithm as provided by 
Lemma~\ref{lem:deterministic-work-f-chain-constrained-adversary}.
Its work against the $f$-chain-constrained adversary  during main extended epochs is always 
$\cO (t +p\log^2 p)$, as stated in Lemma~\ref{lem:deterministic-work-f-chain-constrained-adversary}.
By Lemma~\ref{lem:approximation-by-f-chain-constrained adversary}, we obtain that the amount of 
work accrued by this instantiation during main extended epochs against the general $f$-bounded adversary, constrained only by $f$ as a bound on the number of crashes, is also always $\cO (t +p\log^2 p)$.

Let us consider an execution against the general $f$-bounded adversary and its extended epochs after the main ones. 
We may restrict our attention only to calm extended epochs, as stormy extended epochs contribute $\cO(p\log p)$ work, by Lemma~\ref{lem:work-stormy-extended-epochs}.
Let $\cD_j$ be the first such extended epoch that is not main.
This means that some core processor starts executing procedure \texttt{Closing} during this extended epoch.
There are two cases, depending on whether $\cD_j$ is task-poor or task-rich.
We discuss them next.

If $\cD_j$ is task-poor, then it consists of just one regular epoch, say epoch $\cE_k$, by Lemma~\ref{lem:few-tasks-extended-epoch}.
All the core processors switch to performing closing phases by the end of $\cE_{k+1}$, by Lemma~\ref{lem:one-epoch-switch}.
These two epochs $\cE_k$ and $\cE_{k+1}$ contribute $\cO(p\log p)$ to work.
If $\cD_j$ is task-rich, then the inequality $c_j < u_{l(j-1)}$ holds.
The number $c_j$ is the amount of work spent by the core processors in $\cD_j$.
This work may be charged to the number of tasks $t$, because $u_{l(j-1)}\le t$.
All the core processors switch to performing closing phases by the end of the first regular epoch after the extended epoch~$\cD_j$, by Lemma~\ref{lem:one-epoch-switch}.
This one regular epoch contributes $\cO(p\log p)$ work.

Starting from this point, we consider contributions to work from the three kinds of  regular epochs.
We can conclude now, by the discussion above, that the total work during main regular epochs is $\cO (t +p\log^2 p)$.

Two consecutive mixed epochs are sufficient for every core processor to learn that there are no outstanding tasks, by Lemma~\ref{lem:one-epoch-switch}. 
We conclude that the total work during mixed epochs is $\cO(p\log p)$.

Next, we  estimate the work performed during closing regular epochs by applying Lemma~\ref{lem:closing-is-as-efficient-as-main}. 
We need first to take the amount obtained by substituting~$p$ for~$t$ in the estimate $\cO(t+p\log^2 p)$ on the work during main regular epochs, which gives $\cO(p\log^2 p)$.
Next, we add $\cO(p\log p)$ to it, which does not affect the asymptotic bound.
We conclude that the total work performed during closing regular epochs is $\cO(p\log^2 p)$. 

Overall, the contributions from all three kinds of regular epochs are  therefore $\cO (t +p\log^2 p)$.
\end{proof}

Observe that the bound on work given in Theorem~\ref{thm:deterministic-permutations} does not
depend on~$f$,  as it is also the case in Theorem~\ref{thm:work-balance-load}. 
These algorithms use the respective overlay graphs that depend on~$f$.
Hence the amount of communication the algorithms generate depends on~$f$, as estimated in Lemma~\ref{lem:communication-from-work} in Section~\ref{sec:design-algorithms}.


\begin{corollary}
\label{cor2}

Algorithm \aDP\ is a deterministic nonconstructive solution for  the \DA\ problem 
with $p$ processors and $t$ tasks that can be instantiated, for a number of crashes $f$ bounded by $f\le cp$,  
for any known constant $0<c < 1$, to have effort $\cO (t +p\log^2 p)$.
\end{corollary}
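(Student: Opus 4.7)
The plan is to follow the same two-step recipe that was used to derive Corollary~\ref{cor1} from Theorem~\ref{thm:work-balance-load}, but this time invoking Theorem~\ref{thm:deterministic-permutations} as the source of the work bound. Specifically, I would first fix the constant $c$ with $0<c<1$ and note that the assumption $f\le cp$ gives $p-f\ge (1-c)p$, so that the ratio $p/(p-f)$ is bounded above by the constant $1/(1-c)$. This is the precise regime in which the overlay graphs $G(p,f)$ of Section~\ref{sec:graphs} have constant maximum degree, by Lemma~\ref{lem:degree-power-of-Lp}.

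Next, I would instantiate algorithm \aDP\ with the permutations guaranteed by Theorem~\ref{thm:deterministic-permutations}, obtaining work $\cW=\cO(t+p\log^2 p)$ against any $f$-bounded adversary, for the given known $f<p$. Since the communication bound of Lemma~\ref{lem:communication-from-work} scales as $\cO\bigl((p/(p-f))^{3.31}\,\cW\bigr)$, and since under the assumption $f\le cp$ the factor $(p/(p-f))^{3.31}$ is a constant depending only on $c$, the message complexity simplifies to $\cM=\cO(\cW)=\cO(t+p\log^2 p)$.

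Finally, I would combine the two bounds using the definition $\cE=\cW+\cM$ to conclude that the effort of the instantiated \aDP\ is $\cO(t+p\log^2 p)$, matching the claim. There is no genuine obstacle here: the argument is a direct transcription of the proof of Corollary~\ref{cor1}, with Theorem~\ref{thm:deterministic-permutations} replacing Theorem~\ref{thm:work-balance-load} as the input work bound. The only point worth verifying carefully is that the linearly-bounded restriction $f\le cp$ is indeed compatible with the hypothesis ``for any known $f<p$'' in Theorem~\ref{thm:deterministic-permutations}, which it trivially is since $f=\lfloor cp\rfloor$ is known to the algorithm and satisfies $f<p$.
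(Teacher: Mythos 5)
Your proposal is correct and follows essentially the same route as the paper: the paper's own proof of Corollary~\ref{cor2} simply says the argument mirrors that of Corollary~\ref{cor1} with Theorem~\ref{thm:deterministic-permutations} substituted for Theorem~\ref{thm:work-balance-load}, which is exactly what you have spelled out — using $f\le cp$ to make the $(p/(p-f))^{3.31}$ factor in Lemma~\ref{lem:communication-from-work} a constant, then adding the resulting $\cM=\cO(\cW)$ to $\cW=\cO(t+p\log^2 p)$.
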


\begin{proof}
The argument is similar to that used in the proof of Corollary~\ref{cor1} in Section~\ref{sec:constructive}.
The difference is in referring to the bound in Theorem~\ref{thm:deterministic-permutations}.
\end{proof}

\section{An Algorithm Optimized for Effort} 

\label{sec:effort}

In this section we present an algorithm, called \aEP, that is optimized for effort complexity against the unbounded adversary.
This algorithm is structured to trade work for effort efficiency, and so it is not as work-efficient as \aDP, but it curbs the amount of communication when the number of crashes becomes large.
Recall that the communication complexity of \aDP\ is estimated using Lemma~\ref{lem:communication-from-work}, where the expression in the big-O bound can be as high as  $\omega(p^3)$ for $f$ sufficiently close to~$p$.

In optimizing effort, it is natural to balance the bounds on work and communication.
To do so, we employ a hybrid approach to algorithm design, where we first use \aDP\ for some duration, then we use some other algorithm to complete the remaining work.
We use  \aDP\ for as long as the number of failures~$f$ is sufficiently small, 
namely $f<f_1$ for a certain~$f_1<p$ that is used by the code.
Additionally, \aDP\ is modified so that its communication is comparable to its work, which is accomplished by having processors send messages only after  sufficiently many tasks have been performed.
We use~$T_1$ to denote the number of rounds that is  sufficient for algorithm \aDP\ to complete all tasks if there are at most~$f_1$ crashes in an execution; this number~$T_1$ is used in the code.
If first running \aDP\ for~$T_1$ rounds results in completion of all tasks, then we are done.
Otherwise we set stage for transitioning to another algorithm
by renaming the processors that are still non-faulty.
Specifically, after renaming we invoke the algorithm of De~Prisco et al.~\cite{PriscoMY94}, referred to as algorithm~\aDMY.


\begin{fact}[\cite{PriscoMY94}] 
\label{fact:DMY}

Algorithm \aDMY\ has work $\cO(p(f+1) + t)$ and message complexity $\cO(p(f+1))$.
\end{fact}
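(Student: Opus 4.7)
The plan is to describe the structure of algorithm \aDMY\ from \cite{PriscoMY94} and sketch how its work and communication bounds arise; since this is a cited fact, the argument recapitulates the analysis in the source rather than inventing a new one. The algorithm is built around a coordinator paradigm with a checkpointing subroutine that serves as a consistent group-membership primitive: all non-faulty processors share a common view of who is currently participating.

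First I would lay out the phase structure. Each phase of \aDMY\ has a coordinator elected from the current view, which (i) partitions the outstanding tasks among the processors in the view, (ii) collects acknowledgments after workers execute their assignments, and (iii) updates the view via checkpointing. If a coordinator crashes, the next processor in the view (by rank) is promoted to coordinator and a new phase begins after a checkpoint. Thus a new phase is initiated only when a crash is detected, either of a coordinator or of a worker whose report is missing.

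For the work bound $\cO(p(f+1)+t)$, the key observation is to separate productive work from recovery overhead. Between two crash-induced phase boundaries, task assignments are disjoint, so the productive contribution telescopes to $\cO(t)$ over the whole execution. Each phase, however, can charge up to $\cO(p)$ available-processor steps to checkpointing and coordinator election before real task progress resumes. Since each phase after the first is charged to a distinct crash, the number of phases is $\cO(f+1)$, which yields the $\cO(p(f+1))$ overhead term.

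For the message bound $\cO(p(f+1))$, each phase uses $\cO(p)$ point-to-point messages: the coordinator multicasts assignments to at most $p$ workers, the workers reply, and the checkpointing round costs $\cO(p)$ messages. Multiplying by $\cO(f+1)$ phases gives the stated bound. The main obstacle in turning this sketch into a complete proof is the checkpointing analysis itself, where one must verify that the membership views remain consistent across surviving processors despite non-atomic multicast under crashes, so that disjointness of task assignments across phases (the linchpin of the $\cO(t)$ productive term) actually holds; this is where the argument in \cite{PriscoMY94} does its real work.
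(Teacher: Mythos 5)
This is a cited fact: the paper presents it without proof, simply invoking~\cite{PriscoMY94}, so there is no in-paper argument for your sketch to be compared against. Your reconstruction of the De~Prisco--Mayer--Yung analysis is pitched at a reasonable level and captures the essential structure (coordinator per phase, checkpointing to re-establish a consistent view, a new phase charged to each detected crash, $\cO(p)$ overhead per phase, $\cO(f+1)$ phases).

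One place where the sketch is looser than the source: the claim that ``task assignments are disjoint, so the productive contribution telescopes to $\cO(t)$'' is an oversimplification. Assignments are disjoint \emph{within} a phase, but tasks assigned to a worker that crashes before reporting are reassigned in later phases, so a task can be executed more than once. The accounting in the source does not get $\cO(t)$ from disjointness alone; instead, each re-execution is charged to the crash that caused it, and because only $f$ crashes occur and each phase allocates a bounded per-processor share, the total redundant executions are absorbed into the $\cO(p(f+1))$ term alongside the checkpointing/election overhead. Your sketch already acknowledges that the checkpointing consistency argument is the hard part, and that is accurate; but you should also make the redundant-work charge explicit rather than appealing to disjointness, since without it the $\cO(t)$ term is not justified.
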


Renaming the non-faulty processors is necessary in order to use
the bounds given in Fact~\ref{fact:DMY} because algorithm \aDMY{}
assumes that the processors commencing the computation
are consecutively numbered (e.g., if there are  $k$ processes,
then they have unique identifiers from the interval $[1,k]$).

Specifying algorithm \aEP\ involves determining how $f_1$ and $T_1$ depend on~$p$ and~$t$. 
These parameters are chosen do that the work  in the invocation of \aDP\ is approximately 
the same as the work  in the invocation of \aDMY.
This balancing is done by means of the bounds given 
in Theorem~\ref{thm:deterministic-permutations} and Fact~\ref{fact:DMY}.
What we will obtain is that the parameter~$f_1$ needs to be close to~$p$, 
namely, $f_1=p(1-p^{-c})$, for a constant $0<c<1$ 
(the details will be given in Section~\ref{sec:justifying-initialization}).
It turns out that with as few as $p^{1-c}$ processors remaining non-faulty
at the invocation of algorithm~\aDMY{}  we can afford to restore the lists of tasks 
to their original form, thus redoing some of the tasks, 
without compromising the performance bounds.

\subsection{The design of  algorithm \aEP{}}

\label{sec:design-effort-priority}

Algorithm \aEP\ is structured in terms of three parts, named  \PartOne, \PartTwo, and \PartThree.
\PartOne\ executes a suitably modified algorithm \aDP{}.
\PartTwo\ provides a transition from to algorithm \aDMY{},
its purpose is to coordinate the renaming of the presumed active processes.
\PartThree\ executes algorithm \aDMY, with the (presumed) active processes 
referred to by their new names.
The three parts are given in more detail as follows.

\PartOne\ invokes the modified version of \aDP\ for up to $T_1$ rounds.
We call this modified algorithm \aMDP, and it is obtained as follows.
The algorithm uses the overlay graph~$L(p)^\ell$ with the threshold~$f_1$ playing the role of the number of crashes and with $\ell$ as determined by Lemma~\ref{lem:degree-power-of-Lp}; this means that $G(p,f_1)$ is used as the overlay graph.
We use parameter~$\Delta_1$ to denote the maximum degree of~$G(p,f_1)$.
The input set of tasks is partitioned into \emph{chunks}, with each chunk consisting of~$\Delta_1$ tasks.
Each such chunk is now treated as a single ``task,'' in the sense that messages are sent precisely after a chunk of tasks is performed.
After $T_1$ rounds of \aMDP, the algorithm continues its execution with \PartTwo.

\PartTwo\ begins with each processor~$v$ having only its list \texttt{Processors$_v$}  as 
the current approximation of the set of active processors.
The active processors need to agree on a set of processors that is simultaneously
a subset of the processors that finish \PartOne\ as active, and a superset of the set of 
processors that are still active when \PartThree\ begins; 
we call such a set of processors \emph{checkpoint}, following \cite{PriscoMY94}.
Finding a checkpoint is the goal of \PartTwo, which is accomplished as follows.

At the start of \PartTwo, every processor~$v$ that has not halted sends a message 
to each processor in its list \texttt{Processors$_v$}, including itself.
We refer to this round, and the messages circulated in it, as \emph{preparatory}.
Upon receiving the messages sent in the preparatory round, 
processor~$v$ removes from list \texttt{Processors$_v$} the identifiers of the processors 
from whom preparatory messages did not arrive.
Each processor~$v$ creates a new private list \texttt{Coordinators$_v$} and initializes 
it to \texttt{Processors$_v$} once the preparatory messages have been processed.
Each private copy of \texttt{Coordinators} is ordered in the increasing order of processor identifiers.
The next $4(p-f_1)$ rounds are structured in groups of four rounds; 
we call each such group the \emph{checkpointing phase}.
The pseudocode for the preparatory round and a checkpointing phase is given in Figure~\ref{fig:PartTwo}.


\begin{figure}[thp]
\rule{\textwidth}{0.75pt}
\begin{center}
\begin{minipage}{\pagewidth}

\begin{description}
\item[\sc Preparatory Round:]
Each processor $v$ sends a preparatory message to all processors in its list
\texttt{Processors}$_v$, including itself.

Upon receiving preparatory messages, processor $v$ removes from 
\texttt{Processors}$_v$ the identifiers of processors from whom
such messages are not received. The list \texttt{Coordinators}$_v$
is set to \texttt{Processors}$_v$, sorted in the order
of  identifiers.
\end{description}

\rule{\textwidth}{0.75pt}

~

{\sc Checkpointing Phase}

\begin{description}
\item[\sc Round 1:]
If $v$'s name occurs as the first entry in \texttt{Coordinators}$_v$ then 
$v$ sends a message to each processor in
 \texttt{Processors}$_v$, including itself, proposing to become coordinator.

\item[\sc Round 2:]

If $v$ does not receive coordinator proposal messages from Round~1, then this Round~2 is void.

Otherwise, let $v_1, \ldots v_k$ be the identifiers of the processors from which $v$ received coordinator proposals, sorted according to the identifiers, i.e., $v_1< \ldots <v_k$.
First, processor~$v$ responds to~$v_1$ by sending its  \texttt{Processors}$_v$ list.
Next, processor~$v$ adds any identifiers from $\{v_1, \ldots v_k\}$ that are not already in \texttt{Coordinators}$_v$ to \texttt{Coordinators}$_v$, maintaining the sorted order of entries.

If $v$ proposed itself in Round~1 as the coordinator, but $v$ is not the smallest identifier in \texttt{Coordinators}$_v$, then $v$ no longer considers itself coordinator for the current phase.

\item[\sc Round 3:]

If $v$ still considers itself coordinator for the phase then, upon receiving response messages sent in Round~2, 
$v$ removes from its  \texttt{Processors}$_v$ list the identifiers that are missing from any list \texttt{Processors}$_w$ just received.
Next $v$ sends its updated  \texttt{Processors}$_v$ list to each processor in \texttt{Processors}$_v$.

\item[\sc Round 4:]

Upon receiving list \texttt{Processors}$_w$ in Round~3 from some coordinator~$w$, 
processor~$v$ replaces its  \texttt{Processors}$_v$ list
with list \texttt{Processors}$_w$.

Processor~$v$ then removes the first entry in its  \texttt{Coordinators}$_v$ list.
If there are still some identifiers in \texttt{Coordinators}$_v$ that are smaller than 
both $v$ and the identifier of the processor that served as coordinator in Round~3, 
then $v$ removes these identifiers from \texttt{Coordinators}$_v$ as well. 
\end{description}
\end{minipage}

\FFF

\rule{\textwidth}{0.75pt}

\parbox{\captionwidth}{\caption{\label{fig:PartTwo}
Preparatory round and Checkpointing phase of \PartTwo\ of algorithm \aEP; the code for  processor~$v$.}}
\end{center}
\end{figure}

When a checkpointing phase begins, each processor~$v$ that is ranked first in its list \texttt{Coordinators}$_v$ broadcasts its proposal to serve as the phase coordinator to each processor in \texttt{Processors}$_v$.
It may happen that multiple processors send such messages in Round~1 of the phase, 
while we want to have only one coordinator.
If processor~$v_1$ proposes itself as coordinator in Round~1, but receives a proposal from processor~$v_2$, such that $v_1>v_2$, it does not consider itself coordinator for this phase;
otherwise $v_1$ stays as the coordinator for the whole phase.
A processor that acts as a coordinator through the whole phase is called the \emph{successful coordinator} for this phase.

In Round~2 of a phase, any processor~$v$ adds the identifiers of all proposers 
in the previous round to its list \texttt{Coordinators}$_v$, in case these identifiers are missing, and responds to the successful coordinator by sending a copy of its list \texttt{Processors}$_v$.

In Round~3, the successful coordinator~$v$ removes from \texttt{Processors}$_v$
any identifiers that are missing from any list \texttt{Processors} just received from other processors.
In case the successful coordinator~$v$ does not receive a message from some processor~$w$ whose name occurs in \texttt{Processors}$_v$ in the beginning of Round~3, processor $v$ does not remove $w$ from \texttt{Processors}$_v$ as long as $w$ occurs in each list \texttt{Processors}  received by~$v$ in this round.
This is done by the coordinator to ``stabilize'' the lists \texttt{Processors} to the eventual checkpoint, so that the ``stabilized'' lists are not modified even when additional crashes occur.
The successful coordinator~$v$ disseminates its updated list \texttt{Processors}$_v$ at the end of 
Round~3 among all processors with identifiers in \texttt{Processors}$_v$.

In Round~4, each recipient of a copy of the list \texttt{Processors} from the coordinator 
replaces its list \texttt{Processors} with this copy.
Next, each processor deletes the first entry on its list \texttt{Coordinators}.
The round may include additional deletions from these lists as follows.
It may happen that processor~$v$, that has not acted as a successful coordinator yet, 
receives in the first round of the phase a message from a would-be coordinator~$w$ such that $w>v$.
This indicates a discrepancy between the lists \texttt{Coordinators}$_v$ and \texttt{Coordinators}$_w$; 
in particular, processor~$w$ has already removed $v$ from \texttt{Coordinators}$_w$, 
while $w$ is still listed behind~$v$ in \texttt{Coordinators}$_v$. 
In this situation, processor~$v$ removes any identifier in \texttt{Coordinators}$_v$ ahead of its own identifier~$v$ to become the first on the list, subsequently proposing itself 
as the coordinator in the next phase.
The rationale for this is to help converge the lists \texttt{Coordinators}, while
deleting from them the identifiers of crashed processors.
This works because any active processor~$y$, that is ahead of $v$ in Round~1,
and whose identifier is deleted by~$v$ in Round~4 of this phase, 
will propose itself as coordinator along with~$v$ in Round~1 of the next phase, 
so both identifiers $v$ and~$y$ will be restored in the lists \texttt{Coordinators} 
in Round~2 of the next phase.

After  \PartTwo\ concludes, the processors proceed to  \PartThree, 
where algorithm \aDMY\  is invoked (given in~\cite{PriscoMY94}).
In \PartThree\ the algorithm reverts to using individual tasks instead of chunks of tasks.
Processor~$v$ cooperates only with the processors whose identifiers are in the list \texttt{Processors}$_v$  at the conclusion of  \PartTwo.
However, instead of using the original processor identifiers, each processor~$v$ refers to any active processor~$w$ in the list \texttt{Processors}$_v$ by $w$'s rank in the list.  
The lists \texttt{Processors} are not used by algorithm \aDMY; these lists only serve to
provide new processor identifiers.
Finally, the lists of tasks are reverted to their original contents of $t$ tasks.
This is sufficient to obtain the desired complexity bounds and  simplifies the analysis.
\nopagebreak
 
This concludes the specification of algorithm  \aEP.

\subsection{Correctness of algorithm \aEP{}}

We now address the correctness of the algorithm.
We first show that the set of processor identifiers in at least one list \texttt{Coordinators} 
shrinks in each checkpointing phase of \PartTwo.


\begin{lemma}
\label{lem:shrinking-coordinators}

The least identifier that occurs in some list \texttt{Coordinators} in  Round~1 of a phase is removed from all lists where it is present in Round~4 of the phase.
\end{lemma}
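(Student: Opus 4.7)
The plan is to track the head of each list \texttt{Coordinators}$_u$ through the four rounds of the phase and to show that, wherever $v^{*}$ (the least identifier present in some list in Round~1) appears at the start of Round~4, it sits at the front of that list; the Round~4 action ``remove the first entry of \texttt{Coordinators}$_v$'' then deletes it.

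First I would establish the invariant that, at every moment from Round~1 up to the beginning of Round~4, $v^{*}$ is the smallest element of every list \texttt{Coordinators}$_u$ in which it appears, and hence (since these lists are kept in sorted order) occupies the head of that list. Three observations suffice. (i)~By the definition of $v^{*}$, no smaller identifier occurs in any list at Round~1. (ii)~The lists \texttt{Coordinators} are modified only in Round~2, where identifiers of Round~1 proposers are inserted in sorted order; any such proposer $v_i$ is by definition the first entry of its own list \texttt{Coordinators}$_{v_i}$ at the start of the phase, and therefore satisfies $v_i \ge v^{*}$. Hence no insertion can displace $v^{*}$ from the head; if $v^{*}$ is itself one of the inserted proposers and reaches a new list \texttt{Coordinators}$_w$ for the first time in Round~2, the same sortedness argument still places it at the front. (iii)~Nothing in Round~3 or in the Round~4 steps preceding ``remove the first entry of \texttt{Coordinators}$_v$'' modifies these lists.

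Given the invariant, the conclusion is immediate: every still-active processor executes the first-entry deletion on \texttt{Coordinators}$_v$ in Round~4, and this deletion removes $v^{*}$ from every list in which it was still present. Processors that crash before Round~4 hold no surviving list, so the statement holds vacuously for them. The main subtlety I expect is confirming that the Round~4 first-entry removal is indeed performed by every active processor independently of whether it received the successful coordinator's \texttt{Processors} message in Round~3; under the alternative reading of the pseudocode one would need the auxiliary argument that every $u$ with $v^{*}\in$ \texttt{Coordinators}$_u$ does in fact hear from the successful coordinator of the phase, which would draw on the preparatory round and on the fact that the successful coordinator, being the smallest surviving proposer, lies in \texttt{Processors}$_u$ for every such $u$.
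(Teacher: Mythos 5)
Your proof is correct and takes essentially the same approach as the paper: both hinge on the observation that every Round-1 proposer has identifier at least $v^{*}$, so no Round-2 insertion displaces $v^{*}$ from the head of a sorted \texttt{Coordinators} list, and the Round-4 first-entry removal then deletes it wherever it remains. The paper organizes this as a three-way case split (no proposer; proposers but not $v^{*}$; $v^{*}$ itself proposes) while you fold the same facts into a single invariant, but the mathematical content is identical.
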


\begin{proof}
Consider a phase, and let $v$ be the least identifier occurring in any list \texttt{Coordinators} 
in the beginning of the phase; thus $v$ is the first entry in any list  
\texttt{Coordinators} it occurs in.
Suppose first that no processor announces itself coordinator in Round~1 of the phase.
Then no new entries are added to the lists during the phase and the first entries are removed 
from each list \texttt{Coordinators}; this eliminates~$v$ from any list it occurs in.
Now consider the case when some processors announce themselves coordinators in Round~1 of the phase.
If $v$ is not among them, then the proposed names of coordinators are greater than $v$, 
so no entry is added in Round~2 of the phase as new first entry to any list in which $v$ occurs.
Therefore all occurrences of $v$ remain first in the respective lists, 
and so they are removed in Round~4.
If, on the other hand, $v$ announces itself coordinator, then $v$ is added in Round~2 
as the new first entry to any list \texttt{Coordinators} where $v$ does not occur,  
only to be removed in Round~4 of the phase.
\end{proof}


\begin{lemma}
\label{lem:coordination-correct}

If there are fewer than $p-f_1$ active processors at the preparatory round of \PartTwo,  
then the lists \texttt{Processors} of all active processors become identical  
at some checkpointing phase in \PartTwo, and remain invariant through the remaining checkpointing phases.
\end{lemma}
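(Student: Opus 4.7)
The plan is to establish convergence of the active processors' \texttt{Processors} lists within the budget of $p-f_1$ checkpointing phases that Part-Two allots ($4(p-f_1)$ rounds divided among groups of four). The argument decomposes into a counting step for the coordinator-selection mechanism and a monotonicity step for the \texttt{Processors} lists themselves.

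First, I would argue that the union $U = \bigcup_{v\,\text{active}} \texttt{Coordinators}_v$ is monotonically non-increasing and strictly decreases in every phase until empty. The key observation is that an identifier $x$ can be inserted into some list \texttt{Coordinators}$_v$ only during Round~2, and only if $x$ proposed itself as coordinator in Round~1; such self-proposal by $x$ requires $x$ to be the first entry of \texttt{Coordinators}$_x$, so $x$ was already in $U$. Hence once an identifier leaves $U$ it never returns, and Lemma~\ref{lem:shrinking-coordinators} guarantees one permanent removal per phase. Since preparatory messages come only from active processors and the hypothesis gives fewer than $p-f_1$ of them, $|U| < p-f_1$ initially, so $U$ empties within $p-f_1$ phases.

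Second, I would show that the \texttt{Processors} lists of active processors converge to a common list within those same phases. The critical ingredient is the stabilization rule in Round~3: the successful coordinator~$w$ removes an identifier $x$ from \texttt{Processors}$_w$ only if $x$ is missing from some Round-2 response, and otherwise preserves $x$ even when $w$ received no direct preparatory or coordinator-response message from~$x$. Consequently the list $w$ broadcasts in Round~3, which every recipient adopts in Round~4, is the intersection (relative to the prior \texttt{Processors}$_w$) of the lists held by the active responders. I would introduce a potential such as the symmetric difference $\sum_{u,v\,\text{active}} |\texttt{Processors}_u \triangle \texttt{Processors}_v|$ and show that it is non-increasing phase-over-phase and strictly decreases whenever a successful coordinator still sees a discrepancy, exploiting that the rotating coordinator (driven by the shrinking of~$U$) will eventually expose every discrepancy to the stabilization rule. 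When $U$ becomes empty, all discrepancies have been absorbed, so the surviving active processors agree.

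Third, I would establish the invariance claim. Once all active processors share a common \texttt{Processors} list~$L$, any subsequent phase's coordinator receives from each responder a list equal to~$L$; by the stabilization rule, no identifier of~$L$ is removed, and Round~4 overwrites \texttt{Processors}$_v$ with $L$ again, so the list remains $L$ even if further crashes occur inside~$L$. The main obstacle I anticipate is the second step: tracking the evolution of \texttt{Processors} lists under arbitrary adversarial crash patterns and confirming that the stabilization rule propagates convergence fast enough despite successive successful coordinators possibly holding partially different information. Handling this likely requires a careful inductive argument tying the rank of each successful coordinator (in the shrinking union~$U$) to the prefix of the eventually-agreed list that has already stabilized across all active processors.
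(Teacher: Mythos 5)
Your first step (the union $U$ of the \texttt{Coordinators} lists shrinks by at least one identifier per phase, and a departed identifier can never re-enter $U$) is correct and is essentially the same counting argument as in the paper, and your third step (once a common list is installed it is preserved by the Round~3 stabilization rule) likewise matches the paper's final sentence. The gap is in your second step, and it is a real one.

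You propose to show convergence gradually, via a potential $\sum_{u,v}|\texttt{Processors}_u \triangle \texttt{Processors}_v|$ that decreases over successive coordinator phases, and you acknowledge at the end that the argument would need ``a careful inductive argument'' that you have not supplied. That inductive argument is exactly the hard part, and the picture it suggests --- that many coordinator rotations are needed and each one chips away at the discrepancies --- is not what actually happens. The paper's mechanism is sharper and all-at-once. The crucial missing inference is the following: from the counting you already have, some surviving processor $v$ is eventually removed from its own list $\texttt{Coordinators}_v$ within the $p-f_1$ phase budget, and this can only happen if $v$ was first on that list in Round~1 (nothing is removed in Rounds 1--3) and remained first in Round~4 (no smaller proposer was added in Round~2), i.e., $v$ was the \emph{unique successful coordinator} of that phase. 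In that phase $v$ broadcasts $\texttt{Processors}_v$ to every processor on it, and every active recipient \emph{replaces} its own list with $v$'s in Round~4. Convergence to a single list therefore happens in the first phase that has a successful coordinator; no potential function or gradual contraction is needed, and the stabilization rule then keeps that common list invariant even as further crashes occur among the listed identifiers. Without this ``$v$ removed from $\texttt{Coordinators}_v$ $\Rightarrow$ $v$ was the unique successful coordinator $\Rightarrow$ everyone adopts $\texttt{Processors}_v$'' chain, your proof does not close: the emptiness of $U$ on its own does not tell you that the \texttt{Processors} lists ever synchronized, since $U$ can shrink by removing identifiers of crashed processors in phases with no coordinator at all.
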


\begin{proof}
We deal only with the processors that are active at the preparatory round of \PartTwo.
Let us assume that the list \texttt{Coordinators}$_v$ of any such processor~$v$ is modified  in Round~4 of each phase according to the specification in Figure~\ref{fig:PartTwo},  even after processor~$v$ crashes.
These modifications do not affect the course of execution, as processors do not send any messages  after crashing, but this simplifies the structure and exposition of the argument.

Let $v$ be a processor that stays non-faulty through the end of \PartTwo.
We claim that there is a phase in which $v$ acts as a successful coordinator.
To show this, observe the following three facts:
\begin{enumerate}
\item[(1)] the set of identifiers occurring in lists \texttt{Coordinators} shrinks with 
each checkpointing phase per Lemma~\ref{lem:shrinking-coordinators};
\item[(2)] there are fewer than $p-f_1$ identifiers in all lists \texttt{Coordinators} 
when checkpointing phases begin, by the assumption;
\item[(3)] there are $p-f_1$ checkpointing phases in total, by the code of  \PartTwo.  
\end{enumerate}
Combining these three facts  we get that eventually a phase~$P$ occurs such that the identifier of processor $v$ is removed from its own list \texttt{Coordinators}$_v$ in Round~4 of~$P$.
If $v$ is first on its list \texttt{Coordinators}$_v$ in Round~4 of~$P$ then $v$ was also first in Round~1 of~$P$, since no entries are removed from lists \texttt{Coordinators} in the first three rounds of a phase.
Therefore, processor~$v$ announces itself to be coordinator in Round~1 of phase~$P$.
Processor~$v$ is least among all processors that announce themselves coordinators for phase~$P$, as otherwise a smaller identifier would be added to \texttt{Coordinators}$_v$ in Round~2 of phase~$P$, and so $v$ would not be removed from \texttt{Coordinators}$_v$ in Round~4 of~$P$.
Therefore $v$ stays as coordinator for the entire phase~$P$.

Thus there exists a phase in which some processor acts as a successful coordinator.
Consider the first such phase, and let processor $v$ be the successful coordinator in this phase.
Every active processor~$w$ receives list \texttt{Processors}$_v$ from~$v$ in Round~4 of the phase.
Now, $w$ adopts the obtained list as its new list \texttt{Processors}$_w$, per Round~4 specification  (see Figure~\ref{fig:PartTwo}).
From this round on, all lists \texttt{Processors} of active processors stay invariant through the last checkpointing phase, as no element is ever removed from them, per Rounds~3 and~4 specification (see Figure~\ref{fig:PartTwo}).
\end{proof}


\begin{lemma}
\label{lem:correctness-effort-priority}

For any $f_1$ and $T_1$, such that \PartOne\ completes all tasks in  $T_1$ rounds 
when at most $f_1<p$ processors crash, algorithm \aEP\ that uses $f_1$ and $T_1$ solves the \DA\ problem against  the unbounded adversary.
\end{lemma}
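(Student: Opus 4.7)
The plan is to verify the two required conditions — that every task is performed by some processor and that every non-crashed processor eventually halts — by case analysis on the number of crashes that occur during \PartOne{}.

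First I would handle the case in which at most $f_1$ processors crash during \PartOne{}. The hypothesis of the lemma then guarantees that all $t$ tasks are performed by the end of \PartOne{}, so the task-completion condition holds irrespective of what happens in \PartTwo{} and \PartThree{}. For termination, I would observe that \PartTwo{} has a fixed duration of $1+4(p-f_1)$ rounds independent of any convergence of the lists \texttt{Processors}, and that \PartThree{} runs \aDMY{}, whose work $\cO(p(f+1)+t)$ (Fact~\ref{fact:DMY}) immediately yields a bound on its running time; thus every non-crashed processor traverses all three parts within a bounded number of rounds and halts. Any (redundant) task re-executions performed by \aDMY{} in \PartThree{} do not affect correctness.

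Second I would treat the case in which more than $f_1$ processors crash during \PartOne{}. Then at the preparatory round of \PartTwo{} strictly fewer than $p-f_1$ processors are active, so the hypothesis of Lemma~\ref{lem:coordination-correct} is met, and the lists \texttt{Processors} of all active processors converge to a common checkpoint list that remains invariant through the end of \PartTwo{}. Consequently, the renaming at the start of \PartThree{} (by rank within this common list) assigns consistent identifiers to all participating processors, and the tasks lists are reset to the original $t$ tasks. Algorithm \aDMY{} is then invoked on a properly initialized input; by Fact~\ref{fact:DMY} it completes every task and every non-crashed processor terminates. Since the unbounded adversary leaves at least one processor non-faulty, that processor witnesses completion of all tasks and halts.

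The main obstacle is ensuring that the argument covers Case~1 even though Lemma~\ref{lem:coordination-correct} cannot be invoked there: the critical observation is that our goal in Case~1 reduces to pure termination, which follows from the a~priori round bounds on \PartTwo{} and on the \aDMY{} invocation, not from any successful agreement in \PartTwo{}. Combining the two cases establishes both the task-completion and the halting conditions under the unbounded adversary, which is exactly what is required.
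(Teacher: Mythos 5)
Your Case 2 (more than $f_1$ crashes during \PartOne) essentially matches the paper's argument: Lemma~\ref{lem:coordination-correct} gives a common \texttt{Processors} list, which produces consistent new identifiers for \aDMY, the task lists are reset to the original $t$ tasks, and Fact~\ref{fact:DMY} then finishes the job. One detail you skip that the paper spells out here (rather than inside Lemma~\ref{lem:coordination-correct}) is that at most one processor can act as successful coordinator in Round~3 of any checkpointing phase; this is what makes the dissemination of a \emph{single} list in Round~4 well-defined, so it is worth including explicitly.

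The real problem is in Case 1. You claim that when at most $f_1$ processors crash, termination still follows because \PartTwo{} has fixed length and ``the a~priori round bounds on \dots\ the \aDMY{} invocation'' bound \PartThree. But there is no a~priori round bound on \aDMY{} in this situation: Fact~\ref{fact:DMY} gives work $\cO(p(f+1)+t)$ only when \aDMY{} is invoked with a consistently renamed set of processors with consecutive identifiers, and in Case 1 there are $\ge p-f_1$ processors at the preparatory round, so the hypothesis of Lemma~\ref{lem:coordination-correct} fails and nothing guarantees the renaming is consistent. With inconsistent identifiers you cannot appeal to Fact~\ref{fact:DMY} at all, so the ``pure termination'' escape route is unsupported. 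The paper avoids this entirely: under its reading of the hypothesis, $T_1$ is chosen (cf.\ Lemma~\ref{lem:justification-of-T1}) so that when at most $f_1$ processors crash, \emph{all non-crashed processors halt in \PartOne} and \PartTwo/\PartThree{} are never executed. That makes your Case 1 trivially correct without invoking any properties of \aDMY, and it is the observation you are missing.
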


\begin{proof}
The specification of Round~4 in Figure~\ref{fig:PartTwo}  refers to receiving a message from ``some coordinator,'' allowing for the possibility of multiple coordinators.
We show that there is at most one coordinator in Round~3 of any checkpointing phase 
in any execution of \PartTwo.
Suppose that there are two such coordinating processors with identifiers~$v$ and~$w$, and let $v<w$.
These two processors broadcast in Round~1 of the phase.
This results in processor $w$ restoring identifier $v$ in its list \texttt{Coordinators}$_w$ 
ahead of its own identifier~$w$, thereby losing the status of coordinator 
for the current phase, per Round~2 of Figure~\ref{fig:PartTwo}.

Next, we must ensure that the identifiers of processors and tasks are consistent among participating processors at the invocation of algorithm \aDMY.
This algorithm is invoked only when more than~$f_1$ processors crash while executing \PartOne, by the selection of~$T_1$.
Therefore when the checkpointing phases are over, the lists \texttt{Processors} 
are identical among the active processors per Lemma~\ref{lem:coordination-correct}.
This ensures that each active processor has the same rank in each such sorted list,
providing consistent new identifiers to all processors that execute algorithm \aDMY.

The lists of tasks are restored to their original form of $t$ entries at each processor just before algorithm \aDMY\ is invoked; this provides consistent referencing of the tasks  by the processors executing algorithm \aDMY.
Since algorithm \aDMY\ performs all tasks (Fact~\ref{fact:DMY}) in \PartThree, algorithm \aEP\ solves the \DA\ problem as required.
\end{proof}

\subsection{Specifying algorithm parameters}

\label{sec:justifying-initialization}

Next we specify the numeric parameters that are present in the code of algorithm \aEP. 
The idea is to select the parameter $f_1$ and the corresponding~$T_1$ so as to make the work of \PartOne\ approximately equal to the work of \PartThree.
Let $\rho = 27/2$  and~$\Delta_0=74$ be the constants used in the specification of the overlay graphs in Section~\ref{sec:graphs}. 
Choose $a$ such that $0<a<1$ and 
\begin{equation}
\label{eq:a-gives-work}
a<\frac{1}{1+2\log_\rho \Delta_0}\ .
\end{equation}
We set $f_1=p-p^{1-a}$ as the bound on the number of crashes. 
From this we get  $p/(p-f_1)= p^a$.
By Lemma~\ref{lem:degree-power-of-Lp}, the degree of the corresponding overlay graph~$L(p)^\ell$ satisfies
\begin{equation}
\label{eqn:Delta-for-Mod-Perm}
\Delta_1
=
\cO\Bigl(\Bigl(\frac{p}{p-f_1}\Bigr)^{2\log_\rho \Delta_0}\Bigr)
=
\cO(p^{2a\log_\rho \Delta_0})\ .
\end{equation}
We instantiate $T_1$ to be sufficiently large for Lemma~\ref{lem:correctness-effort-priority} 
to be applicable, but such that
\begin{equation}
\label{eqn:T_1}
T_1=\cO\Bigl(\frac{t+p}{p-f_1} + \Delta_1\log^2 p\Bigr)\ .
\end{equation}
We next show that it is possible to set $T_1$ so that bound~\eqref{eqn:T_1} is satisfied.
This is ultimately accomplished in Lemma~\ref{lem:justification-of-T1}.

The first step is computing the time needed to perform tasks when the key extended epochs are productive.
To this end, we revisit the proof of Lemma~\ref{lem:key-epochs-productive} and verify its time aspects.
Recall the following definitions given in Section~\ref{sec:extended-epochs}.
The integers $j$ that are indices of main extended epochs~$\cD_j$ make a contiguous interval~$\cJ$ that is represented in the form $\cJ= J_1\cup J_2\cup J_3\cup J_4\cup J_5$.
Set $J_1$ consists of indices $j$ of stormy extended epochs~$\cD_j$, set $J_2$  consists of these $j\in \cJ$  for which $u_{\ell(j-1)} \ge 11p^2 g(p)$, set $J_3$ consists of the index of the extended epoch $\cD_j$ such that $u_{m(j)} \ge 11p^2g(p)$ and $u_{l(j)} < 11p^2g(p)$, set $J_4$ consists of these $j \in \cJ \setminus (J_1\cup J_2\cup J_3)$ for which the (key) extended epoch~$\cD_j$ is task-rich, and set $J_5$ consists of these $j\in \cJ \setminus (J_1\cup J_2\cup J_3)$ for which the (key) extended epoch~$\cD_j$ is task-poor.


\begin{lemma}
\label{lem:from-productivity-to-time}

For any known $f<p$, all  processors executing the generic algorithm halt after performing $\cO\bigl(\frac{t}{p-f}  +  \log^2 p\bigr)$  rounds during the main extended epochs~$\cD_j$, if only all key extended epochs among them are productive.
\end{lemma}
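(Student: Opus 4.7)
The plan is to mirror the case analysis from the proof of Lemma~\ref{lem:key-epochs-productive}, but to track rounds rather than core work. The key observation is that each regular epoch contains $g(p)=\cO(\log p)$ rounds and, within it, every core processor performs exactly one task selection per round; hence the core number satisfies $c_j=(l(j)-m(j)+1)\,|Z_j|\,g(p)$ with $|Z_j|\ge (p-f)/7$ by Lemma~\ref{lem:subgraph-property}. Dividing a bound on core work by $|Z_j|$ therefore yields a bound on the number of rounds, and this is the accounting transformation I would apply throughout.

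First I would reuse the decomposition $\cJ=J_1\cup J_2\cup J_3\cup J_4\cup J_5$ introduced in Section~\ref{sec:extended-epochs}. The stormy set $J_1$ contributes $\cO(\log p)$ regular epochs (by Lemma~\ref{lem:stormy-extended-epoch} together with the standard halving argument on $|K_{m(j)}|$), hence $\cO(\log^2 p)$ rounds. For $J_2$, Lemma~\ref{lem:no-overlap-in-tasks} gives $u_{i-1}-u_i \ge |H_i|g(p) \ge \tfrac{p-f}{7}g(p)$ per regular epoch, which telescopes to $\cO\bigl(\tfrac{t}{(p-f)g(p)}\bigr)$ regular epochs and so $\cO\bigl(\tfrac{t}{p-f}\bigr)$ rounds.

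Next I would handle the two key-epoch classes. For $j\in J_4$, productivity combined with \eqref{eqn:many-tasks-extended} gives $u_{l(j-1)}-u_{m(j+1)}\ge c_j/4$; together with $c_j\ge(l(j)-m(j)+1)\tfrac{p-f}{7}g(p)$ this implies that each round inside $\cD_j$ amortizes to a decrement of $u$ of at least $(p-f)/28$, and telescoping over $J_4$ contributes $\cO\bigl(\tfrac{t}{p-f}\bigr)$ rounds. For $j\in J_5$, each extended epoch is a single regular epoch by Lemma~\ref{lem:few-tasks-extended-epoch} and there are only $\cO(\log p)$ of them by Lemma~\ref{lem:J-4}, contributing $\cO(\log^2 p)$ rounds. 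The at-most-one index in $J_3$ I would split at the regular epoch where $u$ first drops below $11p^2g(p)$, bounding the prefix as in $J_2$ and the suffix as in $J_4$ or $J_5$; this adds $\cO\bigl(\tfrac{t}{p-f}+\log^2 p\bigr)$ rounds. Summing over the five classes yields the claimed bound.

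The main obstacle will be verifying that the core-work-to-rounds conversion is legitimate inside each main extended epoch. This requires that the core graph $Z_j$ stays intact and that all of its $|Z_j|$ processors contribute a selection per round; both facts follow from the fact that no core processor halts during a main extended epoch (the same observation that underlies Lemma~\ref{lem:graphs-Z-suffice}), so the lock-step accounting holds throughout $\cD_j$ and the division by $|Z_j|$ in each case is exact up to the constant factor coming from $|Z_j|\ge (p-f)/7$.
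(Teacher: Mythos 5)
Your proposal is correct and follows essentially the same route as the paper's proof: the same decomposition $\cJ=J_1\cup\cdots\cup J_5$, the same appeals to Lemmas~\ref{lem:no-overlap-in-tasks}, \ref{lem:stormy-extended-epoch}, \ref{lem:few-tasks-extended-epoch}, \ref{lem:J-4}, and \eqref{eqn:many-tasks-extended}, and the same handling of the singleton $J_3$ by splitting it into a prefix and a suffix. Your explicit "core-work-to-rounds" accounting via $c_j=(l(j)-m(j)+1)|Z_j|g(p)$ with $|Z_j|\ge(p-f)/7$ is exactly the conversion the paper's proof uses implicitly, and your closing remark that $Z_j$ stays intact during a main extended epoch (because core processors cannot halt while executing \texttt{Main}) is the correct justification for it.
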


\begin{proof}
We need to sum up the durations of~$\cD_j$, for~$j\in \cJ$.
We estimate this sum by considering the indices of the extended epochs that fall into the four subsets $J_1$, $J_2$, $J_3$, $J_4$, and $J_5$.

First we consider the set~$J_1$.
There are $\cO(\log p)$ extended epochs with indices in~$J_1$ because each of them consists of a single regular stormy epoch per Lemma~\ref{lem:stormy-extended-epoch}.
So their contribution to the total time is $\cO(\log^2 p)$. 
Next consider extended epochs~$\cD_j$ for $j\in J_2$.
The time spent in the extended epochs whose regular epochs~$\cE_i$ satisfy $u_i \ge 11p^2g(p)$ 
is $\cO(\frac{t}{p-f})$ by Lemma~\ref{lem:no-overlap-in-tasks}.

The key extended epochs~$\cD_j$ have their indices~$j$ in~$J_4\cup J_5$.
We assume that each of them is productive.
For the extended epochs~$\cD_j$ such that $j\in J_4$ we use the estimate \eqref{eqn:many-tasks}
that is derived in the proof of Lemma~\ref{lem:key-epochs-productive} (see Section~\ref{sec:extended-epochs}).
It follows that the decrease in the number of tasks in two consecutive extended epochs is proportional to the work expended by the core processors, this work being~$c_j$, and so this takes $\cO(\frac{t}{p-f})$ time. 
There are  $\cO(\log p)$ extended epochs $\cD_j$ for $j\in J_5$, by~Lemma~\ref{lem:J-4}, 
since each such $\cD_j$ is productive.
Each such extended epoch $\cD_j$ for $j\in J_5$ consists of a single regular epoch
by Lemma~\ref{lem:few-tasks-extended-epoch}.
The time contribution of such extended epochs is~$\cO(\log^2 p)$.

There is at most one extended epoch~$\cD_j$ such that $u_{m(j)} \ge 11p^2g(p)$ and $u_{l(j)} < 11p^2g(p)$; if it exists then $J_3=\{j\}$.
Observe that the duration of this extended epoch is at most the time of all the extended epochs whose indices are in $J_1\cup J_2\cup J_4\cup J_5$.
This means that it suffices to double the time derived for the extended epochs whose indices are in $J_1\cup J_2\cup J_4\cup J_5$ to cover the case of this extended epoch.

The time of the main extended epochs~$\cD_j$ such that $j\in J_1\cup J_2\cup J_4\cup J_5$ is $\cO\bigl(\frac{t}{p-f}  +  \log^2 p\bigr)$, if only all key extended epochs among them are productive.
It follows that $\cO\bigl(\frac{t}{p-f}  +  \log^2 p\bigr)$ is the bound on time of all the main extended epochs with this property.
\end{proof}

The next step is computing the time needed to perform the closing work.
To this end, we revisit the proof of Lemma~\ref{lem:closing-is-as-efficient-as-main} (see Section~\ref{sec:epoch-core-processors}) to extract time bounds.


\begin{lemma}
\label{lem:time-of-closing}

Suppose an instantiation of the generic algorithm has the property that if $p$ processors aim to perform $p$ tasks and up to $f$ of them may fail, for a known~$f<p$, then this is accomplished within $\cO(T(p,f))$ main rounds, for some function~$T(p,f)$.
Then the algorithm terminates after $\cO(T(p,f)+ \log^2 p)$ closing rounds.
\end{lemma}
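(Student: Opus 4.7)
The plan is to mirror the proof of Lemma~\ref{lem:closing-is-as-efficient-as-main}, substituting time-in-rounds for work-in-available-processor-steps throughout. I would decompose the number of closing rounds into three contributions: rounds in stormy epochs, rounds in the productive closing activity of the core processors, and rounds spent on final propagation and on the halting of non-core processors.

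First, I would account for stormy epochs. In any execution there are $\cO(\log p)$ stormy regular epochs, since by definition each halves the surviving population, and each regular epoch takes $g(p)=\cO(\log p)$ rounds. Thus stormy epochs contribute $\cO(\log^2 p)$ rounds, which is absorbed into the additive $\cO(\log^2 p)$ term of the claimed bound. From here on I would restrict attention to calm closing epochs.

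Next, I would invoke the reinterpretation used in Lemma~\ref{lem:closing-is-as-efficient-as-main}: from the first round in which no processor is still busy, the activity driven by procedure \texttt{Closing} can be viewed as an instance of the generic algorithm with at most $p$ processors and at most $p$ ``tasks'' (the \texttt{Stop} signal destined for processor $u$ playing the role of the task labeled $u$), running under an $f$-bounded adversary. By hypothesis, the main-phase rounds of such an instance number $\cO(T(p,f))$. As in Lemma~\ref{lem:closing-is-as-efficient-as-main}, I would, without loss of generality, select the core graph in each epoch to be the compact subgraph of $G(p,f)$ that stays compact longest; this preserves the applicability of Lemma~\ref{lem:core-compact-forever} and Lemma~\ref{lem:one-epoch-switch}, and ensures that once every core processor has halted, no other compact subgraph remains active. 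Lemma~\ref{lem:core-compact-forever} then guarantees that each core processor halts with an empty \texttt{Busy} list and forwards that empty list to its neighbors in the phase just before halting, and Lemma~\ref{lem:one-epoch-switch} shows that within one further epoch this empty-\texttt{Busy} information floods through the next epoch's core graph, causing those processors to halt as well. One more epoch suffices for all remaining (necessarily non-compact) processors to observe that they no longer consider themselves compact and halt accordingly. These two trailing epochs add $\cO(\log p)$ rounds, so the total is $\cO(\log^2 p)+\cO(T(p,f))+\cO(\log p)=\cO(T(p,f)+\log^2 p)$.

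The main obstacle I anticipate is verifying that time, unlike work, requires no analogue of the Lemma~\ref{lem:core-processors-suffice} constant-factor multiplier: because rounds are synchronous and shared across all processors, any bound on the number of rounds during which the core processors are still executing main-phase closing activity is \emph{directly} a bound on the number of rounds in which the algorithm has not yet terminated, modulo the constant number of additional propagation epochs isolated above. Hence the substitution from the work-flavored Lemma~\ref{lem:closing-is-as-efficient-as-main} to the present time-flavored statement goes through cleanly, without the $|K_i|/|H_i|$ inflation factor that appeared in the work analysis.
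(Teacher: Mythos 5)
Your proposal matches the paper's proof in structure and substance: decompose closing rounds into stormy-epoch rounds ($\cO(\log^2 p)$), core-processor closing activity reinterpreted as a $p$-processor, $p$-task instance ($\cO(T(p,f))$ rounds), and a constant number of trailing epochs for final propagation and halting of non-core processors ($\cO(\log p)$); and you use the same without-loss-of-generality device of taking the compact subgraph that stays compact longest to be the core. Your closing remark — that time, unlike work, needs no analogue of Lemma~\ref{lem:core-processors-suffice} because a round is shared by all processors — is exactly why the paper's proof of this lemma, unlike that of Lemma~\ref{lem:closing-is-as-efficient-as-main}, never invokes that multiplier; you identified this correctly. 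One small citation slip: you invoke Lemma~\ref{lem:one-epoch-switch} for the flooding of empty \texttt{Busy} lists among next-epoch core processors, but that lemma concerns the initial switch from \texttt{Main} to \texttt{Closing}; the relevant guarantee is Lemma~\ref{lem:core-compact-forever} together with the halting behavior in the pseudocode of Figure~\ref{fig:procedure-closing}, which the paper uses directly without a lemma citation at that point. This does not affect the validity of the argument.
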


\begin{proof}
There are $\cO(\log p)$ stormy regular epochs, so they contribute $\cO(\log^2 p)$ to time.
Next we consider only calm epochs.
By the definition of~$T(p,f)$, core processors perform~$\cO(T(p,f))$ closing rounds 
while each core processor~$v$ has a nonempty \texttt{Busy}$_v$ list.
When some processor~$v$, that is core for~$\cE_k$, obtains an empty list \texttt{Busy}$_v$ in~$\cE_k$, then $v$ sends copies of this empty list to its neighbors, and halts.
All  processors that are core for epoch~$\cE_{k+1}$ halt by the end of~$\cE_{k+1}$.
These two epochs, $\cE_k$ and~$\cE_{k+1}$, contribute $\cO(\log p)$ to time.
It follows that the core processors halt within $\cO(T(p,f)+ \log^2 p)$ closing rounds.

It may happen that after all core processors halt, there are still compact processors with nonempty \texttt{Busy} lists.
In this case we may carve compact graphs from the subgraph of $G(p,f)$ induced by these nodes.
The nodes of any such a graph terminate within time bounds we obtained for the core processors, because the notion of core processor is only conceptual, used for the sake of arguments, while no processor knows whether it is core or not.
Let us choose such a compact subgraph of $G(p,f)$ that stays compact longest; these could be core processors, without loss of generality.
The processors in this subgraph halt within $\cO(T(p,f)+ \log^2 p)$ closing rounds.
Afterwards, at most two epochs occur such that by their end no processor considers itself compact, 
thus all processors halt in these epochs.
This again contributes only $\cO(\log p)$ additional time.
\end{proof}

We revisit the proof of Theorem~\ref{thm:deterministic-permutations} to examine time performance, for the same instantiation that provides good work performance.


\begin{lemma}
\label{lem:time-deterministic-permutations}

For any known $f<p$, algorithm \aDP\ can be instantiated so that its work is $\cO(t+p\log^2 p)$, with all processors halting in time $\cO\bigl(\frac{t+p}{p-f}  +  \log^2 p\bigr)$.
\end{lemma}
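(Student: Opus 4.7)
The plan is to reuse the same instantiation of \aDP\ that provides the work bound of Theorem~\ref{thm:deterministic-permutations}, and to trace through the lemmas already developed for that theorem, but this time extracting time bounds rather than work bounds. The work bound $\cO(t + p\log^2 p)$ is immediate from Theorem~\ref{thm:deterministic-permutations}, so the entire effort goes into the time bound $\cO\bigl(\frac{t+p}{p-f}+\log^2 p\bigr)$.

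First I would recall that Lemma~\ref{lem:deterministic-work-f-chain-constrained-adversary} provides, via the probabilistic method applied to algorithm \aRP, an instantiation of \aDP\ such that, against the $f$-chain-constrained adversary, every key extended epoch in an execution is productive. This is precisely the hypothesis required by Lemma~\ref{lem:from-productivity-to-time}, which then yields that all processors complete the main extended epochs within $\cO\bigl(\frac{t}{p-f}+\log^2 p\bigr)$ rounds against the $f$-chain-constrained adversary.

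Next I would transfer this time bound from the $f$-chain-constrained adversary to the general $f$-bounded adversary by invoking Lemma~\ref{lem:approximation-by-f-chain-constrained adversary}. Recall that in its proof, for each execution $\cS$ against the general $f$-bounded adversary, a companion execution $\cT$ against an $f$-chain-constrained adversary is constructed in which crashes happen no later than in~$\cS$; since the set of non-faulty processors in each round of~$\cT$ is a subset of that in~$\cS$, any round in which all processors have halted in~$\cT$ is also such a round in~$\cS$. Thus the main-rounds time bound $D(p,f,t) = \cO\bigl(\frac{t}{p-f}+\log^2 p\bigr)$ transfers directly.

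For the closing phases I would apply Lemma~\ref{lem:time-of-closing}: substituting $T(p,f) = \cO\bigl(\frac{p}{p-f}+\log^2 p\bigr)$ (the main-rounds time bound obtained above, but with $p$ in place of $t$, since closing phases treat the task of sending \texttt{Stop} signals to at most $p$ processors analogously to performing $p$ tasks), this yields termination in $\cO\bigl(\frac{p}{p-f}+\log^2 p\bigr)$ closing rounds. Adding this to the main-phases time bound gives the target $\cO\bigl(\frac{t+p}{p-f}+\log^2 p\bigr)$, while the work bound from Theorem~\ref{thm:deterministic-permutations} is unchanged. The only subtle point, and the step that requires the most care, is the adversary-transfer argument: one must verify that the construction of~$\cT$ from~$\cS$ in Lemma~\ref{lem:approximation-by-f-chain-constrained adversary} actually preserves time (not just work up to a constant factor), but this is immediate from the subset property of non-faulty processors in each round.
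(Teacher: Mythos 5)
Your proof matches the paper's almost exactly: the paper likewise takes the instantiation from Lemma~\ref{lem:deterministic-work-f-chain-constrained-adversary}, invokes Lemma~\ref{lem:from-productivity-to-time} to obtain $\cO\bigl(\frac{t}{p-f}+\log^2 p\bigr)$ time in main extended epochs against the $f$-chain-constrained adversary, transfers this bound to the general $f$-bounded adversary via Lemma~\ref{lem:approximation-by-f-chain-constrained adversary}, and then applies Lemma~\ref{lem:time-of-closing} with $p$ substituted for $t$ to bound the closing rounds. The one point worth tightening is your phrasing of the adversary transfer: it is not that ``all processors halted in $\cT$ implies all processors halted in $\cS$'' as a pure subset observation, but rather, as the paper states, that because a superset of processors is non-faulty in $\cS$ in every round, progress on tasks (and on emptying \texttt{Busy} lists) is at least as fast, so termination in $\cS$ occurs no later than in $\cT$; the conclusion is the same, but the mechanism is faster progress rather than a direct containment of halted sets.
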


\begin{proof}
We take an instantiation of the algorithm as provided by Lemma~\ref{lem:deterministic-work-f-chain-constrained-adversary}.
The bound on work  given in Theorem~\ref{thm:deterministic-permutations} applies.
Next we consider  time complexity.
The algorithm instantiation has the property that all key extended epochs in any execution against the $f$-chain-constrained adversary are productive.
This lets us use Lemma~\ref{lem:from-productivity-to-time} that gives 
$\cO\bigl(\frac{t}{p-f}  +  \log^2 p\bigr)$ as the time spent by such an instantiation of algorithm \aDP{} during main extended epochs  against the $f$-chain-constrained adversary.
By Lemma~\ref{lem:approximation-by-f-chain-constrained adversary}, it takes $\cO\bigl(\frac{t}{p-f}  +  \log^2 p\bigr)$ time for this instantiation to complete the main epochs against the general $f$-bounded adversary.
We use Lemma~\ref{lem:time-of-closing} to estimate the number of closing rounds.
This contribution is obtained by adding $\cO(\log^2 p)$ to the amount obtained by substituting~$p$ for~$t$ in the estimate $\cO\bigl(\frac{t}{p-f}  +  \log^2 p\bigr)$ that we have  obtained for the main rounds.
These two contributions together make up $\cO\bigl(\frac{p}{p-f}  +  \log^2 p\bigr)$ closing rounds.
\end{proof}

The complexity bounds of algorithm \aMDP\ can be obtained by resorting to the bounds already derived for algorithm \aDP.
To this end we apply the following \emph{scaling}.
We substitute the number of chunks for the number of tasks and modify the time scale by 
extending the time of each round by the time sufficient to perform one chunk. 
Observe first that the number of chunks is $\cO(t/\Delta_1)$ and a chunk is performed in~$\Delta_1$ rounds. 
This is used in the proof of the next  lemma.


\begin{lemma}
\label{lem:justification-of-T1}

Algorithm \aEP\ can be initialized with $T_1$ satisfying Equation~\eqref{eqn:T_1} so that all processors halt by round $T_1$ while executing  \PartOne\, provided that the number of crashes in the execution is at most~$f_1$.
\end{lemma}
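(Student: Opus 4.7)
The plan is to view algorithm \aMDP\ as algorithm \aDP\ executed on a scaled instance, where each group of $\Delta_1$ consecutive original tasks is bundled into a single \emph{meta-task}; performing one meta-task takes $\Delta_1$ actual rounds of task execution followed by a single round of messaging, which matches the communication pattern of \aMDP\ exactly. In this view, the scaled instance has $t'=\lceil t/\Delta_1\rceil$ meta-tasks, $p$ processors, and crash bound $f_1$, all over the overlay graph $G(p,f_1)$.

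First I would invoke Lemma~\ref{lem:time-deterministic-permutations} on the scaled instance, which supplies an instantiation whose main extended epochs finish within $\cO\bigl(t'/(p-f_1)+\log^2 p\bigr)$ scaled phases. Each scaled main phase occupies $\Theta(\Delta_1)$ actual rounds (one round to receive, $\Delta_1$ rounds to perform a chunk, one round to send), so translating to actual rounds yields
\[
\Delta_1\cdot\cO\Bigl(\frac{t/\Delta_1}{p-f_1}+\log^2 p\Bigr)=\cO\Bigl(\frac{t}{p-f_1}+\Delta_1\log^2 p\Bigr)
\]
for the main stage of \PartOne.

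Next I would handle the closing stage separately. Closing phases do not perform tasks, so there is no chunk to execute and messaging proceeds at the ordinary one-round-per-phase rate; consequently closing is not stretched by the factor $\Delta_1$. Applying Lemma~\ref{lem:time-of-closing} with the unscaled \aDP\ main-time $T(p,f_1)=\cO\bigl(p/(p-f_1)+\log^2 p\bigr)$ for the $(p,p,f_1)$-subproblem bounds the closing stage by $\cO\bigl(p/(p-f_1)+\log^2 p\bigr)$ actual rounds. Adding the two contributions yields the claimed
\[
T_1=\cO\Bigl(\frac{t+p}{p-f_1}+\Delta_1\log^2 p\Bigr),
\]
which matches~\eqref{eqn:T_1} and suffices for all processors to halt during \PartOne.

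The main obstacle is that the scalings of main and closing phases must be kept distinct: a naive uniform scaling of the full time bound of Lemma~\ref{lem:time-deterministic-permutations} by $\Delta_1$ would introduce an unwanted cross-term $\Theta\bigl(p\Delta_1/(p-f_1)\bigr)$ which is not absorbed by~\eqref{eqn:T_1}. One must also verify that the progress machinery underlying Lemma~\ref{lem:time-deterministic-permutations}, in particular the productivity of key extended epochs under the $f_1$-chain-constrained adversary combined with the reduction of Lemma~\ref{lem:approximation-by-f-chain-constrained adversary}, transfers to the meta-task view; this is straightforward because the overlay graph, the compactness arguments, and the progress accounting of Section~\ref{sec:extended-epochs} depend on the number of tasks only through the scale of the per-processor selection list, which is invariant under the bundling transformation.
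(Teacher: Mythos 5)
Your proof is correct and arrives at the same bound~\eqref{eqn:T_1}, but it takes a cleaner and more carefully justified route than the paper. The paper's own proof is a two-line heuristic: it replaces $t+p$ in the time bound of Lemma~\ref{lem:time-deterministic-permutations} by $(t+p)/\Delta_1$ and then multiplies the whole expression by $\Delta_1$ ``due to the time sufficient to complete a chunk,'' obtaining $\cO(\frac{t+p}{p-f_1}+\Delta_1\log^2 p)$. That manipulation silently treats the additive $p$ (which in Lemma~\ref{lem:time-deterministic-permutations} comes from the \emph{closing} stage, where no tasks are performed and hence no chunking occurs) as if it too were chunked into meta-units of size $\Delta_1$; the rescaled bound is still a valid upper estimate only because the over-charged closing contribution does not exceed the already-present $\Delta_1\log^2 p$ slack, but the paper never makes this explicit. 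You instead decompose at the outset into main and closing stages, stretch only the main stage by $\Delta_1$, and bound closing separately via Lemma~\ref{lem:time-of-closing} using the unscaled $T(p,f_1)=\cO(p/(p-f_1)+\log^2 p)$; this is the decomposition that the combined bound of Lemma~\ref{lem:time-deterministic-permutations} was itself built from (Lemma~\ref{lem:from-productivity-to-time} plus Lemma~\ref{lem:approximation-by-f-chain-constrained adversary} for main, Lemma~\ref{lem:time-of-closing} for closing), so your appeal is the natural one. Your observation that a naive uniform $\Delta_1$-scaling of the full combined bound, with $t$ alone replaced by $t/\Delta_1$, would generate the cross-term $\Theta(p\Delta_1/(p-f_1))$ not controlled by~\eqref{eqn:T_1} is exactly the pitfall the paper's heuristic implicitly sidesteps without comment; naming it strengthens the argument. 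The only minor inaccuracy in your write-up is attributing the main-stage estimate $\cO(t'/(p-f_1)+\log^2 p)$ directly to Lemma~\ref{lem:time-deterministic-permutations} (whose stated bound is the combined main-plus-closing time); the precise citation for the main-stage time alone is Lemma~\ref{lem:from-productivity-to-time} via Lemma~\ref{lem:approximation-by-f-chain-constrained adversary}, but that does not affect the validity of the argument.
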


\begin{proof}
We estimate the time by which all tasks are performed when there are $f_1$ crashes.
To this end we apply scaling of tasks to chunks.
Partitioning the tasks into chunks of size $\Delta_1$ scales down the term $t+p$ in Lemma~\ref{lem:time-deterministic-permutations} to $(t+p)/\Delta_1$.
We substitute this term for $t+p$ in the bound on time in Lemma~\ref{lem:time-deterministic-permutations}, and simultaneously extend the time bound by a factor of~$\Delta_1$ that is 
due to the time sufficient to complete a chunk, to obtain 
\[
\cO\Bigl(\Delta_1\Bigl(\frac{t+p}{\Delta_1(p-f_1)}  +  \log^2 p\Bigr)\Bigr) 
=
\cO\Bigl(\frac{t+p}{p-f_1} + \Delta_1 \log^2 p\Bigr)
\]
as a bound on the number of rounds by which all processors halt.
\end{proof}

This concludes the justification that $T_1$ can be initialized such that 
Lemma~\ref{lem:correctness-effort-priority} is applicable and estimate~\eqref{eqn:T_1} is satisfied.

\subsection{The bound on effort}

\label{sec:bound-effort}

We proceed with the analysis of effort for algorithm \aEP.
The parameters in the code of the algorithm are set as specified 
in the beginning of Section~\ref{sec:justifying-initialization}.


\begin{lemma}
\label{lem:effort-part-one}

The effort of \PartOne\  is $\cO(t+\Delta_1  p\log^2 p)$.
\end{lemma}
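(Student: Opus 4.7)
The plan is to bound work $\cW$ and communication $\cM$ of \PartOne\ separately, showing that each is $\cO(t+\Delta_1 p\log^2 p)$.

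The key structural observation is that \aMDP\ is algorithm \aDP\ executed on $\lceil t/\Delta_1 \rceil$ super-tasks (the chunks), the only change being that a phase of \aMDP\ consumes $\Delta_1+2$ rounds rather than $3$: one round to receive, $\Delta_1$ rounds to carry out the chunk task by task, and one round to send. I would therefore invoke Theorem~\ref{thm:deterministic-permutations} with $\lceil t/\Delta_1\rceil$ tasks and crash bound $f_1$, yielding $\cO(t/\Delta_1 + p\log^2 p)$ as an upper bound on the total number of processor-phases executed in \aMDP. Multiplying by the $\Delta_1+2$ actual rounds per phase gives $\cW = \cO(t+\Delta_1 p\log^2 p)$, and Lemma~\ref{lem:justification-of-T1} confirms that this work is realized inside the $T_1$-round window of \PartOne. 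The same counting governs messages: they are sent only in the final round of each super-phase, to at most $\Delta_1$ neighbors in $G(p,f_1)$, so $\cM \le \Delta_1 \cdot \cO(t/\Delta_1 + p\log^2 p) = \cO(t+\Delta_1 p\log^2 p)$. Summing the two bounds delivers the claimed bound on effort $\cE = \cW+\cM$.

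The above argument is clean when the number of crashes does not exceed $f_1$; the remaining case is when more crashes occur and \PartOne\ is cut off at round $T_1$ without completing. Here I would rely on the fact that the bookkeeping underlying Theorem~\ref{thm:deterministic-permutations}---the decomposition into stormy, main, mixed, and closing epochs, together with the productivity of key extended epochs furnished by the permutations chosen over $G(p,f_1)$---bounds the total work of a completed execution against any $f_1$-bounded adversary, and this total-execution bound also caps the partial work accrued by round $T_1$ under any adversary since work accumulates monotonically with rounds. The main obstacle is making this last step fully rigorous: the reduction to the $f_1$-chain-constrained adversary in Lemma~\ref{lem:approximation-by-f-chain-constrained adversary} assumes at most $f_1$ crashes, so to close the gap I would couple the truncated execution with an $f_1$-bounded execution in which all crashes beyond the $f_1$th are postponed past round $T_1$, and verify that such a postponement only increases each processor's active time and hence its contribution to work, so that the truncated work is dominated by an execution to which Theorem~\ref{thm:deterministic-permutations} directly applies.
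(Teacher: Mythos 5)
Your handling of the case of at most $f_1$ crashes matches the paper's: apply Theorem~\ref{thm:deterministic-permutations} to $\lceil t/\Delta_1\rceil$ chunks, multiply by $\Theta(\Delta_1)$ rounds per super-phase to get $\cW=\cO(t+\Delta_1 p\log^2 p)$, and observe that messages occur only after $\Delta_1$ rounds of task work so $\cM=\cO(\cW)$.

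The coupling you propose for the case of more than $f_1$ crashes has a genuine gap. You assert that postponing every crash beyond the $f_1$-th past round $T_1$ ``only increases each processor's active time.'' But postponing crashes increases the set of active processors, which accelerates task completion and can make the algorithm terminate \emph{earlier} in the coupled execution. Since a processor is charged work only until the algorithm terminates, earlier termination can strictly decrease the coupled execution's work. Concretely: if in the truncated original execution the algorithm has not terminated by $T_1$, the surviving processors accrue work up to $T_1$; in the $f_1$-bounded coupled execution the enlarged survivor set may finish all tasks and halt well before $T_1$, at which point every non-crashed processor's contribution is cut off at that earlier round. So the per-processor monotonicity your domination argument needs does not hold, and the reduction to Theorem~\ref{thm:deterministic-permutations} is not established.

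The paper avoids coupling altogether. It splits at the first round in which fewer than $p-f_1$ processors remain non-faulty. The prefix up to that round saw at most $f_1$ crashes, so it is a prefix of an $f_1$-bounded execution (extend it by crashing nobody further), and the scaled bound $\cO(t+\Delta_1 p\log^2 p)$ dominates its work. After that round fewer than $p-f_1$ processors remain, for at most $T_1$ further rounds, contributing at most $(p-f_1)T_1 = \cO(t+\Delta_1 p\log^2 p)$ by the choice of $T_1$ in~\eqref{eqn:T_1}. This is a trivial counting bound and needs no comparison between executions; if you want to repair your argument, replacing the coupling step by this split is the way to do it.
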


\begin{proof}
Each processor executing the algorithm sends $\Delta_1$ messages precisely after completing $\Delta_1$ tasks.
Thus  communication complexity is the same as work complexity,
and so it suffices to estimate work.
We consider two cases depending on the number of processors staying non-faulty until round~$T_1$. 

\noindent
Case 1: the number of such non-faulty processors is at least~$p-f_1$. 
All  processors  halt by round~$T_1$ executing  \PartOne{} per Lemma~\ref{lem:justification-of-T1}.
Their work is as estimated in Theorem~\ref{thm:deterministic-permutations}, 
by Lemma~\ref{lem:time-deterministic-permutations}, 
suitably adjusted by scaling.
Applying the scaling, we obtain that the work during \PartOne\ is  
\begin{equation} 
\label{eqn:chunks}
\cO\Bigl(\Delta_1\Bigl(\frac{t}{\Delta_1}+p\log^2 p\Bigr)\Bigr)=\cO(t+\Delta_1 p\log^2 p) \ .
\end{equation}

\noindent
Case 2: the number of such non-faulty processors is less than $p-f_1$. 
Since there are few surviving processors, algorithm \aMDP\ used in \PartOne\ 
may not complete all tasks.
Using the same derivation as for bound~\eqref{eqn:chunks},
the effort is $\cO\bigl(t+\Delta_1 p\log^2 p\bigr)$ up to the round when the number of active processors
falls below $p-f_1$.
The contribution to work of the remaining rounds of \PartOne{} is 
\begin{eqnarray*}
\cO((p-f_1) \, T_1)
&=&
\cO\Bigl((p-f_1) \Bigl(\frac{t+p}{p-f_1} + \Delta_1 \log^2 p\Bigr) \Bigr)\\
&=&
\cO(t+p+\Delta_1\, (p-f_1)\log^2 p)\\
&=&
\cO(t+\Delta_1 \, p\log^2 p)
\end{eqnarray*}
by equality~\eqref{eqn:T_1} and Lemma~\ref{lem:justification-of-T1}.
\end{proof}

Next we deal with  message complexity for \PartTwo.
We begin by discussing a situation in which a processor announces itself coordinator in a checkpointing phase and next abandons this status within the same phase.
This generates $\cO(p)$ messages each time for each such processor.
In these scenarios, some processors $v_1,\ldots,v_k$, where $k>1$ and $v_1 < \ldots < v_k$, propose, in Round~1 of some checkpointing phase~$P$, that each of them intends to become coordinator for the phase.
Then $v_1$ remains as the successful coordinator for phase~$P$, while $v_2,\ldots,v_k$ give up the status of coordinator for phase~$P$.
Observe that each processor~$v_i$, among $v_1,\ldots, v_k$, has the identifiers $v_1,\ldots, v_k$ in its list \texttt{Coordinators}$_{v_{i}}$ just after the preparatory round, 
because these processors are still non-faulty later in checkpointing phase~$P$.
Still, each processor~$v_i$ appears first on its own list \texttt{Coordinators}$_{v_{i}}$ 
in the first round of phase~$P$.
An explanation of what must have occurred prior to phase~$P$ is as follows, where, for simplicity, we assume that no crashes after the preparatory round affect the considered scenario.

Each processor~$v_i$, for $1<i\le k$, must  delete $v_1,\ldots,v_{i-1}$ from list \texttt{Coordinators}$_{v_{i}}$ prior to phase~$P$ to become first in  \texttt{Coordinators}$_{v_{i}}$ in Round~1 of phase~$P$, as $v_j <v_i$ for $1\le j<i$.
Additionally, any processor~$v_j$, for $j<i$, deletes some $i-j$ entries from its list \texttt{Coordinators}$_{v_j}$ (to become first in this list) that $v_i$ does not need to delete (to become first in its own list \texttt{Coordinators}$_{v_i}$),  to make up for the entries $v_j, \ldots, v_{i-1}$ that $v_j$ does not need to delete (in order  to become first in its list \texttt{Coordinators}) as compared to~$v_i$.
Specializing this argument to $i=k$ and $j=1$, we obtain that processor~$v_1$ deletes $k-1$ processors that $v_k$ does not delete, prior to phase~$P$.
If $x_1,\ldots, x_{k-1}$ are some $k-1$ processors that $v_1$ deletes by the first round of phase~$P$, but $v_k$ does not, then $x_i<v_1$, for $1\le i\le k-1$, as $v_1$ removes $x_i$ from \texttt{Coordinators}$_{v_1}$ prior to announcing itself coordinator. 
Therefore, these are the discrepancies among their lists \texttt{Coordinators} that make  processors $v_1,\ldots,v_k$ broadcast simultaneously in Round~1 of phase~$P$.
Next we explain how such disparities could have occurred.

The preparatory messages sent by each $x_i$, for $1\le i\le k-1$, are received by~$v_1$, since $v_1$ removes~$x_i$ from \texttt{Coordinators}$_{v_1}$, so $x_i$ is placed in list \texttt{Coordinators}$_{v_1}$ at the end of the preparatory round.
On the other hand, the preparatory messages sent by each $x_i$ are not received  by~$v_k$.
This is because otherwise $x_i$ would be in the list \texttt{Processors}$_{v_k}$ after the preparatory round, and therefore also in list \texttt{Coordinators}$_{v_k}$, so that $v_k$ would need to remove $x_i$ eventually from \texttt{Coordinators}$_{v_k}$ to become first in this list, as $x_i<v_1<v_k$, while we know that $v_k$ does not do it.
It follows that each processor~$x_i$, for $1\le i\le k-1$, crashes in the preparatory round, 
as some of the messages that $x_i$ was to send in that round are not received.
We refer to such a scenario of broadcasts in Round~1 of phase~$P$ by saying that 
\emph{processor~$v_1$ is delayed with respect to~$v_k$ by $k-1$ crashes}, where these crashes occur in the preparatory round.
We also say that the broadcasts of processors $v_2,\ldots,v_k$ in Round~1 of checkpointing phase~$P$ are \emph{futile broadcasts} in this phase.
When the identifiers of processors $v_2,\ldots,v_k$ are added in Round~4 of a phase to each list \texttt{Coordinators}, 
in which any of them are absent at that point, then we say that the crashes of processors $x_1,\ldots, x_{k-1}$ become \emph{neutralized} thereby.
 
Now, we are ready to tackle the message complexity of \PartTwo.
This complexity is estimated by careful accounting of futile broadcasts caused by delays due to crashes on one hand, and by neutralizing the same number of crashes, on the other.

Let $p_2$ denote the number of processors that stay non-faulty and do not halt by round~$T_1$, that is, they start executing \PartTwo.


\begin{lemma}
\label{lem:communication-prior-DMY}

There are $\cO(p\,(p-f_1) )$ messages sent during \PartTwo.
\end{lemma}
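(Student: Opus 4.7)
The plan is to partition the messages sent during \PartTwo\ into three categories---the preparatory-round messages, the ``regular'' messages in each checkpointing phase, and the extra messages produced by futile coordinator broadcasts---and to show each category contributes $\cO(p(p-f_1))$ messages.

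First I would use Lemma~\ref{lem:justification-of-T1} to argue that if at most $f_1$ processors crash during \PartOne\ then every processor halts by round $T_1$ and therefore sends no messages during \PartTwo. Consequently, in any execution in which messages are sent during \PartTwo, the number $p_2$ of processors active at the start of \PartTwo\ satisfies $p_2 < p - f_1$; and since each list \texttt{Processors}$_v$ can only shrink after the preparatory round, $|\texttt{Processors}_v|\le p_2$ throughout every checkpointing phase.

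For the preparatory round, each of the $p_2$ active processors sends at most $p$ messages (at that moment the size of \texttt{Processors}$_v$ is bounded only by $p$), for a total of $\cO(p_2 \cdot p) = \cO(p(p-f_1))$. For a checkpointing phase in which a single processor serves as coordinator, the Round-1 broadcast from the natural coordinator contributes $\cO(p_2)$ messages; each active processor sends one Round-2 response (only to the smallest-identifier would-be coordinator), contributing another $\cO(p_2)$; and the Round-3 broadcast by the successful coordinator contributes $\cO(p_2)$. Summed over all $4(p - f_1)$ phases, this yields $\cO((p - f_1)^2) = \cO(p(p - f_1))$ ``regular'' messages.

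The remaining messages come from phases in which $k > 1$ processors broadcast simultaneously in Round~1, so that $k - 1$ of those broadcasts are futile. Here I would apply the bookkeeping already developed in the paragraphs preceding the lemma: each futile broadcast is accompanied by the neutralization of at least one distinct crash (the processors $x_1,\ldots,x_{k-1}$ whose preparatory-round crashes caused the disparity), and a crash once neutralized cannot contribute to any later futile broadcast. Since the total number of crashes in any execution is at most $p - 1$, the total number of futile broadcasts across all checkpointing phases is $\cO(p)$. Each such broadcast reaches at most $p_2 < p - f_1$ recipients, contributing $\cO(p(p - f_1))$ additional messages. Summing the three categories gives the claimed $\cO(p(p - f_1))$ bound.

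The main obstacle is formalizing the last step: one must turn the informal description of ``delayed broadcasts'' and ``neutralization'' from the paragraphs preceding the lemma into a rigorous injection from futile broadcasts into distinct crashes, paying attention both to crashes in or before the preparatory round and to any later crashes (e.g.\ a coordinator crashing while broadcasting in Round~3) that could create additional disparities among the \texttt{Coordinators} lists of the surviving processors.
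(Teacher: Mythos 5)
Your overall decomposition---preparatory round, ``regular'' phase traffic, futile broadcasts---matches the paper's, and your bounds for the first two categories are sound. For the preparatory round you correctly allow $\cO(p)$ recipients per broadcaster (lists \texttt{Processors} may still hold up to $p$ identifiers at that moment), giving $\cO(p_2 p)$. For the regular traffic you bound per phase instead of per successful coordinator: $\cO(p_2)$ messages in Rounds 1--3 of a phase times $\cO(p-f_1)$ phases. This is a legitimate alternative to the paper's per-coordinator count (each processor coordinates successfully at most once, giving $\cO(p_2^2)$), and both reach $\cO(p(p-f_1))$.

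The genuine gap is the third category, and you have flagged it yourself: the claim that each futile broadcast can be charged to a distinct crash is stated but not established. The informal terminology preceding the lemma (``delayed,'' ``neutralized'') defines the charging scheme only for executions in which all crashes occur in the preparatory round. Your sentence ``a crash once neutralized cannot contribute to any later futile broadcast'' is exactly the delicate point, and it is \emph{false} as stated once you allow crashes during the checkpointing phases: a would-be coordinator $v_i$ may crash while performing its Round-1 broadcast, so that only some recipients restore $v_i$ to their \texttt{Coordinators} lists. This partial neutralization creates a fresh disparity and can generate a further futile broadcast that the original crash does not account for. The paper handles this through an explicit credit-based induction (Claims 1 and 2 in its proof), in which the futile broadcast by a crashing $v_i$ is charged to $v_i$'s own crash, preserving a credit to pay for the disparity it leaves behind. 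Also, the relevant crashes during checkpointing phases are Round-1 crashes, where a proposer's message reaches only some recipients---not Round-3 crashes as you suggest, since a partial Round-3 delivery affects the \texttt{Processors} lists, not the \texttt{Coordinators} lists that drive future proposals. Without an argument at this level of detail, the injection from futile broadcasts to crashes is not established, so the $\cO(p)$ bound on the number of futile broadcasts remains an assertion rather than a proof.
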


\begin{proof}
It is sufficient to consider only the case when more than $f_1$ processors crash during \PartOne, as otherwise \PartTwo{} is not invoked at all.  
We have $p_2 < p-f_1$, so that $p_2\cdot p=\cO(p\,(p-f_1))$.

The preparatory round contributes $\cO(p_2\, p)$ messages, 
since at most $p_2$ processors broadcast messages, each to at most $p$ recipients.
Next we consider the messages generated in the checkpointing phases.
Any processor~$v$ acts as a successful coordinator at most once, 
since afterwards $v$  removes its identifier from list \texttt{Coordinators}$_v$ 
to never announce itself as a coordinator again.
A processor that acts as a successful coordinator for a phase broadcasts messages twice, 
in Round~1 and Round~3 (rounds refer to Figure~\ref{fig:PartTwo}), 
and collects messages once, in Round~2.
Therefore such activity contributes $\cO(p_2 \cdot p_2)=\cO(p_2\, p)$ messages.

To account for futile broadcasts, we charge these broadcasts to some crashes that occur either in the preparatory round, or in subsequent checkpointing phases.
For this accounting to work, it is sufficient to assure that at most one futile broadcast is charged to any one crash.
The next part of the proof deals with showing this fact.
The argument is broken into two major parts, named Claims~1 and~2, 
based on whether crashes occur only in the preparatory round or also after it.

We define \emph{configuration} to be a snapshot of the contents of  all lists \texttt{Coordinators} just after the preparatory round.
A configuration is determined by crashes in the preparatory round and the messages that are to be sent by the processors that crash in the preparatory round; 
each such message may be sent and delivered, or sent and not delivered, or not sent at all.
We consider executions of \PartTwo{} in which no  crashes occur, except in the preparatory round. 
For any configuration~$\cC$ that may come into existence at the end of the preparatory round as the result of such an execution, we define $\cE(\cC)$ as the fragment of this execution during the rest of \PartTwo{}.
Since no new crashes occur in \PartTwo{}, execution~$\cE(\cC)$ is unique with respect to configuration~$\cC$.

\begin{quote}
\textbf{Claim 1:} 
For any configuration $\cC$, each futile broadcast in~$\cE(\cC)$ can be charged to a unique crash in the preparatory round.
\end{quote}
If there are  futile broadcasts in~$\cE(\cC)$, then let $P_\cC$ be the first round in which simultaneous broadcasts by multiple processors occur in~$\cE(\cC)$.
We consider two cases for execution~$\cE(\cC)$: 
(1)~there is a single coordinator that broadcasts in Round~1 of the previous phase~$P_\cC-1$, and
(2)~there is no coordinator in phase~$P_\cC-1$.
The case when $P_\cC$ is the very first checkpointing phase in~$\cE(\cC)$, 
i.e., there is no checkpointing phase~$P_\cC-1$, 
falls under case (2) for~$\cE(\cC)$.
Next, we consider these cases in detail.

\noindent
{\bf Case 1:}
For execution~$\cE(\cC)$ in this case, let $v$ be the coordinator of phase~$P_\cC-1$. 
The processors that broadcast simultaneously in Round~1 of the next phase~$P_\cC$ 
include all processors whose identifiers are less than~$v$, per Round~4, and possibly processors with identifiers greater than $v$.
Let $w_1,\ldots, w_\ell$ be those processors with identifiers less than $v$, 
where $w_1<\ldots<w_\ell$, and let $v_1,\ldots,v_k$ be those processors with
identifiers greater than~$v$.
Processor~$w_1$ acts as coordinator for this phase, 
and in the following $\ell -1$ phases processors $w_2,\ldots,w_\ell$ act as coordinators one by one.
Following this, all processor identifiers in lists \texttt{Coordinators} are greater than~$v$.
Processors $v$ and $v_1,\ldots,v_k$, where $v<v_1<\ldots<v_k$, may not
constitute a segment of consecutive identifiers of processors that are still active.
 If there are some additional $j$ active processors whose identifiers are between $v$ and $v_k$, then this means that $w_1$ is delayed with respect to~$v_k$ by $k+\ell+j-1$ crashes.
It is also possible that some processors delete identifiers greater than~$v_k$ from their lists \texttt{Coordinators} prior to phase~$P_\cC$.
If $m$ is the maximum number of such identifiers deleted from the list of some processor~$z$, then $w_1$ is delayed by at least $k+\ell+j+m-1$ crashes with respect to this processor~$z$.
We have already used $k+\ell-1$ crashes to account for futile broadcasts 
by $w_2,\ldots,w_\ell$ and by $v_1,\ldots,v_k$.
The identifiers of $w_2,\ldots,w_\ell$ and $v_1,\ldots, v_k$ are restored in phase~$P_\cC$ to any \texttt{Coordinators} list in which some of them were missing, per Round~2 specification, neutralizing these $k+\ell-1$ crashes.
After this restoration, processor~$w_1$ is still delayed by at least $j+m$ crashes with respect to processor~$z$. 
This means that at most $j+m$ futile broadcasts may occur in~$\cE(\cC)$ after phase~$P_\cC$, and any such broadcast can be charged to a unique crash in the preparatory round.

\noindent
{\bf Case 2:}
For execution~$\cE(\cC)$ in this case, there is no coordinator in phase~$P_\cC-1$.
Let processors $v_1,\ldots, v_k$ broadcast together in Round~1 of phase~$P_\cC$. 
Processor~$v_2$ also broadcasts in Round~1 of phase~$P_\cC+1$.
Some other processors $w_1,\ldots,w_\ell$, where $w_1<\ldots<w_\ell < v_1$, 
may become first in their lists \texttt{Coordinators} in phase~$P_\cC$, 
per Round~4 specification.
These processors also broadcast in Round~1 of phase~$P_\cC+1$.
If this is the case, then $w_1$ is delayed with respect to crashes of~$v_k$ by $k+\ell -1$.
In the next $\ell -1$ phases of execution~$\cE(\cC)$, processors $w_2,\ldots, w_\ell$,  
that perform futile broadcasts in phase~$P_\cC+1$, act as coordinators.
Following this, all identifiers in lists \texttt{Coordinators} are greater than~$v_1$.
The identifiers of processors $v_1,\ldots,v_k$, where $v_1<\ldots<v_k$, may not form a segment of consecutive identifiers.
If there are some additional $j$ active processors whose identifiers are between $v_1$ and $v_k$, then this means that processor $w_1$ is delayed with respect to $v_k$ by $k+\ell+j-1$ crashes.
It is also possible that some processors  delete identifiers greater than $v_k$ 
from their lists \texttt{Coordinators} prior to phase~$P_\cC$.
If $m$ is the maximum number of such identifiers deleted from the list of some processor~$z$, then $w_1$ is delayed by at least $k+\ell+j+m-1$ crashes with respect to this processor~$z$.
We have already used $k-1$ crashes to account for the futile broadcasts by  $v_2,\ldots, v_k$, and $\ell$ crashes for similar broadcasts by processors $w_2,\ldots,w_\ell$ and~$v_2$.
The identifiers of processors $v_2,\ldots,v_k$ are restored in all lists \texttt{Coordinators} 
in phase~$P_\cC$, and the identifiers of processors $w_2,\ldots,w_\ell$ and~$v_2$ 
are restored in all lists \texttt{Coordinators} in phase~$P_\cC+1$, per Round~2 specification, which neutralizes $k+\ell-1$ crashes.
After this restoration, processor~$w_1$ is still delayed by at least $j+m$ crashes with respect to processor~$z$. 
This means that at most $j+m$ futile broadcasts may occur in~$\cE(\cC)$ after phase~$P_\cC$, and any such broadcast can be charged to a unique crash in the preparatory round.
This completes the proof of Claim~1.

Let us associate  \emph{credits} with configuration~$\cC$.
We define the number of credits to be equal to $k+\ell+j+m-1$, where these integers are determined by $\cE(\cC)$.
We will use credits to account for futile broadcasts.
In particular, $j+m$ credits that are still available after phase~$P_\cC$, when execution~$\cE(\cC)$ 
falls under Case 1, or after phase~$P_\cC+1$ in Case 2, respectively, are sufficient to account for futile broadcasts after phase~$P_\cC$ or $P_\cC+1$, respectively, in execution~$\cE(\cC)$, as at most $j+m$ of them may possibly  occur.

In a general scenario, crashes after the preparatory round may also contribute to multiple simultaneous broadcasts.
Crashes during checkpointing phases matter only when they occur in Round~1 of a checkpointing phase, because discrepancies among lists \texttt{Coordinators} may be created when some recipients receive messages sent in these rounds and some not. 
In what follows we restrict our attention only to crashes of some processors~$x$ when these crashes have one of the following properties: 
(1) processor~$x$ crashes  while broadcasting messages in Round~1 of some phase, and
(2) some recipients of a message sent by~$x$ receive the message and some do not.
Now, consider arbitrary executions of \PartTwo.
\begin{quote}
\textbf{Claim 2:} 
The following invariant is maintained through all checkpointing phases~$P$: 
the number of credits available in the beginning of~$P$ is sufficiently large to account for 
all futile broadcasts in~$P$ and thereafter if crashes occur only prior to phase~$P$.
\end{quote}
The proof of the invariant is by induction on the phase number~$P$.
Let an execution of \PartTwo{} begin with configuration~$\cC$.
The basis of the induction deals with the case when $P=1$, which means that no crashes occur after the preparatory round; this is taken care of by Claim~1.
Next we show the inductive step.
Let $P$ be a phase number, where $P>1$, and assume that the invariant is established by phase $P-1$.
If there are no crashes in~$P$, then the invariant extends automatically, so suppose at least one crash occurs in~$P$.
There are two cases, depending on whether there is a single or multiple broadcasts  in Round~1 of phase~$P$.

\noindent
{\bf Case 1, single broadcast:}
There is one coordinator~$v$ that broadcasts in Round~1 of~$P$, and $v$ crashes 
in this round so that some processors receive a message from~$v$ and some do not.
There are two sub-cases: 
(1.a) a successful broadcast by $v$ results in some processors removing at least two entries from their lists \texttt{Coordinators} in Round~4, and 
(1.b) this does not happen.

The latter sub-case (1.b) is such that there is no difference whether $v$ crashes or not, as every processor removes just the first entry from its list \texttt{Coordinators}, so the inductive assumption is applicable directly.
It remains to consider the former sub-case (1.a) that may result in multiple simultaneous broadcasts in phase~$P+1$.

Suppose that if $v$ does not crash, then $w_1,\ldots, w_\ell$ and $v_1,\ldots,v_k$ broadcast simultaneously in Round~1 of phase~$P+1$, where the identifiers $w_1,\ldots, w_\ell$ are less than~$v$, and the identifiers $v_1,\ldots,v_k$ are greater than~$v$.
(This situation is similar to the one we considered before for executions of 
the form $\cE(\cC)$, when there is a single coordinator in phase $P_\cC-1$ prior 
to phase $P_\cC$ with multiple broadcasts.)
By the inductive assumption, there are enough credits to pay for the futile broadcasts among those by the processors $w_1,\ldots, w_\ell$ and $v_1,\ldots,v_k$.
Each of the processors $v_1,\ldots,v_k$ removes just the first entry from its respective 
list \texttt{Coordinators}, which is not the result of the crash by~$v$, 
so we account for all their futile broadcasts using $k$ of the available credits.
The processors $w_1,\ldots, w_\ell$ are those that remove from \texttt{Coordinators} 
some identifiers that are less than both~$v$ and their own name, per Round~4 specification.
A crash of $v$ may result in some among $w_1,\ldots, w_\ell$ not receiving 
the message of~$v$ in phase~$P$, and so they do not shorten their lists \texttt{Coordinators}, 
and do not broadcast at all in phase~$P+1$.
But if this occurs, then we will use correspondingly fewer credits to account 
for the broadcasts of processors among $w_1,\ldots, w_\ell$ in the next phase $P+1$, 
so the accounting breaks even.

\noindent
{\bf Case 2, multiple broadcasts:}
Here some processors perform futile broadcasts in phase~$P$ and some of those that broadcast crash in the process.
(This situation is similar to the one we considered before for executions 
of the form $\cE(\cC)$, when there were multiple broadcasts by would-be coordinators 
in phase $P_\cC$, while there was no coordinator in the immediately preceding phase~$P_\cC-1$.)
Suppose that $v_1,\ldots, v_k$ broadcast together in Round~1 of phase~$P$, 
and some of them crash in the process.
If the crashes did not occur, then processor~$v_2$ would broadcast in Round~1 of phase~$P+1$, and some other processors $w_1,\ldots,w_\ell$ might become first in their lists \texttt{Coordinators} in phase~$P$, per Round~4 specification, and they will broadcast in Round~1 of phase~$P+1$.

We consider how the crashes of some of the processors $v_1,\ldots,v_k$ affect the accounting, in which multiple broadcasts are charged to previous crashes,  by relating to the situation without these crashes.
By the inductive assumption, there are enough credits to account for the futile broadcasts by any processors $w_1,\ldots, w_\ell$ and $v_1,\ldots,v_k$.

First, consider futile broadcasts by processors $w_1,\ldots, w_\ell$.
It could be the effect of the crashes that some among $w_1,\ldots,w_\ell$ 
do not broadcast in phase $P+1$; but if this occurs, 
then we use correspondingly fewer credits to account for the broadcasts 
of processors among $w_1,\ldots, w_\ell$ that actually take place in~$P+1$, 
saving them for possible future futile broadcasts.

Next, consider futile broadcasts by processors $v_1,\ldots,v_k$.
When a message from processor~$v_i$ is received in Round~1 of  a phase, 
for $1\le i\le k$, then identifier $v_i$ is restored in any list \texttt{Coordinators} 
from which it was deleted before.
Such a restoration neutralizes the crash that contributes to the discrepancies among the lists \texttt{Coordinators} that in turn results in delays that caused $v_i$ to  broadcast along with other processors.
A crash by such~$v_i$ may prevent this neutralization to occur, so that $v_i$ appears in some lists \texttt{Coordinators} while it does not in others.
The accounting is based on the principle that when we associate a credit with a futile broadcast,  then the underlying crash is simultaneously neutralized by restoring the names of all such broadcasters, per Round~2 specification.
It follows that the inductive assumption is not immediately applicable in this situation.
Instead, we charge the futile broadcast by~$v_i$ to its very own crash.
Thereby we save one credit available to pay for the futile broadcast by~$v_i$ if it does not crash. 
Discrepancies among the lists \texttt{Coordinators} created by the presence or absence of $v_i$ 
in them may still occur, which may result in one more futile broadcast after phase~$P+1$, but one credit was saved, which means that the accounting breaks even.
This concludes the proof of Claim~2.

To complete accounting for futile broadcasts, consider the very last phase~$P$ of \PartTwo.
By Claim~2, there are enough credits through phase~$P$ if no crashes occur in~$P$.
But crashes in~$P$ may only diminish the number of messages sent in~$P$, 
while they have no effect on future phases, as there are no phases after~$P$. 
This implies that each futile broadcast proclaiming intent to  become a coordinator 
can be charged to a unique processor that crashed during \PartTwo.
As fewer than $p_2$ processors may crash in \PartTwo, 
these broadcasts contribute $\cO(p_2\cdot p_2)=\cO(p_2\, p)$ messages.

Summing up all messages sent during \PartTwo{} as considered above, 
we obtain that in total there are $\cO(p_2 \, p)=\cO(p\,(p-f_1))$ messages.
\end{proof}

Next we calculate the effort complexity of algorithm \aEP{}.


\begin{lemma}
\label{lem:effort-priority-effort}

The effort of algorithm \aEP\ is $\cO(t+p^{2-a})$.
\end{lemma}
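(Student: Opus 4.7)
The plan is to add up the effort contributions from the three parts of algorithm \aEP, plugging in the parameter choices $f_1 = p - p^{1-a}$ and $\Delta_1 = \cO(p^{2a\log_\rho \Delta_0})$ from equation~\eqref{eqn:Delta-for-Mod-Perm}. The key observation is that the constraint~\eqref{eq:a-gives-work}, i.e. $a(1 + 2\log_\rho \Delta_0) < 1$, was chosen precisely so that $\Delta_1 \cdot p = \cO(p^{1 + 2a\log_\rho \Delta_0}) = \cO(p^{2-a-\varepsilon})$ for some $\varepsilon > 0$, which absorbs the polylog factor and yields $\Delta_1\, p\log^2 p = \cO(p^{2-a})$.

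First, for \PartOne, Lemma~\ref{lem:effort-part-one} gives effort $\cO(t + \Delta_1 p\log^2 p)$. By the argument above this is $\cO(t + p^{2-a})$. Second, for \PartTwo, Lemma~\ref{lem:communication-prior-DMY} bounds its message complexity by $\cO(p(p-f_1)) = \cO(p \cdot p^{1-a}) = \cO(p^{2-a})$; its work contribution is even smaller, since only the $p_2 < p - f_1 = p^{1-a}$ processors still executing after round $T_1$ participate, and \PartTwo\ lasts $1 + 4(p-f_1) = \cO(p^{1-a})$ rounds, so its work is $\cO(p^{2-2a}) = \cO(p^{2-a})$. Third, for \PartThree, the algorithm \aDMY\ is invoked on at most $p_3 \le p_2 < p - f_1 = p^{1-a}$ renamed processors performing $t$ tasks with at most $p_3 - 1$ further crashes; Fact~\ref{fact:DMY} then gives work $\cO(p_3(p_3{+}1) + t) = \cO(t + p^{2-2a})$ and messages $\cO(p_3(p_3{+}1)) = \cO(p^{2-2a})$, yielding effort $\cO(t + p^{2-2a})$.

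Summing these three contributions gives total effort $\cO(t + p^{2-a}) + \cO(p^{2-a}) + \cO(t + p^{2-2a}) = \cO(t + p^{2-a})$, since $2-2a < 2-a$ because $a > 0$.

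The only nontrivial step is the algebraic verification that the parameter constraint~\eqref{eq:a-gives-work} is strong enough to dominate the $\log^2 p$ factor in the \PartOne\ bound; this works because~\eqref{eq:a-gives-work} is a strict inequality, leaving room for a positive slack $\varepsilon = 1 - a(1 + 2\log_\rho \Delta_0) > 0$ in the exponent that swallows any polylogarithmic overhead. Everything else is bookkeeping: selecting the dominant term from each part and observing that $p^{2-a}$ is the largest of the three polynomial-in-$p$ terms.
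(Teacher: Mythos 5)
Your proof is correct and follows essentially the same route as the paper: bound \PartOne\ via Lemma~\ref{lem:effort-part-one} and Equation~\eqref{eqn:Delta-for-Mod-Perm} using the slack in~\eqref{eq:a-gives-work} to absorb the $\log^2 p$ factor, then bound \PartTwo\ and \PartThree\ using $p - f_1 = p^{1-a}$, Lemma~\ref{lem:communication-prior-DMY}, and Fact~\ref{fact:DMY}. The only cosmetic differences are that the paper makes the polylog absorption explicit via the identity $\lg^2 p = p^{2\lg\lg p/\lg p}$ whereas you invoke the positive slack $\varepsilon$ directly, and the paper organizes \PartTwo/\PartThree\ into two cases ($p_2 \ge p - f_1$ versus $p_2 < p - f_1$) while you implicitly subsume the vacuous case in which those parts are never invoked.
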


\begin{proof}
The effort of the algorithm is comprised of the contributions by \PartOne, \PartTwo, and \PartThree.

The effort of \PartOne, by Lemma~\ref{lem:effort-part-one} and Equation~\eqref{eqn:Delta-for-Mod-Perm}, is 
\begin{equation}
\label{eqn:first-effort-bound}
\cO(t+\Delta_1  p\log^2 p)
=
\cO\Bigl( t +  p^{1+2a \log_\rho\Delta_0}\log^2 p \Bigr) 
\ .
\end{equation}
To take care of the factor $\log^2 p$ in~\eqref{eqn:first-effort-bound}, we use the identity $\lg^2 p=p^{2\lg\lg p/\lg p}$.
Substituting this into the right-hand side of~\eqref{eqn:first-effort-bound}, we obtain that the effort is
\begin{equation}
\label{eq:effort}
\cO(t+p^{a2\log_\rho\Delta_0}p\log^2 p) 
=
\cO(t+p^{1+a 2\log_\rho\Delta_0+\frac{2\log\log p}{\log p}}) \ .
\end{equation}
By the property $2\lg\lg p/\lg p=o(1)$ and 
by the specification of the parameter~$a$ in~\eqref{eq:a-gives-work}, 
making $2 a \log_\rho \Delta_0< 1- a$, the following inequality holds for sufficiently large $p$:
\[
a 2\log_\rho\Delta_0+\frac{2\lg\lg p}{\lg p}<1-a \ .
\]
This shows that the effort from bound~\eqref{eq:effort}, and so of \PartOne, is $\cO(t+p^{2-a})$.

Next we consider \PartTwo{} and \PartThree.
Let $p_2$ be the number of processors that begin \PartTwo.
There are two cases depending on the magnitude of~$p_2$.

\noindent
Case 1 is for $p_2\ge p-f_1$.
This means that  all chunks of tasks are performed, and the non-faulty processors halt 
in the execution of \aMDP, by Lemma~\ref{lem:justification-of-T1}.
Therefore there is no contribution to the effort by both \PartTwo{} and \PartThree{} in this case.

\noindent
Case 2 is for $p_2< p-f_1$.
Now algorithms \PartTwo{} and \PartThree{} are executed. 

We calculate work complexity first.
\PartTwo{} contributes 
\[
p_2 (p-f_1)<(p-f_1)^2=\cO(p^{2-2a})
=
\cO(t+p^{2-a})
\] 
to  work, as $p-f_1=p^{1-a}$.
Let $f_2$ be the number of crashes that occur during \PartTwo{} and \PartThree. 
By Fact~\ref{fact:DMY}, along with $p-f_1=p^{1-a}$,  we obtain that the work accrued during \PartThree\ is 
\[
\cO(t+p_2(p_2+1))
=
\cO(t+(p-f_1)^2)
=
\cO(t+p^{2-2a})
=
\cO(t+p^{2-a})
\ . 
\]

Next we calculate the message complexity of \PartTwo{} and \PartThree.
The contribution of \PartTwo{}  is $\cO(p\,(p-f_2))$ by Lemma~\ref{lem:communication-prior-DMY}.
From Fact~\ref{fact:DMY}, the message complexity of \PartThree{} is 
\[
\cO((f_2+1) p_2) = \cO(p_2(p_2+1))\ . 
\]
Summing up the two contributions we obtain message complexity that is
\[
\cO(p\, (p-f_1))+\cO(p_2(p_2+1))
=
\cO(p\,(p-f_1))
=
\cO(p^{2-a})\ .
\]
Therefore the effort of \PartTwo\ and \PartThree{} together is $\cO(t+p^{2-a})$.
\end{proof}

Finally, we give the main result for algorithm \aEP{}.


\begin{theorem}
\label{thm:optimized-for-effort}

Algorithm \aEP\ can be instantiated, for any $p$ processors and $t$ tasks, to be a deterministic nonconstructive solution for the \DA\ problem, for any unknown $f<p$,
with effort $\cO (t+p^{1.77})$ .
\end{theorem}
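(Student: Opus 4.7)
The proof will be essentially a composition of two results already established for algorithm \aEP, together with an explicit numerical choice of the parameter $a$ from Section~\ref{sec:justifying-initialization}. The plan is as follows.

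First, I would dispatch correctness. Algorithm \aEP\ is specified so that \PartOne\ uses $T_1$ rounds chosen to be sufficient for \aMDP\ to complete all tasks when at most $f_1 = p - p^{1-a}$ processors crash; Lemma~\ref{lem:justification-of-T1} guarantees that such a $T_1$ satisfying~\eqref{eqn:T_1} exists. Hence Lemma~\ref{lem:correctness-effort-priority} applies and tells us that \aEP\ solves \DA\ against the unbounded adversary, regardless of the actual (unknown) value of $f<p$: if at most $f_1$ crashes occur during \PartOne, all tasks are completed there, and otherwise \PartTwo\ establishes a common checkpoint so that \PartThree\ (algorithm \aDMY) finishes the remaining work on the surviving processors with fresh consecutive identifiers.

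Second, I would invoke Lemma~\ref{lem:effort-priority-effort}, which states that the effort of any instantiation satisfying the constraint~\eqref{eq:a-gives-work} is $\cO(t+p^{2-a})$. It therefore suffices to exhibit a specific $a$ satisfying $a<\frac{1}{1+2\log_\rho \Delta_0}$, with $\rho=27/2$ and $\Delta_0=74$, and such that $2-a\le 1.77$, i.e., $a\ge 0.23$. From Lemma~\ref{lem:degree-power-of-Lp} (where it is observed that $2\log_\rho\Delta_0<3.31$), one has
\[
\frac{1}{1+2\log_\rho \Delta_0} \;>\; \frac{1}{1+3.31} \;>\; 0.2321 \;>\; 0.23 \, ,
\]
so the value $a=0.23$ is admissible. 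Plugging $a=0.23$ into the bound of Lemma~\ref{lem:effort-priority-effort} gives effort $\cO(t+p^{1.77})$, as claimed.

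The only genuinely delicate point is making sure that the parameter regime required by Lemma~\ref{lem:effort-priority-effort} is compatible with the target exponent $1.77$; this is a numerical check that depends on the precise constants $\rho=27/2$ and $\Delta_0=74$ baked into the overlay-graph construction of Section~\ref{sec:graphs}. Once one observes that the margin between $2\log_\rho\Delta_0$ and $3.31$ is comfortably positive, the admissible upper bound on $a$ exceeds $0.23$, and so the choice $a=0.23$ satisfies~\eqref{eq:a-gives-work} with strict inequality while simultaneously driving $2-a$ down to the required $1.77$. With that, the effort bound from Lemma~\ref{lem:effort-priority-effort} and the correctness statement from Lemma~\ref{lem:correctness-effort-priority} together yield the theorem.
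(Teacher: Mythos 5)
Your proposal takes essentially the same route as the paper: first use Lemma~\ref{lem:justification-of-T1} and Lemma~\ref{lem:correctness-effort-priority} to dispatch correctness, then invoke Lemma~\ref{lem:effort-priority-effort} to get effort $\cO(t+p^{2-a})$, and finally pick $a=0.23$ after verifying numerically that $0.23 < (1+2\log_\rho\Delta_0)^{-1}$ for $\rho=27/2$, $\Delta_0=74$. One small numerical slip in your chain: $\frac{1}{1+3.31}\approx 0.23202$, which is less than $0.2321$, so the middle inequality $\frac{1}{1+3.31} > 0.2321$ does not hold as written; however the conclusion $\frac{1}{1+3.31}>0.23$ is still true, so the choice $a=0.23$ remains admissible and the argument goes through.
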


\begin{proof}
We initialize algorithm~\aEP\  with the numeric parameters as specified in the beginning 
of Section~\ref{sec:justifying-initialization}, and such that Lemma~\ref{lem:correctness-effort-priority} 
is applicable.
The work and communication of the algorithm against the unbounded adversary is  $\cO(t+p^{2-a})$, 
by  Lemma~\ref{lem:effort-priority-effort}.
The parameter~$a$ was selected in~\eqref{eq:a-gives-work} so as to satisfy 
the inequality $a<(1+2\lg_\rho \Delta_0)^{-1}$.
We can directly verify that $0.23<(1+2\lg_\rho \Delta_0)^{-1}$, for $\rho = 27/2$  and~$\Delta_0=74$.
Hence the parameter~$a$ can be set equal to~$0.23$, yielding $1.77$ as the numeric value of the exponent.
\end{proof}

\section{Conclusion and Discussion}

\label{sec:discussion}

We presented a constructive algorithm for the \DA\ problem that achieves $\cO(n^{3/2}\text{ polylog } n)$ work, for $n=\max\{p,t\}$, against the unbounded adversary. 
No deterministic algorithms for the \DA\ problem with work complexity~$o(n^2)$, for~$n=\max\{p,t\}$, were known to exist prior to this work, even against linearly-bounded adversaries.
We also presented a nonconstructive solution, giving an algorithm with work $\cO(t+p \textrm{ polylog }p)$, against arbitrary, but known adversaries.

Developing \DA\ solutions that are efficient in terms of effort, defined as the work plus communication complexities, was among the ultimate goals of this work.
We have shown that \DA\ can be solved with effort $\cO(t+p^{1.77})$  against the unbounded adversary.

At the time of its original announcement in~\cite{ChlebusGKS-DISC02}, this was the first known algorithm
simultaneously achieving both work and communication worst-case complexities sub-quadratic in~$p$.
Subsequently Georgiou et al.~\cite{GeorgiouKS05} developed an algorithm with effort $\cO(t+p^{1+\varepsilon})$, for any constant~$\varepsilon>0$, that uses $\varepsilon$ in the algorithm code.
On the other hand,  $\Omega(t+p\log p/\log\log p)$ is the best known general lower bound on work
(the slightly higher corresponding lower bound on work is $\Omega(t+p\log p)$ for shared-memory models).
Thus a gap remains between the lower and upper bounds.

Our nonconstructive algorithm, optimized for work against arbitrary, but known, adversaries 
attains effort $\cO(t+p \text{ polylog } p)$ against arbitrary  known linearly-bounded adversaries. 
It is an open problem whether there exists a constructive \DA\ solution with effort 
$\cO(t+p \text{ polylog } p)$  against the unbounded adversary.

We envision future directions of work on problems related to \DA\ spanning both the static and dynamic 
variants of the problem.
For the static variants, it is worth investigating the settings where the processors have
possibly partial information about the set of tasks.
Recently, Drucker et al.~\cite{DruckerKO12} studied the communication complexity of a distributed task allocation problem in which each processor receives as input a subset of all tasks, where the goal is to assign each task to a unique processor.
For the  dynamic variants, research may pursue the definition of suitable adversarial models, 
the issues of task generation, and the analysis of competitive performance.
Alistarh et al.~\cite{AlistarhABGG14} recently proposed a formalization 
of a dynamic version  of \DA\ in asynchronous shared-memory systems and gave a solution that is work-optimal within polylogarithmic factors.
Georgiou and Kowalski~\cite{GeorgiouK11} proposed a competitive-performance framework to study dynamic task arrivals in a message-passing system with processors prone to crashes and restarts.



\bibliographystyle{abbrv}

\bibliography{doing-it-all}

\end{document}